\newcommand{\mcR}{\mathcal{R}}
\newcommand{\bE}{\boldsymbol{E}}
\newcommand{\PFER}{\textnormal{PFER}}
\newcommand{\FDR}{\textnormal{FDR}}
\newcommand{\FDP}{\textnormal{FDP}}
\newcommand{\NFD}{\textnormal{FD}}
\newcommand{\BC}{\textnormal{BC}}
\newcommand{\RB}{\textnormal{RB}}
\newcommand{\RWC}{\textnormal{RWC}}
\newcommand{\ph}{\textnormal{ph}}
\newcommand{\dph}{\textnormal{dph}}
\newcommand{\ebh}{\textnormal{ebh}}
\newcommand{\kn}{\textnormal{kn}}
\newcommand{\phnu}{\textnormal{ph-}\nu}
\DeclareMathOperator*{\argmax}{argmax}
\newcommand{\indep}{\perp \!\!\! \perp}
\newtheorem{fact}{Fact}
\begin{document}

\title{Choosing the nominal level post-hoc\\ with knockoffs using e-values}

\author{\name Lasse Fischer\thanks{These authors contributed equally to this work.}
\email fischer1@uni-bremen.de \\
\addr Competence Center for Clinical Trials Bremen\\
       University of Bremen\\
       Bremen, Germary
\AND
\name Konstantinos Sechidis\footnotemark[1] 
\email kostas.sechidis@novartis.com\\
\addr Advanced Methodology and Data Science\\
       Novartis Pharma AG\\
       Basel, Switzerland
       }


\maketitle

\begin{abstract}
The knockoff filter is a powerful tool for controlled variable selection with false discovery rate (FDR) control. In this paper, we leverage e-values to allow the nominal FDR level to be switched post-hoc, after looking at the data and applying the knockoff procedure. This approach addresses a significant limitation of standard knockoffs: while frequently used in high-dimensional regressions, they often lack power in low-dimensional and sparse signal settings. One of the main reasons for this is that the knockoff filter requires a minimum number of selections that depends strictly on the nominal FDR level. By utilizing e-values, we can increase the nominal level in cases where the original procedure makes no discoveries, or decrease it to improve precision when discoveries are abundant. These improvements come without any costs, meaning the results of our post-hoc procedure are always more informative than those of the original knockoff filter. We extend this methodology to recently proposed derandomized knockoff procedures and demonstrate its utility in variable selection problems relevant to drug development using real clinical trial data.
\end{abstract}

\begin{keywords}
  controlled variable selection, knockoffs, e-values
\end{keywords}

\section{Introduction\label{sec:intro}}

In exploratory analyses, the identification of relevant variables is a critical step toward understanding complex data-generating mechanisms and uncovering meaningful associations.\footnote{We refer to the task as variable selection, noting that equivalent terminology includes feature selection, biomarker discovery, or covariate identification, depending on the application domain.} Over the past years, a wide range of methods that leverage modern machine learning techniques have been proposed to quantify the importance of candidate variables. While these approaches provide powerful tools for ranking variables according to their apparent influence, they typically lack rigorous guarantees for controlled variable selection. In particular, translating variable importance scores into a selected subset of variables while controlling false discoveries remains a major methodological challenge. To bridge this gap, recent methods have aimed to develop inference procedures that combine statistical rigor with modeling flexibility, with the knockoff framework emerging as a particularly effective approach \citep{barber2015controlling, candes2018panning}.

\subsection{Controlled variable selection and knockoffs}

In this work, we consider the problem of variable selection under rigorous false discovery control. 
Suppose we observe a response $Y$ and variables $\mathbf{X}=(X_1,\dots,X_p)$. 
For each variable $X_i$, we consider the conditional independence null hypothesis
\[
H_0^i : X_i \perp\!\!\!\perp Y \mid \mathbf{X}_{-i},
\]
where $\mathbf{X}_{-i}$ denotes all variables except $X_i$. Variables for which $H_0^i$ holds are called \emph{null}, and the remaining are \emph{non-null} (or \emph{relevant}). We denote these sets by
\[
I_0 = \{ i : H_0^i \text{ is true} \}, \qquad
I_1 = [p] \setminus I_0.
\]

A variable selection procedure outputs a set of discoveries/rejections $R \subseteq [p]$, and its quality is measured by the {false discovery rate} (FDR), which is defined as the expected number of false discovery proportion (FDP):
\[
\mathrm{FDR} = \mathbb{E}\!\left[\mathrm{FDP}\right],~~\mathrm{with}~~
\mathrm{FDP} = \frac{|R \cap I_0|}{|R| \vee 1},
\]
where $a \vee b=\max(a,b)$ avoids division by zero. 

The task of identifying the relevant variables, that is, those in $ I_1 $, naturally leads to the question of how to design a variable selection procedure that can recover as many non-nulls as possible while rigorously controlling false discoveries. The knockoff framework of \citet{barber2015controlling}, refined and generalized by \citet{candes2018panning}, addresses this goal by enabling controlled variable selection: it yields a data-driven rejection set of putative discoveries while provably controlling the FDR at a user-specified level $ \alpha \in (0,1) $, meaning $\FDR\leq \alpha$. Methods based on knockoffs (KO) construct synthetic variables that act as negative controls, enabling variable selection with provable FDR guarantees, while making minimal assumptions on the conditional distribution of $Y\mid \mathbf{X}$. A full description of the knockoff construction and filtering procedure is deferred to Section~\ref{sec:knockoffs}.

\subsection{Motivation of our work}
While the knockoff framework was primarily developed for high-dimensional problems, it is equally valuable to have methods that provide controlled variable selection in low- and moderate-dimensional settings, particularly when the underlying signal is expected to be sparse. In many scientific applications, the number of truly relevant variables is small, even if a large number of variables are measured. When identifying effect modifiers (i.e., predictive variables) in clinical trials, the total number of available clinical variables is typically modest, often on the order of 10–50 patient characteristics, reflecting the scale of routinely collected baseline data. Within this set, it is generally assumed that only a small subset truly drives heterogeneity in treatment effect, leading to a sparse predictive structure in most methodological developments and simulation studies \citep{Lipkovich2017,sechidis2018distinguishing,Lipkovich2024, Sun2024}.
Interestingly, in such settings, the vanilla knockoff procedure exhibits certain limitations that motivate further methodological development. In our work we will focus on two main limitations of the original knockoffs:
\begin{description}
    \item[(a)] The minimum number of discoveries required to achieve statistical significance is determined by the pre-specified FDR level $ \alpha $, and it is $1/\alpha,$ which can be overly conservative when only a few signals are present \citep{zimmermann2024, luo2025improving}. As a result, the procedure may fail to flag any discoveries even when a small but genuine signal exists.
    \item[(b)] If the difficulty of the problem (e.g., signal strength or sparsity) is not well understood beforehand --- which is not rare in exploratory analyses, especially when new data are being studied --- choosing the FDR level $ \alpha $ a priori can be challenging. Once the analysis is conducted, there is no valid post-hoc adjustment of $ \alpha $ to adapt to the observed evidence, which may lead to either low power or weak error guarantee.  
\end{description}

These limitations underscore the need for methods that maintain the rigorous error control of the knockoff framework while providing greater flexibility and improved sensitivity in sparse or low-signal settings. In the remainder of this paper, we develop and assess such approaches, demonstrating how post-hoc adjustment of the nominal FDR level can overcome these challenges.

\subsection{Contribution}
Our paper addresses the aforementioned two key limitations of standard FDR-controlling knockoff methods by leveraging e-values to adjust the nominal FDR level after observing the data.

\paragraph{Switching $\alpha$ post-hoc with knockoffs:} 
E-values \citep{ramdas2024hypothesis, ramdas2023game} are a novel tool for hypothesis testing that has proven particularly useful for sequential anytime-valid inference \citep{grunwald2020safe, shafer2021testing, waudby2020confidence, howard2021time}, testing of composite nulls in irregular models \citep{wasserman2020universal, larsson2024numeraire}, multiple testing \citep{wang2022false, ignatiadis2024asymptotic, vovk2021values, xu2025bringing}, and, most recently, for hypothesis testing at data-dependent nominal levels \citep{grunwald2024beyond, koning2023post, gauthier2025values, xu2025bringing} (see Section~\ref{sec:e-closure} for a detailed description of FDR control with data-dependent $\alpha$). In this paper, we exploit the two latter aspects of e‑values to construct a knockoff procedure, described in Algorithm \ref{alg:posthoc-alpha}, that allows the nominal FDR level to be adjusted post‑hoc. This improves the original knockoff method \citep{barber2015controlling, candes2018panning} in two ways.

\begin{enumerate}
    \item If the original knockoff method does not make any discoveries, we can increase the nominal level to potentially obtain rejections.
    \item If the original knockoff method makes discoveries, we can often decrease the nominal level while keeping the same discovery set, thus increasing the precision of the results.
\end{enumerate}


Figure \ref{fig:example} illustrates the benefits of the proposed post‑hoc adjustment.  In Figure~\ref{fig:example_power}, we observe that the post-hoc procedure dramatically increases power: across all settings for the number of relevant variables, our method successfully identifies nearly all relevant variables (i.e., power $\approx 1$). In contrast, the original knockoff procedure exhibits very low power when the number of relevant variables is small. This is primarily because the minimum number of discoveries is determined by the nominal FDR level $\alpha$, which in our case corresponds to $1/\alpha = 5$. To illustrate this, when the actual number of relevant variables is 3, the original procedure produces no discoveries in 1490 out of 2000 iterations and exactly 5 discoveries in 238 iterations, which are the two most frequent outcomes. By contrast, the post-hoc method most frequently selects 3 variables (924 iterations) and 4 variables (499 iterations), corresponding to the two most common outcomes, closely matching the true number of relevant variables.

The gains in power come with only a small sacrifice in the (averaged across simulations) nominal $\alpha$ value (solid lines), as illustrated in Figure~\ref{fig:example_false_discoveries} for scenarios with a low number of relevant variables (3 and 4). Interestingly, the average FDP acrosss the simulations (i.e., empirical FDR) remains below 0.20 for both methods (dashed lines). Moreover, our post-hoc adjustment can reduce the effective nominal level $\alpha$ when the problem is easier than anticipated, while maintaining the same power as the original procedure.

\begin{figure}[htbp]
\vspace{-0.3cm}
\centering
  \subfloat[Power vs number of relevant variables\label{fig:example_power}]{%
    \includegraphics[width=0.45\textwidth]{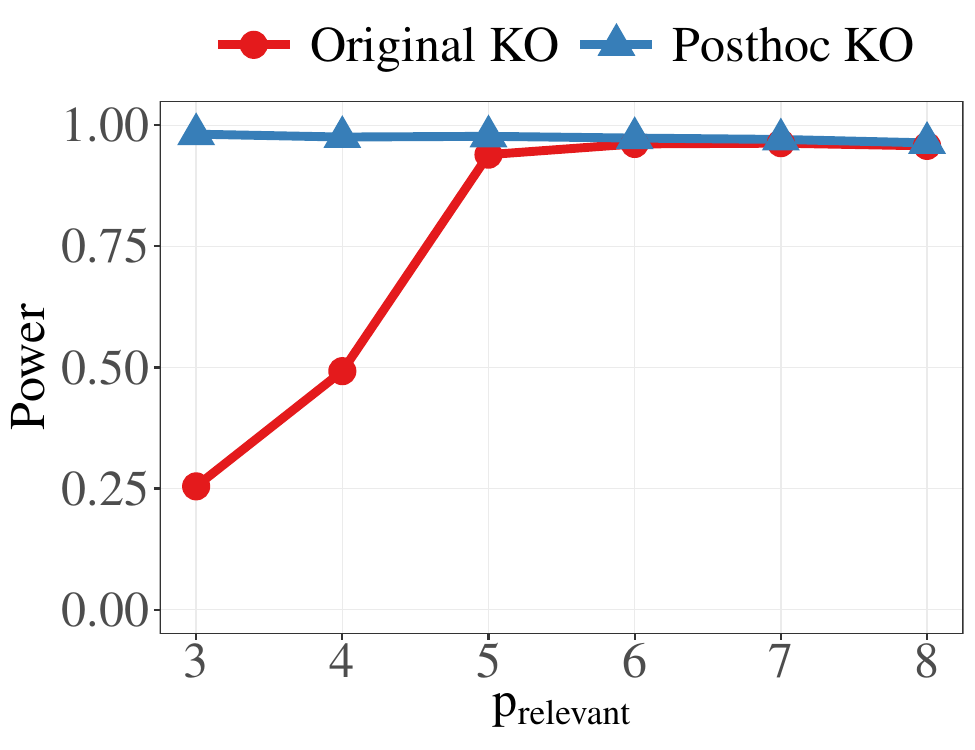}
  }
  \hfill
  \subfloat[FDR and $\alpha$ vs number of relevant variables\label{fig:example_false_discoveries}]{%
    \includegraphics[width=0.45\textwidth]{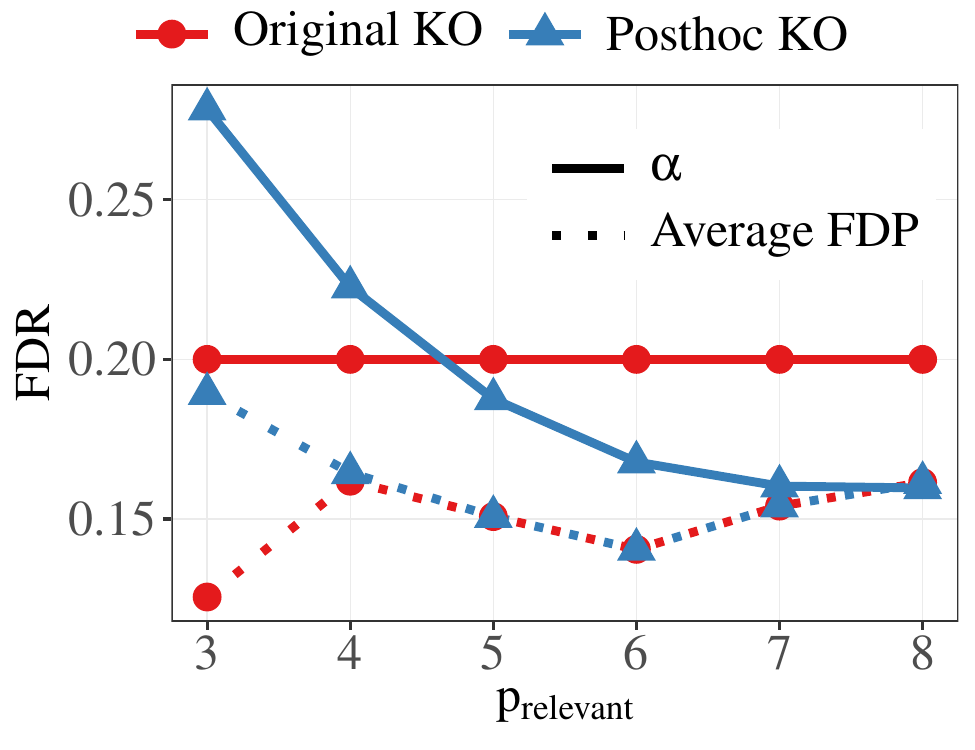}
  }
  \caption{This figure illustrates the main contribution of our work, namely that the proposed post-hoc adjustment of the $\alpha$ values substantially increases statistical power compared to the original approach. For small values of $p_{\text{relevant}}$, the $\alpha$ level is slightly increased to obtain rejections in cases where the original method does not make any discoveries, whereas for larger $p_{\text{relevant}}$, the post-hoc method can even reduce $\alpha$ while maintaining similar (or better) power. 
  The nominal level of $\alpha$ for the original KO method is set to $0.20.$ Results are averaged over 2000 runs, and a full description of the data‑generating process, parameter choices, and simulation protocol is provided in Section \ref{sec:sims}.}
  \label{fig:example}
\end{figure}

It should be noted that our improvement is a \enquote{free-lunch}, meaning our approach never makes less discoveries and it only reports a larger significance level $\alpha$ in cases where the original method did not make any rejections. Therefore, the results reported by our method are always more informative. To further illustrate these properties, we examine the setting with 3 relevant variables, as detailed in Table \ref{tbl:example}. Out of the 1490 runs where the original method failed to yield any discoveries, the post-hoc approach successfully identified an average of 3.27 variables. While this was achieved at a higher average nominal level ($0.313$), it highlights the post-hoc method's ability to extract information where the original method remains uninformative. Conversely, in the 510 runs where the original method successfully identified variables, our post-hoc approach matched the original discoveries, but did so while maintaining tighter error control, achieving an average nominal level of $0.178$ compared to the original $0.20$.
\begin{table}[htbp]
\vspace{-0.3cm}
\centering
\caption{Comparison of original vs posthoc KO number of discoveries and nominal $\alpha$ levels}
\label{tbl:example}
\small 
\begin{tabular}{cccc}
\toprule
\makecell{\textbf{Original KO}\\at nominal level $\alpha = 0.20$\\predefined } &  
\makecell{\textbf{Post-hoc KO}\\ avg. number\\ of discoveries} & 
\makecell{\textbf{Post-hoc KO}\\ avg. nominal level $\alpha$\\ post-hoc determined} \\ \midrule
No discoveries  in  1490 runs out of 2000 & 
3.27 & 0.313 \\
$\ge 5$ discoveries in  510 runs out of 2000  
& Same as original KO & 0.178 \\ \bottomrule
\end{tabular}
\vspace{-0.5cm}
\end{table}

\paragraph{Post-hoc $\alpha$ with derandomized knockoffs:} 
Since the knockoff procedure is based on the random generation of knockoff variables, the results are heavily exposed to external randomness. In order to reduce dependence on the random data generator, \citet{ren2023derandomizing} and \citet{ren2024derandomised} introduced derandomized knockoff procedures by sampling multiple knockoff copies for each variable. Specifically, \citet{ren2023derandomizing} focuse on controlling the per-family error rate (PFER), and \citet{ren2024derandomised} provide a framework for FDR control. These derandomized procedures require the pre-specification of an additional parameter, on which the performance and guarantee of the procedure significantly depend. We show that the parameter can be chosen arbitrarily based on the entire data while maintaining valid error control, thus increasing the efficiency and utility of the derandomized methods enormously. Furthermore, we show how the derandomized procedures can be further uniformly improved.

\subsection{Outline}
In the following, we start to briefly summarize the knockoff procedure and the e-Closure Principle to derive FDR procedures with post-hoc error control. Afterwards, in Section~\ref{sec:our_approach}, we introduce our post-hoc knockoff procedure and derive its error control. In Section~\ref{sec:derandomized}, we consider the derandomized post-hoc procedures and our proposed improvements. In Section~\ref{sec:sims}, we provide extensive simulation results that demonstrate the effectiveness of our approach compared to existing methods.\footnote{The code implementing the methods described in this paper is publicly available at \url{https://github.com/sechidis/posthoc_knockoffs}.} Finally, in Section~\ref{sec:clinical_trial}, we illustrate how to use our proposed methods in clinical trial data, demonstrating its effectiveness in identifying prognostic and predictive variables, two tasks very critical in modern drug development.

\section{Preliminaries}

\subsection{The \texorpdfstring{$\BC$}{BC}-knockoff procedure\label{sec:knockoffs}}

The knockoff procedure introduced by \citet{barber2015controlling} and extended by \citet{candes2018panning} is a powerful framework for variable selection while controlling the FDR. Since we consider multiple variations of their initial approach in this paper, we call it $\BC$-knockoff procedure in the following. The strategy is composed of three steps:

\begin{enumerate}
    \item \textbf{Generate Knockoff Variables:} For each input variable $X_i$, an artificial variable $\tilde{X}_i$ called a knockoff is generated to act as a negative control for assessing the importance of the original variable. To ensure the knockoff variables can serve this purpose, they are constructed with the following two critical properties:
    \begin{description}
        \item[Property (a) - Exchangeability.] The original and the knockoff variables are pairwise exchangeable, in other words, the joint distribution of $(\mathbf{X}, \tilde{\mathbf{X}})$ is invariant under the swapping of any $X_i$ and $\tilde{X}_i.$  

        \item[Property (b) - Conditional Independence.] Given the original variables $\mathbf{X}$, the knockoffs $\tilde{\mathbf{X}}$ are conditionally independent of the response variable $Y$. That is:
        \[
        \tilde{\mathbf{X}} \indep Y \, | \, \mathbf{X}.
        \]
        This property is automatically satisfied if the response $Y$ is not used in the knockoff generation process.
    \end{description}
    Various algorithms have been proposed to construct knockoffs that obey the exchangability property \citep{Sesia2018hmm,Romano2019,Jordon2019,Bates2020,huang2020relaxing,spector2022powerful}.

    \item \textbf{Estimate Knockoff Statistics:} For each variable $X_i$, we calculate knockoff statistics $W_i$ to measure how much more important the original variable is compared to its knockoff $\tilde{X}_i$. A positive value of $W_i$ indicates that the original variable is more important, where the magnitude captures the strength of the evidence. The derivation of knockoff statistics follows a two-step procedure:

    \begin{description}
        \item[Step 2a - Calculate Variable Importance Scores.] Using the dataset that includes both original variables $\mathbf{X}$ and their knockoffs $\tilde{\mathbf{X}}$, along with the outcome variable $Y$, variable importance scores are computed for both original and knockoff variables. Specifically, we calculate a statistic of statistics $(Z, \tilde{Z})$, representing the importance measures for each pair of original and knockoff variables:
        \begin{align}
           (Z, \tilde{Z}) = (Z_1, \dots, Z_p, \tilde{Z}_1, \dots, \tilde{Z}_p) = q \left\{(\mathbf{X}, \tilde{\mathbf{X}}), Y \right\}, \notag
        \end{align}
        where $Z_i$ and $\tilde{Z}_i$ capture importance for the original variable $X_i$ and its knockoff $\tilde{X}_i$, respectively. The function $q$ represents an appropriate model or statistical procedure that assigns importance scores. These statistics are constructed so that swapping any variable with its knockoff results in the corresponding components of the statistic being swapped as well. Most variable importance score estimation procedures, such as those based on regression coefficients and permutation importance measures, inherently satisfy this symmetry property \citep{candes2018panning}. This exchangeability is a key requirement for maintaining the validity of knockoffs.

        \item[Step 2b - Derive Knockoff Statistics.] Using the variable importance scores from Step 2a, the knockoff statistic $W_i$ is defined as:
        \begin{align}
            W_i = f(Z_i, \tilde{Z}_i), \notag
        \end{align}
        where $f$ is an antisymmetric function satisfying $f(u, v) = -f(v, u)$. This property ensures that swapping $X_i$ with its knockoff $\tilde{X}_i$ flips the sign of $W_i$. Common choices for $f$ include:
        \begin{align}
            f(u, v) = u - v, \quad \text{or} \quad f(u, v) = |u| - |v|.  \notag
        \end{align}
        A large, positive $W_i$ provides evidence against the null hypothesis that $X_i$ is an unimportant variable, allowing it to be included in the rejection set in Step 3. 
    \end{description}

    The last years, many powerful knockoff statistics have been introduced in the literature, see for example \citet{spector2024asymptoticallyoptimalknockoffstatistics}. Furthermore, there are works that derive knockoff statistics targeting effect modification in randomized trials: for example, \citet{sechidis2021usingknockoffs} develop predictive knockoff statistics that quantify the extent to which baseline variables modify treatment effects in RCTs, enabling FDR-controlled selection of predictive variables (i.e., effect modifiers) rather than purely prognostic ones.

    \item \textbf{Threshold Knockoff Statistics:} Let $\epsilon_i$ be the sign of $W_i$ and $\mathcal{W}=\sigma(|W_1|,\ldots,|W_p|)$. Define the rejection set $R^{\BC}$ for a predefined level $\alpha^{\kn}\in (0,1]$ as:
    \begin{align}
        R^{\BC} &\coloneqq \{i\in [p]: W_i \geq T_{\alpha^{\kn}}^{\BC} \}, \notag\\ 
        \text{where } 
        T_{\alpha^{\kn}}^{\BC}&\coloneqq \inf\left\{t>0: \frac{\sum_{i\in [p]} \mathbbm{1}\{W_i\geq t\}}{1+\sum_{i\in [p]} \mathbbm{1}\{W_i\leq -t\}}\geq 1/\alpha^{\kn} \right\}. 
        \label{eq:stop_BC}
    \end{align}
    If there does not exist a $t>0$ satisfying the above condition, then $T_{\alpha^{\kn}}^{\BC}=\infty$. 
    The key theoretical result that forms the foundation of this methodology is stated in this Lemma:
    \begin{fact}[Lemma 3.3 in \citep{candes2018panning}]\label{fact:candes}
        Conditional on $\mathcal{W}$, the signs corresponding to true hypotheses $(\epsilon_i)_{i \in I_0}$ are i.i.d. coin flips.
    \end{fact}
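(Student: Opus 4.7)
The plan is to prove the statement by showing that the joint distribution of $(W_1,\dots,W_p)$ is invariant under flipping the signs of any subset of the null coordinates. Once this swap-symmetry is in place, conditioning on $\mathcal{W}=\sigma(|W_1|,\dots,|W_p|)$ leaves a distribution over sign vectors that is invariant under arbitrary sign flips on $I_0$, which forces those signs to be i.i.d.\ symmetric $\pm 1$ (independently of everything else on $\mathcal{W}$).

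First, I would reduce the claim to a swap-invariance statement for $(\mathbf{X},\tilde{\mathbf{X}},Y)$. For any subset $S\subseteq I_0$, let $(\mathbf{X},\tilde{\mathbf{X}})_{\text{swap}(S)}$ denote the pair obtained by exchanging $X_i$ with $\tilde{X}_i$ for every $i\in S$. I would argue that
\[
\bigl((\mathbf{X},\tilde{\mathbf{X}})_{\text{swap}(S)},\, Y\bigr) \stackrel{d}{=} \bigl((\mathbf{X},\tilde{\mathbf{X}}),\, Y\bigr),
\]
combining Property (a) (pairwise exchangeability of $(\mathbf{X},\tilde{\mathbf{X}})$), Property (b) (the knockoffs are generated without looking at $Y$, so $\tilde{\mathbf{X}}\indep Y\mid \mathbf{X}$), and the nullity assumption $X_i\indep Y\mid \mathbf{X}_{-i}$ for every $i\in I_0$. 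The delicate point is that the standard exchangeability statement involves only $(\mathbf{X},\tilde{\mathbf{X}})$; one must check that $Y$ can be attached to both sides consistently. This is precisely where the null assumption enters: on the null coordinates $S$, swapping $X_i$ with $\tilde{X}_i$ does not alter the conditional law of $Y$ given the remaining information, and Property (b) ensures that the law of $Y$ does not depend on $\tilde{\mathbf{X}}$ at all.

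Second, I would push this invariance through the map that produces the knockoff statistics. Since the statistic $(Z,\tilde{Z})=q\{(\mathbf{X},\tilde{\mathbf{X}}),Y\}$ is constructed so that swapping a variable with its knockoff swaps the corresponding entries of $(Z,\tilde{Z})$, and since $f$ is antisymmetric, swapping coordinates in $S$ produces the vector $(W_i')_{i\in[p]}$ with $W_i' = -W_i$ for $i\in S$ and $W_i' = W_i$ otherwise. Hence for any $S\subseteq I_0$,
\[
(W_1',\dots,W_p') \stackrel{d}{=} (W_1,\dots,W_p).
\]
In particular $|W_i'|=|W_i|$ for every $i$, so this equality in distribution persists after conditioning on $\mathcal{W}$. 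Taking $S$ to range over all $2^{|I_0|}$ subsets of $I_0$, the conditional distribution of $(\epsilon_i)_{i\in I_0}$ given $\mathcal{W}$ is invariant under every coordinatewise sign change, which on $\{-1,+1\}^{|I_0|}$ is only possible for the uniform (product of fair Bernoullis) distribution. This yields the claim.

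The main obstacle I anticipate is the first step: carefully arguing that exchangeability holds \emph{jointly} with $Y$ for swaps on $I_0$. Rather than redoing this from scratch, I would invoke the standard argument from the knockoff literature, which factorizes the joint density of $(\mathbf{X},\tilde{\mathbf{X}},Y)$ using Property (b) as $p(\mathbf{X},\tilde{\mathbf{X}})\, p(Y\mid \mathbf{X})$, then uses Property (a) on the first factor and the null characterization $p(Y\mid \mathbf{X})=p(Y\mid \mathbf{X}_{-i})$ on the second factor to verify invariance under swaps at null indices. The remaining steps are then mechanical consequences of the construction of $W_i$ and of the symmetry of Haar measure on $\{-1,+1\}^{|I_0|}$.
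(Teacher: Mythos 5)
Your proposal is correct and follows essentially the standard argument: the paper itself gives no proof of this statement, importing it as Lemma 3.3 of \citet{candes2018panning}, and your two-step route (swap-invariance of $((\mathbf{X},\tilde{\mathbf{X}})_{\text{swap}(S)},Y)$ for null subsets $S$, then the flip-sign/antisymmetry property of $W$ and invariance of the conditional law given $\mathcal{W}$ under all sign flips on $I_0$) is exactly the proof given there. The only cosmetic caveat is the usual measure-zero issue of $W_i=0$, which the standard treatment handles by convention.
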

\end{enumerate}

With the latter result, one can show that the $\BC$-knockoff procedure controls the FDR at level $\alpha=\alpha^{\kn}$ \citep{barber2015controlling, candes2018panning}. Similar to \citet{ren2024derandomised}, we distinguish between the FDR level $\alpha$ and the parameter $\alpha^{\kn}$  used for the knockoff stopping time in \eqref{eq:stop_BC}, since they can deviate later in the paper. Also, in this paper we propose to choose the significance level $\alpha$ post-hoc, however, the parameter $\alpha^{\kn}$ needs to be specified in advance and can be interpreted as an \textit{initial} or \textit{targeted} significance level.

\subsection{Post-hoc choice of \texorpdfstring{$\alpha$}{alpha}\label{sec:e-closure}}
The Closure Principle of \citet{marcus1976closed} has long been known as a fundamental approach to multiple testing with familywise error rate (FWER) control.
Recently, \citet{xu2025bringing} introduced an e-Closure Principle for FDR control. Given a \textit{local e-value} $E_S$ for each intersection hypothesis $H_0^S=\bigcap_{i\in S} H_0^i$, $S\subseteq [p]$, meaning $\mathbb{E}_{H_0^S}[E_S]\leq 1$, we consider the \textit{closure set}
\begin{align}
    \mcR_{\alpha}(\bE)\coloneqq \left\{R\subseteq [p]: E_S\geq  \frac{\FDP_S(R)}{\alpha} \text{ for all } S\subseteq [p]\right\}, \label{eq:closure_set}
\end{align}
where $\bE=(E_S)_{S\subseteq [p]}$ and $\FDP_S(R)=\frac{|S\cap R|}{|R|\lor 1}$. \citet{xu2025bringing} showed that any rejection set $R\in \mcR_{\alpha}(\bE)$ controls the FDR and any procedure that controls the FDR at level $\alpha$ can be constructed in this way. 

Furthermore, \citet{xu2025bringing} showed that these closure sets even permit a more sophisticated error rate control that goes beyond the classical notion of FDR control, i.e.,  $\alpha$ and $R\in \mcR_{\alpha}(\bE)$ can be chosen post-hoc!  Given a family of e-values $\bE=(E_S)_{S\subseteq [p]}$, they proved that 
\begin{align}
    & \mathbb{E}\left[\sup_{\alpha >0} \max_{R\in \mcR_{\alpha}(\bE)} \frac{\FDP_{I_0}(R)}{\alpha}\right] \leq 1. \label{eq:sup_alpha}
\end{align}
This particularly implies that 
\begin{align}
    \mathbb{E}\left[ \frac{\FDP_{I_0}(\tilde{R})}{\tilde{\alpha}}\right] \leq 1 \text{ for any data-dependent } \tilde{\alpha}\in (0,1], \tilde{R}\in \mcR_{\tilde{\alpha}}(\bE). \label{eq:post_hoc_alpha}
\end{align} 
Therefore, after calculating the e-values $\bE$, we can construct the closure sets $\mcR_{\alpha}(\bE)$ for multiple $\alpha$'s and then decide completely data-dependent which $\tilde{\alpha}\in (0,1]$ and $\tilde{R}\in \mcR_{\tilde{\alpha}}(\bE)$ to report while guaranteeing \eqref{eq:post_hoc_alpha}.

Note that the post-hoc error rate control in \eqref{eq:post_hoc_alpha} is slightly different from classical FDR control (where the FDR is bounded by $\alpha$), since $\tilde{\alpha}$ is inside the expectation. If $\tilde{\alpha}$ was fixed in advance, then the two notions would be equivalent due to the linearity of the expected value. However, since $\tilde{\alpha}$ is allowed to be a random variable depending on the data, the classical notion needs to be adapted by drawing $\tilde{\alpha}$ inside the expectation. This notion of post-hoc validity was also considered in \citep{grunwald2024beyond, koning2023post} for single hypotheses.

Nevertheless, \eqref{eq:post_hoc_alpha} provides a sensible notion of error rate control for post-hoc $\tilde{\alpha}$: If we choose a small $\tilde{\alpha}$, we will (on average) only have the option to choose a rejection set $\tilde{R}$ with small $\FDP_{I_0}(\tilde{R})$, since \eqref{eq:post_hoc_alpha} would be violated otherwise. 
Note that \eqref{eq:post_hoc_alpha} cannot be trivially satisfied by applying a fixed-level multiple testing procedure at different levels $\alpha$ and choosing the best result, as we illustrate by the following example.  

\begin{example}
    Suppose all null hypotheses are true ($I_0=[p]$). In that case $\FDP_{I_0}(R)$ becomes $1$ if $R\neq \emptyset$ and $0$ otherwise. Suppose we have a multiple testing procedure $R_\alpha$ for each $\alpha\in (0,1]$ that has exact error rate control, meaning $\mathbb{E}[\FDP_{I_0}(R_{\alpha})]=\alpha$ (for example, this is satisfied by the Benjamini-Hochberg procedure \citep{benjamini1995controlling} with independent and uniformly distributed null p-values). Now set $\tilde{\alpha}=\inf\{\alpha \in (0,1]: R_{\alpha}\neq \emptyset]$. Then, we have
    $$
    \mathbbm{E}\left[\frac{\FDP_{I_0}\left(R_{\tilde{\alpha}} \right)}{\tilde{\alpha}} \right]=\int_0^1 1/x\ dx =\infty.
    $$
    Therefore, cherry-picking the significance level with fixed-level procedures can inflate the expected value in \eqref{eq:post_hoc_alpha} indefinitely.
\end{example}



In this paper, we will develop a knockoff procedure that gains power by allowing the significance level to be chosen post-hoc, while ensuring that \eqref{eq:sup_alpha} (and thus \eqref{eq:post_hoc_alpha}) is fulfilled.

\section{Switching \texorpdfstring{$\alpha$}{alpha} post-hoc with knockoffs\label{sec:our_approach}}

Considering the stopping time $T_{\alpha^{\kn}}^{\BC}$ in \eqref{eq:stop_BC}, it is apparent that at least $1/\alpha^{\kn}=1/\alpha$ variables with a strong signal are needed to make rejections with $\BC$-knockoff, as $T_{\alpha^{\kn}}^{\BC}=\infty$ otherwise. For this reason, $\BC$-knockoff often performs poorly in low dimensional settings (see Figure~\ref{fig:example}).
In this section, we show how we can exploit the e-Closure Principle (Section~\ref{sec:e-closure}) in such cases to increase the significance level $\alpha$ (but not the parameter $\alpha^{\kn}$) post-hoc and potentially make some rejections while guaranteeing FDR control in the sense of \eqref{eq:post_hoc_alpha}. Furthermore, we also show that in cases where $\BC$-knockoff makes rejections, we can often decrease the significance level, thus increasing the precision.



\subsection{Increase \texorpdfstring{$\alpha$}{alpha} to gain power}

First, let us define the stopping time
\begin{align}
    T_{\alpha^{\kn}}^{\ph}\coloneqq \inf\left\{t>0: \frac{\sum_{i\in [p]} \mathbbm{1}\{W_i\geq t\}}{1+\sum_{i\in [p]} \mathbbm{1}\{W_i\leq -t\}}\geq 1/\alpha^{\kn} \quad \text{or} \quad \sum_{i\in [p]} \mathbbm{1}\{W_i\leq -t\}=0\right\}, \label{eq:stop_ph}
\end{align}
where the parameter $\alpha^{\kn}\in (0,1]$ can be interpreted as the prespecified initial significance level. The difference of $T_{\alpha^{\kn}}^{\ph}$ to the stopping time $T_{\alpha^{\kn}}^{\BC}$ of \citet{barber2015controlling} (see equation \eqref{eq:stop_BC}) is the second condition $\sum_{i\in [p]} \mathbbm{1}\{W_i\leq -t\}=0$. Be aware that this condition only comes into play if the first condition is not satisfied for any $t$, meaning $T_{\alpha^{\kn}}^{\BC}=T_{\alpha^{\kn}}^{\ph}$ if $R^{\BC}\neq \emptyset$. However, in case of $T_{\alpha^{\kn}}^{\BC}=\infty$, we have $T_{\alpha^{\kn}}^{\ph} < \infty$. Similar to $\BC$-knockoff, our procedure, which we call post-hoc knockoff (ph-knockoff), rejects all hypotheses $H_0^i$ with $W_i\geq T_{\alpha^{\kn}}^{\ph}$. Since $T_{\alpha^{\kn}}^{\ph}\leq T_{\alpha^{\kn}}^{\BC}$, ph-knockoff is uniformly more powerful than $\BC$-knockoff by \citet{barber2015controlling}. However, in case of $T_{\alpha^{\kn}}^{\BC}=\infty$, we can only report the rejections at a larger significance level $\alpha > \alpha^{\kn}$, as we show in the next section (see Theorem~\ref{theo:posthoc_alpha}).

\begin{remark}\label{remark:RB_stop}
     \citet{luo2025improving} and \citet{ren2024derandomised} used a modification of the stopping time $T_{\alpha^{\kn}}^{\BC}$ that is similar to $T_{\alpha^{\kn}}^{\ph}$, adding the condition $\sum_{i\in [p]} \mathbbm{1}\{W_i\geq t\} <1/\alpha^{\kn}$ instead of $\sum_{i\in [p]} \mathbbm{1}\{W_i\leq -t\}=0$. They used this adjusted stopping time to improve their calibrated and derandomized knockoff procedures without considering a post-hoc choice of $\alpha$. While our approach of switching $\alpha$ post-hoc would also work with their stopping time, our stopping time $T_{\alpha^{\kn}}^{\ph}$ usually leads to smaller post-hoc levels and therefore we think $T_{\alpha^{\kn}}^{\ph}$ is more appropriate.  
\end{remark}

\subsection{Decrease \texorpdfstring{$\alpha$}{alpha} to improve precision}

Up to this point, we have only considered to choose a level $\alpha>\alpha^{\kn}$ to potentially obtain more rejections. However, it turns out that in cases where $\BC$-knockoff makes rejections ($R^{\BC}\neq \emptyset$), it actually controls the FDR at a smaller level $\alpha$ than the prespecified $\alpha^{\kn}$. Since rejections at a smaller significance level indicate more evidence against the null hypothesis, we can decrease $\alpha$ compared to the initial significance level $\alpha^{\kn}$ post-hoc to improve the precision of the results. 

In Algorithm~\ref{alg:posthoc-alpha}, we summarize our general procedure, where we denote by $\tilde{\alpha}^{\ph}$ our post-hoc significance level and by $R^{\ph}$ the corresponding rejection set. If $\BC$-knockoff does not make any rejections, we increase the significance level ($\tilde{\alpha}^{\ph}>\alpha^{\kn}$), and thus potentially obtain rejections. If $\BC$-knockoff makes rejections, we decrease the significance level ($\tilde{\alpha}^{\ph}\leq \alpha^{\kn}$) to increase the evidence. In this way, the claims made by $\ph$-knockoff are always stronger than the claims made by $\BC$-knockoff. 

\begin{algorithm}
\caption{$\ph$-knockoff}\label{alg:posthoc-alpha}
 \textbf{Input:} Initial significance level $\alpha^{\kn}$ and knockoff statistics $W_1,\ldots,W_p$.\\ 
 \textbf{Output:} Rejection set $R^{\ph}$ and post-hoc level $\tilde{\alpha}^{\ph}$.
\begin{algorithmic}[1]
\State Calculate $T_{\alpha^{\kn}}^{\ph}$ from \eqref{eq:stop_ph}
\State $R^{\ph}=\{i\in [p]:W_i\geq T_{\alpha^{\kn}}^{\ph}\}$
\If{$R^{\ph}\neq \emptyset$}
\State $\tilde{\alpha}^{\ph}=\frac{1+\sum_{i\in [p]} \mathbbm{1}\{W_i\leq -T_{\alpha^{\kn}}^{\ph}\}}{\sum_{i\in [p]} \mathbbm{1}\{W_i\geq T_{\alpha^{\kn}}^{\ph}\}}$
\Else 
\State $\tilde{\alpha}^{\ph}=\alpha^{\kn}$
\EndIf
\State \Return $R^{\ph}$, $\tilde{\alpha}^{\ph}$
\end{algorithmic}
\end{algorithm}

To prove the guarantee of our procedure, we define the local e-values
\begin{align}
    E_S^{\ph}=\frac{\sum_{i\in S}  \mathbbm{1} \{W_i \geq T_{\alpha^{\kn}}^{\ph}\}}{1+\sum_{i\in [p]}\mathbbm{1}\{W_i \leq -T_{\alpha^{\kn}}^{\ph}\}}, S\subseteq [p]. \label{eq: local_e_ph}
\end{align}
The validity of these local e-values (i.e., that $\mathbb{E}_{H_0^S}[E_S^{\ph}]\leq 1$) follows since $T_{\alpha^{\kn}}^{\ph}$ is a stopping time with respect to the filtration generated by a masked version of $(W_i)_{i\in [p]}$, meaning the event $\{T_{\alpha^{\kn}}^{\ph}\leq t\}$ is determined by the absolute values $|W_i|$ for all $i$ and  $\sum_{i:|W_i|\geq s} \mathbbm{1}\{W_i>0\}$ for all $s\leq t$ (see Remark~4 in \citep{ren2024derandomised}).

\begin{theorem}\label{theo:posthoc_alpha}
    Let $\bE^{\ph}=(E_S^{\ph})_{S\subseteq [p]}$ be the family of local e-values defined in \eqref{eq: local_e_ph} and set
    \begin{align}
        \tilde{\alpha}^{\ph}=\frac{1+\sum_{i\in [p]} \mathbbm{1}\{W_i\leq -T_{\alpha^{\kn}}^{\ph}\}}{\sum_{i\in [p]} \mathbbm{1}\{W_i\geq T_{\alpha^{\kn}}^{\ph}\}} \quad \text{with the convention } \tilde{\alpha}^{\ph}=\alpha^{\kn} \text{ if } \sum_{i\in [p]} \mathbbm{1}\{W_i\geq T_{\alpha^{\kn}}^{\ph}\}=0. \notag
    \end{align}
    Then $R^{\ph}\in \mcR_{\tilde{\alpha}^{\ph}}(\bE^{\ph})$, where
    $R^{\ph}=\{i\in [p]:W_i\geq T_{\alpha^{\kn}}^{\ph}\}$. 
    Therefore, it holds that 
    \begin{align}
    \mathbb{E}\left[ \frac{\FDP_{I_0}(R^{\ph})}{\tilde{\alpha}^{\ph}}\right] \leq 1. \label{eq:ph_kn_guarantee}
\end{align} 
Furthermore, $R^{\ph}\supseteq R^{\BC}$, meaning $\ph$-knockoff always makes as least as many rejections as $\BC$-knockoff and $\tilde{\alpha}^{\ph}\leq \alpha^{\kn}$ on the event $\{R^{\BC}\neq \emptyset\}$, meaning if $\BC$-knockoff rejects any hypothesis then $\ph$-knockoff does not report a larger significance level.
\end{theorem}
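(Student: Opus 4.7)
The strategy is to verify the closure-set containment directly from the definitions, then appeal to the e-Closure Principle \eqref{eq:post_hoc_alpha} for the FDR bound, and finally compare the two stopping times to get the last two statements.

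First, I would prove $R^{\ph}\in\mcR_{\alpha^{\ph}}(\bE^{\ph})$. Unpacking the definitions, $E_S^{\ph}=\frac{|S\cap R^{\ph}|}{1+\sum_{i\in [p]}\mathbbm{1}\{W_i\leq -T_{\alpha^{\kn}}^{\ph}\}}$ and $\FDP_S(R^{\ph})=\frac{|S\cap R^{\ph}|}{|R^{\ph}|\vee 1}$, so the required inequality $E_S^{\ph}\geq \FDP_S(R^{\ph})/\alpha^{\ph}$ reduces, after canceling the common numerator $|S\cap R^{\ph}|$, to $(|R^{\ph}|\vee 1)\alpha^{\ph}\geq 1+\sum_{i\in [p]}\mathbbm{1}\{W_i\leq -T_{\alpha^{\kn}}^{\ph}\}$. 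Split into two cases. If $R^{\ph}\neq\emptyset$, then $|R^{\ph}|=\sum_{i\in [p]}\mathbbm{1}\{W_i\geq T_{\alpha^{\kn}}^{\ph}\}$ and substituting the definition of $\alpha^{\ph}$ makes the inequality an equality. If $R^{\ph}=\emptyset$, then $|S\cap R^{\ph}|=0$ makes both sides zero, so the inequality is trivial. This verifies the closure-set membership for every $S\subseteq[p]$.

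Next, \eqref{eq:ph_kn_guarantee} follows immediately by combining this with \eqref{eq:post_hoc_alpha}, provided $\bE^{\ph}$ is a valid family of local e-values. That validity is already noted in the paragraph preceding the theorem: $T_{\alpha^{\kn}}^{\ph}$ is a stopping time with respect to the masked filtration of \citet{ren2024derandomised}, so conditional on $\mathcal{W}$ together with the partial sign information up to the stopping time, the signs $(\epsilon_i)_{i\in I_0\cap S}$ remain i.i.d.\ coin flips by Fact~\ref{fact:candes}, and an optional-stopping / martingale-style argument of the same kind as in the standard $\BC$-knockoff proof yields $\mathbb{E}_{H_0^S}[E_S^{\ph}]\leq 1$.

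For the inclusion $R^{\ph}\supseteq R^{\BC}$, I would observe that the only difference between $T_{\alpha^{\kn}}^{\ph}$ and $T_{\alpha^{\kn}}^{\BC}$ is the additional stopping trigger $\sum_i\mathbbm{1}\{W_i\leq -t\}=0$ in the former. This can only make $T_{\alpha^{\kn}}^{\ph}\leq T_{\alpha^{\kn}}^{\BC}$. Moreover, on $\{R^{\BC}\neq\emptyset\}$ the first condition in \eqref{eq:stop_ph} is the one that fires, so the two stopping times coincide and $R^{\ph}=R^{\BC}$; on $\{R^{\BC}=\emptyset\}$ the inclusion is vacuous.

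Finally, on $\{R^{\BC}\neq\emptyset\}$, using $T_{\alpha^{\kn}}^{\ph}=T_{\alpha^{\kn}}^{\BC}$ together with the defining inequality $\frac{\sum_i\mathbbm{1}\{W_i\geq T_{\alpha^{\kn}}^{\BC}\}}{1+\sum_i\mathbbm{1}\{W_i\leq -T_{\alpha^{\kn}}^{\BC}\}}\geq 1/\alpha^{\kn}$ from \eqref{eq:stop_BC}, inverting yields $\alpha^{\ph}=\frac{1+\sum_i\mathbbm{1}\{W_i\leq -T_{\alpha^{\kn}}^{\ph}\}}{\sum_i\mathbbm{1}\{W_i\geq T_{\alpha^{\kn}}^{\ph}\}}\leq \alpha^{\kn}$. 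The only part of the argument I expect to require any real care is checking the local e-value validity of $\bE^{\ph}$ at its own data-dependent stopping time (making sure the filtration/optional-stopping argument goes through), since the rest is essentially bookkeeping with the definitions.
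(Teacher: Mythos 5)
Your proposal is correct and follows essentially the same route as the paper: the closure-set membership is verified by the identical algebraic identity $E_S^{\ph}=\FDP_S(R^{\ph})|R^{\ph}|/(1+\sum_i\mathbbm{1}\{W_i\leq -T_{\alpha^{\kn}}^{\ph}\})=\FDP_S(R^{\ph})/\alpha^{\ph}$ (with the empty-rejection case handled trivially), the FDR bound then follows from \eqref{eq:post_hoc_alpha} together with the e-value validity already established before the theorem, and the final claims follow from $T_{\alpha^{\kn}}^{\ph}\leq T_{\alpha^{\kn}}^{\BC}$ with equality on $\{R^{\BC}\neq\emptyset\}$, exactly as the paper argues in the surrounding text.
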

\begin{proof}
    We need to show that $E_S^{\ph}\geq \FDP_S(R^{\ph})/\tilde{\alpha}^{\ph}$ for all $S\subseteq [p]$. First, note that in case of $\sum_{i\in [p]} \mathbbm{1}\{W_i\geq T_{\alpha^{\kn}}^{\ph}\}=0$, we have $R^{\ph}=\emptyset$ and therefore the condition is trivially fulfilled. Hence, assume $R^{\ph}\neq \emptyset$. Then,
\begin{align}
    E_S^{\ph}=\frac{|S\cap R^{\ph}|}{1+\sum_{i\in [p]}\mathbbm{1}\{W_i \leq -T_{\alpha^{\kn}}^{\ph}\}}=\frac{\FDP_S(R^{\ph}) |R^{\ph}|}{1+\sum_{i\in [p]}\mathbbm{1}\{W_i \leq -T_{\alpha^{\kn}}^{\ph}\}}=\frac{\FDP_S(R^{\ph})}{\tilde{\alpha}^{\ph}}. \notag
\end{align}
    
\end{proof}

\begin{remark}
Note that $\tilde{\alpha}^{\ph}$ equals $1$ in case of $\sum_{i\in [p]} \mathbbm{1}\{W_i\geq T_{\alpha^{\kn}}^{\ph}\}=1$. While rejections at a level of $1$ seems like a trivial claim in the fixed-$\alpha$ regime, it is not for random $\alpha$ \citep{koning2023post}. However, if such large values for $\tilde{\alpha}^{\ph}$ should be avoided, one could just set $\tilde{\alpha}^{\ph}=\alpha^{\kn}$ and $R^{\ph}=\emptyset$ in those cases. This obviously maintains \eqref{eq:ph_kn_guarantee}.
\end{remark}



\section{Post-hoc \texorpdfstring{$\alpha$}{alpha} with derandomized knockoffs\label{sec:derandomized}}

\citet{ren2024derandomised} introduced a derandomized knockoff procedure, which we call $\RB$-knockoff in the following, by generating multiple knockoff copies and merging the results using an average of (compound) e-values. In the following, we briefly recap their method and demonstrate how it can be improved by using a post-hoc significance level. It should be noted that we present the $\RB$-knockoff procedure with our stopping time $T_{\alpha^{\kn}}^{\ph}$ to be consistent with the previous part, even though \citet{ren2024derandomised} proposed a slightly different one (see Remark~\ref{remark:RB_stop}). However, our approach generally works for any stopping time.

Let $W_i^{(1)},\ldots,W_i^{(k)}$, $i\in [p]$, be $k$ knockoff statistics for the $i$-th variable. It is only required that the vector $W^{(j)}=(W_1^{(j)}, \ldots, W_p^{(j)})$ satisfies the property in Fact~\ref{fact:candes} for each $j\in [k]$, but the dependence structure between the different $W^{(j)}$ can be arbitrary. Now let $T_{\alpha^{\kn}}^{\ph,(j)}$ be the stopping time \eqref{eq:stop_ph} applied on $W^{(j)}$ and define for each $i\in [p]$,
\begin{align}
    E_i^{\text{avg}}=\frac{1}{k} \sum_{j=1}^k E_i^{(j)}, \quad \text{where } E_i^{(j)}=p\frac{\mathbbm{1}\{W_i^{(j)}\geq T_{\alpha^{\kn}}^{\ph,(j)}\}}{1+\sum_{l\in [p]} \mathbbm{1}\{W_l^{(j)}\leq -T_{\alpha^{\kn}}^{\ph,(j)}\} }. \label{eq:e-values_derand}
\end{align}
\citet{ren2024derandomised} showed that $E_1^{\text{avg}},\ldots, E_p^{\text{avg}}$ are compound e-values, which means $\mathbb{E} \left[ \sum_{i\in I_0} E_i^{\text{avg}} \right] \leq p$. The $\RB$-knockoff procedure is defined by e-BH \citep{wang2022false} applied on $E_1^{\text{avg}},\ldots, E_p^{\text{avg}}$ at prespecified level $\alpha^{\ebh}\in (0,1]$ and controls the FDR at level $\alpha^{\ebh}$. 

In case of $k=1$ and $\alpha^{\ebh}=\alpha^{\kn}$, $\RB$-knockoff coincides with $\BC$-knockoff \citep{ren2024derandomised}. However, \citet{ren2024derandomised} showed that $\alpha^{\ebh}=\alpha^{\kn}$ often leads to zero power in case of $k>1$ and therefore recommend to choose $\alpha^{\ebh}>\alpha^{\kn}$. The choice of the parameter $\alpha^{\ebh}$ is crucial. If $\alpha^{\ebh}$ is too small, one might not be able to make any rejections even though the signal is strong. If $\alpha^{\ebh}$ is too large, the results become imprecise. \citet{ren2024derandomised} derive some concrete choices like $\alpha^{\ebh}=2\alpha^{\kn}$ based on simulations and heuristics, but they do not propose an optimal parameter. We note that it is possible to choose $\alpha^{\ebh}$ completely data-dependent while maintaining \eqref{eq:post_hoc_alpha}. In this way, we can just choose the $\alpha^{\ebh}$ that leads to the most desirable rejection set. 
\begin{theorem}\label{theo:derandomized}
    Let $E_{(1)}^{\text{avg}}\geq \ldots \geq  E_{(p)}^{\text{avg}}$ be the ordered compound e-values defined in \eqref{eq:e-values_derand} and $R_{\alpha^{\ebh}}^{\RB}=\{i\in [p]: E_i^{\text{avg}} \geq p/(\alpha^{\ebh} i^{\ebh})\}$, where $i^{\ebh}=\max\{i\in [p]:E_{(i)}^{\text{avg}}\geq p/(\alpha^{\ebh} i)\}$ with the convention $\max(\emptyset)=0$. Then,
    \begin{align}
        \mathbb{E}\left[ \sup_{\alpha^{\ebh}\in (0,1]} \frac{\FDP_{I_0}(R_{\alpha^{\ebh}}^{\RB})}{\alpha^{\ebh}}  \right]\leq 1. \notag
    \end{align}
    Hence, the parameter $\alpha^{\ebh}$ can be chosen based on the data while maintaining FDR control in the sense of \eqref{eq:post_hoc_alpha}.
\end{theorem}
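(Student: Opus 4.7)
The plan is to show that the ratio $\FDP_{I_0}(R_{\alpha^{\ebh}}^{\RB})/\alpha^{\ebh}$ admits a deterministic bound in terms of the compound e-values that does not depend on $\alpha^{\ebh}$, so that the supremum can be pulled out trivially before taking expectation.

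First, I would fix an arbitrary $\alpha^{\ebh}\in(0,1]$ and exploit the defining property of e-BH. If $R_{\alpha^{\ebh}}^{\RB}$ is empty then $\FDP_{I_0}(R_{\alpha^{\ebh}}^{\RB})=0$ and there is nothing to prove for that $\alpha^{\ebh}$. Otherwise, by definition of $i^{\ebh}$ and $R_{\alpha^{\ebh}}^{\RB}$, every $i\in R_{\alpha^{\ebh}}^{\RB}$ satisfies
\begin{align*}
E_i^{\text{avg}}\;\geq\; \frac{p}{\alpha^{\ebh}\,|R_{\alpha^{\ebh}}^{\RB}|},
\qquad\text{equivalently}\qquad
\frac{1}{\alpha^{\ebh}\,|R_{\alpha^{\ebh}}^{\RB}|}\;\leq\;\frac{E_i^{\text{avg}}}{p}.
\end{align*}

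Second, I would expand the FDP as a sum of indicators and apply this inequality to each term:
\begin{align*}
\frac{\FDP_{I_0}(R_{\alpha^{\ebh}}^{\RB})}{\alpha^{\ebh}}
\;=\;\sum_{i\in I_0\cap R_{\alpha^{\ebh}}^{\RB}} \frac{1}{\alpha^{\ebh}\,|R_{\alpha^{\ebh}}^{\RB}|}
\;\leq\; \frac{1}{p}\sum_{i\in I_0\cap R_{\alpha^{\ebh}}^{\RB}} E_i^{\text{avg}}
\;\leq\; \frac{1}{p}\sum_{i\in I_0} E_i^{\text{avg}}.
\end{align*}
The last inequality uses $E_i^{\text{avg}}\geq 0$ and is exactly the step that kills the dependence on $\alpha^{\ebh}$ on the right-hand side. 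This is the same calculation underlying the standard e-BH FDR proof of \citet{wang2022false}; the key observation is simply that this bound is pointwise in $\omega$ and uniform in $\alpha^{\ebh}$.

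Third, since the upper bound does not depend on $\alpha^{\ebh}$, I can take the supremum on the left and then the expectation on both sides:
\begin{align*}
\mathbb{E}\!\left[\sup_{\alpha^{\ebh}\in(0,1]}\frac{\FDP_{I_0}(R_{\alpha^{\ebh}}^{\RB})}{\alpha^{\ebh}}\right]
\;\leq\; \frac{1}{p}\,\mathbb{E}\!\left[\sum_{i\in I_0} E_i^{\text{avg}}\right]
\;\leq\; 1,
\end{align*}
where the final inequality is the compound e-value property $\mathbb{E}[\sum_{i\in I_0} E_i^{\text{avg}}]\leq p$ proved by \citet{ren2024derandomised}. I do not anticipate a serious obstacle: the only thing to be careful about is the empty-rejection case (handled above) and the measurability of the supremum, which is fine because $\alpha^{\ebh}\mapsto R_{\alpha^{\ebh}}^{\RB}$ is piecewise constant in $\alpha^{\ebh}$ (it only changes at the finitely many thresholds $p/(i\,E_{(i)}^{\text{avg}})$), so the supremum is effectively a maximum over finitely many values.
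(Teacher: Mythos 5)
Your proof is correct and is essentially the paper's argument unpacked: the paper's one-line proof cites the equivalence of e-BH with the e-Closure Principle applied to the local e-values $E_S=\frac{1}{p}\sum_{i\in S}E_i^{\text{avg}}$ together with \eqref{eq:sup_alpha}, and your uniform pointwise bound $\FDP_{I_0}(R_{\alpha^{\ebh}}^{\RB})/\alpha^{\ebh}\leq \frac{1}{p}\sum_{i\in I_0}E_i^{\text{avg}}=E_{I_0}$ is exactly what that equivalence delivers. The self-contained calculation is sound, including the empty-rejection case, the key fact that $|R_{\alpha^{\ebh}}^{\RB}|\geq i^{\ebh}$ so that each rejected $i$ satisfies $E_i^{\text{avg}}\geq p/(\alpha^{\ebh}|R_{\alpha^{\ebh}}^{\RB}|)$, and the measurability remark.
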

The proof of the Theorem~\ref{theo:derandomized} follows immediately from the equivalence of e-BH with compound e-values $E_1^{\text{avg}},\ldots, E_p^{\text{avg}}$ and the e-Closure Principle with the following local e-values (see Theorem 23 in \citep{xu2025bringing})
\begin{align}
    E_S=\frac{1}{p} \sum_{i\in S} E_i^{\text{avg}}. \label{eq:local_avg}
\end{align}

There are multiple reasonable choices to pick $\alpha^{\ebh}$ based on the data that could lead to different rejection sets (in the following we refer to our data-dependent choice as $\tilde{\alpha}^{\dph}$). In order to compare our derandomized $\ph$-knockoff procedure  to $\RB$-knockoff in the following, we start with some initial level $\alpha^{\ebh}\in [0,1]$ that should be chosen as in the $\RB$-knockoff procedure. If $\RB$-knockoff does not make any rejections with that $\alpha^{\ebh}$, we choose $\tilde{\alpha}^{\dph}>\alpha^{\ebh}$ as small as possible such that $R_{\tilde{\alpha}^{\dph}}^{\RB}$ is not empty. If $R_{\alpha^{\ebh}}^{\RB}\neq \emptyset$, then we choose $\tilde{\alpha}^{\dph}\leq \alpha^{\ebh}$ as small as possible while $R_{\tilde{\alpha}^{\dph}}^{\RB}=R_{\alpha^{\ebh}}^{\RB}$. In this way, our derandomized $\ph$-knockoff procedure always rejects as least as many hypotheses as $\RB$-knockoff. Furthermore, if $\RB$-knockoff makes rejections, then $\tilde{\alpha}^{\dph}\leq \alpha^{\ebh}$. Our general procedure is summarized in Algorithm~\ref{alg:posthoc-alpha_derand}.


\begin{remark}
    In case of $k=1$ the derandomized $\ph$-knockoff method (Algorithm~\ref{alg:posthoc-alpha_derand}) and the $\ph$-knockoff procedure (Algorithm~\ref{alg:posthoc-alpha}) are equivalent regardless of the chosen $\alpha_{\ebh}$. To see this, note that for $k=1$ we have $$E_i^{\text{avg}}=p\frac{\mathbbm{1}\{W_i \geq T_{\alpha^{\kn}}^{\ph}\}}{1+\sum_{l\in [p]} \mathbbm{1}\{W_l\leq -T_{\alpha^{\kn}}^{\ph}\}}.$$
    Hence, if $R_{\alpha^{\ebh}}^{\RB}\neq \emptyset$, then $R^{\dph}=R_{\alpha^{\ebh}}^{\RB}=\{i\in [p]: W_i\geq T_{\alpha^{\kn}}^{\ph}\}=R^{\ph}$ and $\tilde{\alpha}^{\dph}=p/(i^{\ebh} E_{(i^{\ebh})}^{\text{avg}})= \frac{1+\sum_{i\in [p]} \mathbbm{1}\{W_i\leq -T_{{\alpha^{\kn}}}^{\ph}\}}{\sum_{i\in [p]} \mathbbm{1}\{W_i\geq T_{{\alpha^{\kn}}}^{\ph}\}}=\tilde{\alpha}^{\ph}$. Furthermore, if $R_{\alpha^{\ebh}}^{\RB}= \emptyset$, it holds that $i^*=\sum_{i\in [p]} \mathbbm{1}\{W_i \geq T_{{\alpha^{\kn}}}^{\ph}\}$, which implies $\tilde{\alpha}^{\dph}=\tilde{\alpha}^{\ph}$ and $R^{\dph}=R^{\ph}$ in the same manner as before. Hence, the post-hoc framework unifies the non-derandomized and the derandomized knockoff method.
\end{remark}

\begin{algorithm}[H]
\caption{Derandomized $\ph$-knockoff}\label{alg:posthoc-alpha_derand}
 \textbf{Input:} Knockoff level $\alpha^{\kn}\in (0,1]$, initial level $\alpha_{\ebh}\in [0,1]$ and knockoff statistics $W_1^{(j)},\ldots,W_p^{(j)}$ for each $j\in [k]$.\\ 
 \textbf{Output:} Rejection set $R^{\dph}$ and post-hoc level $\tilde{\alpha}^{\dph}$.
\begin{algorithmic}[1]
\State Calculate $E_1^{\text{avg}},\ldots, E_p^{\text{avg}}$ as in \eqref{eq:e-values_derand}
\State Set $i^{\ebh}=\max\{i\in [p]:E_{(i)}^{\text{avg}}\geq p/(\alpha^{\ebh} i)\}$
\State Define the $\RB$-knockoff rejection set $R_{\alpha^{\ebh}}^{\RB}=\{i\in [p]: E_i^{\text{avg}} \geq p/(\alpha^{\ebh} i^{\ebh})\}$
\If{$R_{\alpha^{\ebh}}^{\RB}\neq \emptyset$}
\State $\tilde{\alpha}^{\dph}=p/(i^{\ebh} E_{(i^{\ebh})}^{\text{avg}}) $
\State $R^{\dph}=R_{\alpha^{\ebh}}^{\RB}$
\Else
\State $i^*=\argmax_{i\in [p]} i E_{(i)}^{\text{avg}}$  (take largest index if maximum is not unique)
\If{$i^* E_{(i^*)}^{\text{avg}} < p$}
\State $\tilde{\alpha}^{\dph}=\alpha^{\kn}$
\State $R^{\dph}=\emptyset$
\Else
\State $\tilde{\alpha}^{\dph}=p/(i^* E_{(i^*)}^{\text{avg}})$ 
\State $R^{\dph}=\{i\in [p]: E_i^{\text{avg}}\geq p/(i^*\tilde{\alpha}^{\dph}) \}$
\EndIf
\EndIf
\State \Return $R^{\dph}$, $\tilde{\alpha}^{\dph}$
\end{algorithmic}
\end{algorithm}


\subsection{Choosing the rejection set before the \texorpdfstring{$\alpha$}{alpha}\label{sec:rej_first}}

Until now, we have still followed the classic order in multiple testing, where one first defines the $\alpha$ (in our case, possibly data-adaptive) and then determines the largest possible rejection set. In practice, it may well be interesting to reverse this. This means that, based on the data and contextual information like explainable relevance of a variable, we determine a rejection set $R$ that is of interest to us. Then, we choose the smallest $\alpha$ for which $R$ is contained in the closure $\mcR_{\alpha}(\bE)$ (see \eqref{eq:closure_set} for the definition of the closure set). Due to the simultaneous control of the e-Closure Principle over rejection sets and significance levels \eqref{eq:sup_alpha}, this combination of $R$ and $\alpha$ provides post-hoc FDR control. In the following proposition we derive the smallest such $\alpha$ for the derandomized knockoffs above.
\begin{proposition}
\label{proposition:rej_first}
    Let $\bE$ be the family of e-values defined by \eqref{eq:local_avg} and $R$ be a rejection set depending on the data in an arbitrary way. Then,
    \begin{align}
        \inf\left\{\alpha >0: R\in \mcR_{\alpha}(\bE)\right\}= \frac{p}{|R| \min_{i\in R} E_i^{\text{avg}}} \eqqcolon \tilde{\alpha}^R  \notag
    \end{align}
    with the convention $\tilde{\alpha}^R=\infty$ if $\min_{i\in R} E_i^{\text{avg}}=0$. Therefore, we have
    \begin{align}
        \mathbb{E}\left[\frac{\FDP_{I_0}(R)}{\tilde{\alpha}^R} \right]\leq 1. \notag
    \end{align}
\end{proposition}
\begin{proof}
   We want to find the smallest $\alpha>0$ such that $R\in \mcR_{\alpha}(\bE)$, which is equivalent to
   \begin{align}
    \frac{1}{p} \sum_{i\in S} E_i^{\text{avg}} \geq \frac{\FDP_S(R)}{\alpha} \text{ for all } S\subseteq [p]. \notag
   \end{align}
   Since adding indices $i$ with $i\notin R$ to some $S$ keeps the right-hand side of the inequality the same but increases the left-hand side of the inequality, we can restrict to $S\subseteq R$. Let $E_{(1:R)}\geq \ldots \geq E_{(|R|:R)}$ be the sorted e-values with index in $R$. Then we can further reduce the property $R\in \mcR_{\alpha}(\bE)$ to 
   \begin{align} 
   \alpha \geq \frac{p\cdot s}{|R| \sum_{i=1+|R|-s}^{|R|} E_{(i:R)}} \text{ for all } s\in [|R|]. \notag
   \end{align}
   Since the right-hand side of the inequality is maximized for $s=1$, the assertion follows.
\end{proof}

In Section~\ref{sec:clinical_trial_predictive}, we illustrate this idea of choosing the rejection set first based on a real data trial.

\subsection{A uniform improvement of the derandomized \texorpdfstring{($\ph$-)}{ph-}knockoff procedure\label{sec:uniform_improvement}}
In this paragraph, we introduce a uniform improvement of the derandomized knockoff procedure for fixed $\alpha^{\ebh}$ by \citet{ren2024derandomised} and therefore also of the derandomized $\ph$-knockoff procedure. However, this uniform improvement is based on the e-Closure Principle and we have not found a short-cut. This makes this uniform improvement computationally inefficient, which is why we still recommend Algorithm~\ref{alg:posthoc-alpha_derand} for practical purposes. 

To derive the uniform improvement, consider the local e-values $(E_S)_{S\subseteq [p]}$ that recover the $\RB$-knockoff procedure, defined in \eqref{eq:local_avg}. One can write $E_S$, $S\subseteq [p]$, as
\begin{align}
    E_S=\frac{1}{k}\left(\sum_{j=1}^k \frac{\sum_{i\in S} \mathbbm{1}\{W_i^{(j)} \geq T_{{\alpha^{\kn}}}^{\ph, (j)}\} }{1+\sum_{i\in [p]} \mathbbm{1}\{W_i^{(j)}\leq -T_{{\alpha^{\kn}}}^{\ph, (j)} \}} \right). \notag
\end{align}
Now, due to the proof for FDR control of $\BC$-knockoff by \citet{barber2015controlling}, it follows that one can also define local e-values by changing $[p]$ in the denominators to $S$:
\begin{align}
    \tilde{E}_S=\frac{1}{k}\left(\sum_{j=1}^k \frac{\sum_{i\in S} \mathbbm{1}\{W_i^{(j)} \geq T_{{\alpha^{\kn}}}^{\ph, (j)}\} }{1+\sum_{i\in S} \mathbbm{1}\{W_i^{(j)}\leq -T_{{\alpha^{\kn}}}^{\ph, (j)} \}} \right). \label{eq:improved_derand}
\end{align}
Obviously, $\tilde{E}_S\geq E_S$ for all $S\subseteq [p]$ and thus the e-Closure Principle with the local e-values $(\tilde{E}_S)_{S\subseteq [p]}$, called closed knockoff in the following, never rejects less hypotheses than the $\RB$-knockoff procedure. This was already noted by \citet{xu2025bringing} for $k=1$ (knockoff without derandomization). However, while they showed that in case of $k=1$  closed knockoff is just equivalent to $\BC$-knockoff, it becomes a real uniform improvement for $k>1$ as we illustrate in the following example.

\begin{example}
    Let $p=5$, $k=2$ and $\alpha^{\ebh}=1/2$. Suppose $W_1^{(1)}\leq -T_{{\alpha^{\kn}}}^{\ph, (1)}$, $W_i^{(1)}\geq T_{{\alpha^{\kn}}}^{\ph, (1)}$ for $i\in \{2,3,4,5\}$,  $W_2^{(2)}\leq -T_{{\alpha^{\kn}}}^{\ph, (2)}$ and $W_i^{(2)}\geq T_{{\alpha^{\kn}}}^{\ph, (2)}$ for $i\in \{1,3,4,5\}$. One can easily check that in this case $E_1^{\text{avg}}=E_2^{\text{avg}}=5/4$ and $E_3^{\text{avg}}=E_4^{\text{avg}}=E_5^{\text{avg}}=5/2$ and thus $R_{\alpha^{\ebh}}^{\RB}=\emptyset$. However, $\{3,4,5\} \in \mathcal{R}_{\alpha^{\ebh}}(\boldsymbol{\tilde{E}})$, where $\boldsymbol{\tilde{E}}=(\tilde{E}_S)_{S\subseteq [p]}$ is given by \eqref{eq:improved_derand}. To see this, note that $\min_{S\subseteq [p]: |S\cap \{4,5,6\}|=3} \tilde{E}_S=2 $, $\min_{S\subseteq [p]: |S\cap \{4,5,6\}|=2} \tilde{E}_S=3/2 $ and $\min_{S\subseteq [p]: |S\cap \{4,5,6\}|=1} \tilde{E}_S=1 $ and therefore $\tilde{E}_S\geq 2\FDP_S(\{4,5,6\})$ for all $S\subseteq [p]$.
\end{example}

\subsection{Derandomized knockoffs for PFER control \label{sec:derandomized_PFER}}

\citet{ren2023derandomizing} introduced a derandomized version of a knockoff procedure \citep{janson2016familywise} for PFER control. We call this procedure $\RWC$-knockoff in the following.  The PFER is defined as the expected number of false discoveries $\PFER =\mathbb{E}[|R\cap I_0|]=\mathbb{E}[\NFD_{I_0}(R)]$, where $\NFD_{I_0}(R)=|R\cap I_0|$ for some rejection set $R$.  

Let $\nu>0$ be an integer and $$T_{\nu}^{\text{PFER}}=\inf\left\{t>0:\sum_{i\in [p]} \mathbbm{1}\{W_i\leq -t\}=\nu-1 \right\} $$ be the stopping time for the PFER knockoff procedure introduced by \citet{janson2016familywise}. Again, we denote by $T_{\nu}^{\text{PFER}, (j)}$, $j\in [k]$, the stopping time in the $j$-th knockoff run. Then $\RWC$-knockoff is given by the rejection set 
\begin{align}
    R_{\eta}^{\RWC}=\left\{i\in [p]: \left(\frac{1}{k}\sum_{j=1}^k \mathbbm{1}\{W_i^{(j)}\geq T_{\nu}^{\text{PFER}, (j)} \} \right) \geq \eta \right\},  \label{eq:rej_PFER}
\end{align}
where $\eta\in (0,1]$ is some prespecified parameter that determines after how many \enquote{rejections} in the different knockoff runs a hypothesis is finally rejected. \citet{ren2023derandomizing} showed that $\RWC$-knockoff controls the PFER at level $\nu/\eta $. Again, there is no optimal choice for $\eta$ and they propose sample splitting to obtain a data-driven parameter. We show that $\eta$ can be chosen entirely based on the data while retaining a sensible guarantee.

\begin{theorem}\label{theo:PFER}
    For some prespecified integer $\nu>0$ and the rejection set $R_{\eta}^{\RWC}$ in \eqref{eq:rej_PFER} it holds that
    \begin{align}
    \mathbb{E}\left[\sup_{\eta\in (0,1]} \NFD_{I_0}(R_{\eta}^{\RWC})\eta  \right]\leq \nu.  \notag  
    \end{align}
\end{theorem}

\begin{proof}
    For the proof we implicitly use a modified version of the e-Closure Principle for PFER control by \citet{xu2025bringing}. 
    Define the local e-values $E_S$, $S\subseteq [p]$, as in \eqref{eq:local_avg} but with the stopping times $T_{\nu}^{\text{PFER}, (j)}$ instead of $T_{\alpha^{\kn}}^{\ph,(j)}$. To see that $\mathbb{E}_{H_0^S}[E_S]\leq 1$, see for example Remark 4 in \citep{ren2024derandomised}. For any $\eta \in (0,1]$ it holds that
    \begin{align}
        \frac{\eta}{\nu} \NFD_{I_0}(R_{\eta}^{\RWC}) &= \frac{\eta}{\nu} \sum_{i\in I_0} \mathbbm{1}\left\{\left(\frac{1}{k}\sum_{j=1}^k \mathbbm{1}\{W_i^{(j)}\geq T_{\nu}^{\text{PFER}, (j)} \} \right) \geq \eta \right\} \notag  \\
        &\leq \frac{1}{\nu} \sum_{i\in I_0} \frac{1}{k} \sum_{j=1}^k \mathbbm{1}\{W_i^{(j)}\geq T_{\nu}^{\text{PFER}, (j)} \} = E_{I_0}, \notag
    \end{align}
    where the inequality follows from $\mathbbm{1}\{x\geq 1\} \leq x$ for all $x\geq 0$. Since $\mathbb{E}[E_{I_0}]\leq 1$, the claim follows.
\end{proof}
Hence, we can choose $\tilde{\eta}$ based on the data while maintaining
    \begin{align}
    \mathbb{E}\left[\frac{\NFD_{I_0}(R_{\tilde{\eta}}^{\RWC})\tilde{\eta}}{\nu}\right]\leq 1. \label{eq:post_hoc_eta}
\end{align} 

One possibility to choose $\tilde{\eta}$ in practice would be to just calculate $R_{\eta}^{\RWC}$ for every $\eta \in \{r/k: r\in [k]\}$ and then choose $\tilde{\eta}=\eta$ for the $\eta$ that leads to the most desirable $(R_{\eta}^{\RWC}, \eta)$ pair. In Algorithm~\ref{alg:posthoc-PFER} we propose to choose the one that maximizes the product $R_{\eta}^{\RWC}\eta$. It should be noted that this implementation does not guarantee a \enquote{free-lunch} improvement over the original method, rejection set is not always larger. However, if the $(R_{\eta}^{\RWC}, \eta)$ combination of the original method seems more desirable, we could also switch to that post-hoc.

\begin{algorithm}
\caption{Derandomized $\ph$-knockoff for PFER control}\label{alg:posthoc-PFER}
 \textbf{Input:} Knockoff level $\nu \in \mathbb{N}$ and knockoff statistics $W_1^{(j)},\ldots,W_p^{(j)}$ for each $j\in [k]$.\\ 
 \textbf{Output:} Rejection set $R^{\phnu}$ and post-hoc level $\tilde{\eta}^{\phnu}$.
\begin{algorithmic}[1]
\State Define $\mathcal{K}=\{r/k: r\in [k]\}$
\State Define $\tilde{\eta}^{\phnu}=\argmax_{\eta \in 
 \mathcal{K}} \{R_{\eta}^{\RWC} \eta\}$, where $R_{\eta}^{\RWC}$ is defined as in \eqref{eq:rej_PFER}
 \State Set \begin{align*}
    R^{\phnu}=\left\{i\in [p]: \left(\frac{1}{k}\sum_{j=1}^k \mathbbm{1}\{W_i^{(j)}\geq T_{\nu}^{\text{PFER}, (j)} \} \right) \geq \tilde{\eta}^{\phnu} \right\}  
\end{align*}

\If{$R^{\phnu} = \emptyset$}
\State $\tilde{\eta}^{\phnu}=1/2$

\EndIf
\State \Return $R^{\phnu}$, $\tilde{\eta}^{\phnu}$
\end{algorithmic}
\end{algorithm}

\begin{remark}
    In the same way as in Section~\ref{sec:uniform_improvement} we can also uniformly improve $\RWC$-knockoff by using $(\tilde{E}_S)_{S\subseteq [p]}$ in \eqref{eq:improved_derand} with $T_{\nu}^{\text{PFER}, (j)}$ instead of $T_{\alpha^{\kn}}^{\ph,(j)}$ and then apply the e-Closure Principle for PFER control by \citet{xu2025bringing}. 
\end{remark}

Furthermore, as described in Section~\ref{sec:rej_first} for FDR control, we can also choose the rejection set first and then select the $\eta$ afterwards. 
\begin{proposition}
    Let $R$ be a rejection set depending on the data in an arbitrary way and $$\tilde{\eta}^R\coloneqq \min_{i\in R} \frac{1}{k} \sum_{j=1}^k \mathbbm{1}\{W_i^{(j)} \geq T_{\nu}^{\text{PFER}, (j)} \}.$$
    Then $\mathbb{E}[\NFD_{I_0}(R)\tilde{\eta}^R]\leq \nu$.
\end{proposition}
\begin{proof}
    Let $E_{I_0}$ be defined as in the proof of Theorem~\ref{theo:PFER}. Then,
    \begin{align}
        \mathbb{E}[\NFD_{I_0}(R)\tilde{\eta}^R/\nu]&=
        \mathbb{E}\left[\frac{1}{\nu}\sum_{i\in I_0} \mathbbm{1}\{i\in R\} \min_{l\in R} \frac{1}{k} \sum_{j=1}^k \mathbbm{1}\{W_l^{(j)} \geq T_{\nu}^{\text{PFER}, (j)} \}\right] \notag \\
        &\leq \mathbb{E}\left[\frac{1}{\nu}\sum_{i\in I_0} \mathbbm{1}\{i\in R\}  \frac{1}{k} \sum_{j=1}^k \mathbbm{1}\{W_i^{(j)} \geq T_{\nu}^{\text{PFER}, (j)} \}\right] \notag  \\
        &= \mathbb{E}\left[\frac{1}{\nu}\sum_{i\in I_0} \frac{1}{k} \sum_{j=1}^k \mathbbm{1}\{W_i^{(j)} \geq T_{\nu}^{\text{PFER}, (j)} \}\right] \notag  \\
        &= \mathbb{E}\left[E_{I_0}\right] \leq 1. \notag
    \end{align}
\end{proof}





\section{Simulated experiments\label{sec:sims}}

\subsection{Generating data}
\label{seq:sim_generating_data}
Our simulation framework builds on the setup of \citet{ren2024derandomised}. 
As in the original work, we simulate data from both a Gaussian linear model and a logistic regression model. 
To reflect the two simulation regimes explored in Section~\ref{sec:sims_results}, low-dimensional and high-dimensional, we describe each separately below.

Across all experiments, the variable vector $\mathbf{X} \in \mathbb{R}^p$ is generated as
\[
\mathbf{X} \sim \mathcal{N}(0,\Sigma), \qquad \Sigma_{jk} = 0.5^{|j-k|}.
\]
The response conditional on $\mathbf{X}$ is drawn from either
\[
\textnormal{Gaussian model: } Y \mid \mathbf{X} \sim \mathcal{N}(\mathbf{X}^\top \beta, 1), 
\qquad
\textnormal{Logistic model: } 
Y \mid \mathbf{X} \sim \textnormal{Bernoulli}\!\left(\frac{e^{\mathbf{X}^\top \beta}}{1 + e^{\mathbf{X}^\top \beta}} \right).
\]

\paragraph{Construction of $\beta$.}
To define the regression coefficients, we first sample a signal vector 
$\bar{\beta} \in \mathbb{R}^{p_{\text{relevant}}}$ with i.i.d. entries from $\mathcal{N}(A,1)$, 
where $A$ controls signal amplitude. 
The full vector $\beta \in \mathbb{R}^p$ is obtained by embedding these non-zero entries into $p$ coordinates at approximately equally spaced locations, alternating their signs and scaling them by $\sqrt{n}$:
\[
\beta =
\big(
\underbrace{0, \ldots, 0}_{z}, \tfrac{\bar{\beta}_1}{\sqrt{n}},
\underbrace{0, \ldots, 0}_{z}, -\tfrac{\bar{\beta}_2}{\sqrt{n}}, \ldots
\big).
\]
The spacing parameter $z$ is chosen as a function of $p$ and $p_{\text{relevant}}$, ensuring that signals are distributed evenly throughout the variable vector. 
As in \citet{ren2024derandomised}, this alternating-sign, evenly spaced construction reduces correlation between active variables while maintaining signal heterogeneity.

\subsubsection{Low-dimensional regime (Sections~\ref{sec:sims_original}, \ref{sec:sims_derandomized_PFER} and \ref{sec:sims_calibrated})}
To investigate settings where classical knockoffs experience loss in power, we first consider the low-dimensional case, where
\[
p = 50, \qquad n = 250, \qquad p_{\text{relevant}} \in \{2,\ldots,10\}.
\]
For each choice of $p_{\text{relevant}}$, the spacing $z$ is adapted to distribute non-null variables evenly.
We vary the signal amplitude on the grid
\[
A \in \{2,\dots,10\} \text{ (Gaussian)}, \qquad
A \in \{6,\dots,16\} \text{ (Logistic)}.
\]
Results for this regime are presented in:
\begin{itemize}[itemsep=0.1pt,parsep=0pt,topsep=2pt]
    \item Figures~\ref{fig:posthoc_vs_traditional_p_relevant} and \ref{fig:posthoc_vs_traditional_amplitude},  where we compare our post-hoc knockoffs (Algorithm \ref{alg:posthoc-alpha}) to the original knockoff procedure of \citet{candes2018panning}.
    \item Figure~\ref{fig:posthoc_vs_calibration_models}, where we compare our post-hoc knockoffs (Algorithm \ref{alg:posthoc-alpha}) to the calibration knockoffs procedure of \citet{luo2025improving}.
    \item Figure~\ref{fig:posthoc_vs_derandomized_PFER}, where we compare our derandomized post-hoc knockoffs for PFER control (Algorithm \ref{alg:posthoc-PFER})  against the original derandomized procedure for PFER proposed by \citet{ren2023derandomizing}.
\end{itemize}

\subsubsection{High-dimensional regime (Section~\ref{sec:sims_derandomized_FDR})}
To enable direct comparison with the derandomized knockoff procedure of \citet{ren2024derandomised}, we also simulate $n=1000$ samples in the same high-dimensional settings, namely
\[
p = 800 \text{ (Gaussian)}, \qquad p = 600 \text{ (Logistic)}.
\]
The original study considered $p_{\text{relevant}}=80$ with $z=9$ for the Gaussian model and $p_{\text{relevant}}=60$ with $z=11$ for the logistic model.
We adopt the same parameterization, and further study sparser cases by varying
\[
p_{\text{relevant}} \in \{20, 80\} \text{ (Gaussian)}, \qquad
p_{\text{relevant}} \in \{15, 60\} \text{ (Logistic)}.
\]
The signal amplitude ranges are $A \in \{2,\dots,8\}$ for the Gaussian and $A \in \{4,\dots,16\}$ for the Logistic models.

Results for this regime appear in Figure ~\ref{fig:posthoc_vs_derandomized_FDR}, where we benchmark our derandomized post-hoc procedure for controlling FDR (Algorithm \ref{alg:posthoc-alpha_derand}) against the original derandomized knockoffs for FDR proposed by \citet{ren2024derandomised}.

\subsection{Evaluation measures}
For each model and each signal amplitude setting, the dataset $\mathcal{D} = (\mathbf{X}, Y) $ is generated independently D times, with D = 2000 for the low-dimensinal regime simulations (Sections~\ref{sec:sims_original}, \ref{sec:sims_derandomized_PFER} and \ref{sec:sims_calibrated}), and D = 200 for the high-dimensional (Section~\ref{sec:sims_derandomized_FDR}).  For each draw of the dataset, we apply the knockoffs procedures that we are comparing. For each model, each signal amplitude, and each of the two methods, we measure performance in various different ways. First, for all the methods we estimate the power as
\[
\textnormal{Power} = \frac{1}{D}\sum_{d = 1}^{D} 
\frac{| R_{d} \cap I_1|}{|I_1|},
\]
where $R_{d}$ denotes the rejected set in the $d$-th simulation run, and $I_1$ is the set of \emph{non-null} (or \emph{relevant}) variables.

Next, to assess Type~I error control, we report different metrics depending on whether the procedure targets the FDR or the PFER.

\subsubsection{For the methods that control the FDR (Sections \ref{sec:sims_original}, \ref{sec:sims_derandomized_FDR} and \ref{sec:sims_calibrated})}
We measure the average false discovery proportion (i.e., empirical FDR) across the simulations:
\[
\textnormal{Average FDP} = \frac{1}{D}\sum_{d = 1}^{D} 
\frac{| R_{d} \cap I_0|}{|R_{d}| \vee 1}, 
\]
where $I_0$ is the set of \emph{null} variables. Furthermore,  we consider the deviation of the ratio FDP/${\alpha}$ from $1$ to validate that  eq. \eqref{eq:post_hoc_alpha} is indeed satisfied and investigate the tightness of the control.  For the traditional knockoffs ${\alpha}$ is the prespecified parameter level $\alpha^{\kn}$, for the derandomized knockoffs it is $\alpha^{\ebh},$  while for our post-hoc approaches it is the data-dependent level $\tilde{\alpha}^{\ph}$ without derandomization and $\tilde{\alpha}^{\dph}$ in the derandomized case, returned by Algorithms \ref{alg:posthoc-alpha} and \ref{alg:posthoc-alpha_derand} respectively.
More precisely, we consider
\begin{align}
   \textnormal{Average ratio FDP/}{\alpha} = \frac{1}{D}\sum_{d = 1}^{D}  \frac{| R_{d} \cap I_0|}{|R_{d}| \vee 1}  \frac{1}{{\alpha}_d}, \notag
\end{align} 
where $\alpha_d$ is the significance level $\alpha$ obtained in the $d$-th simulation run.

\subsubsection{For the methods that control the PFER (Section \ref{sec:sims_derandomized_PFER})}
We measure the average number of false discoveries across the simulations:
\[
\textnormal{Average FD} = \frac{1}{D}\sum_{d = 1}^{D} 
{| R_{d} \cap I_0|}.
\]
Furthermore,  we consider the deviation of the ratio 
$\frac{FD {\eta}}{\nu}$ from $1$ to validate that  eq. \eqref{eq:post_hoc_eta} is indeed satisfied and investigate the tightness of the control. For the derandomized knockoffs that control the PFER, following the configuration in \citet{ren2023derandomizing}, we set $\eta = 0.50$. For our post-hoc approach, the corresponding level is the data-dependent quantity $\tilde{\eta}^{\ph},$ returned by Algorithm \ref{alg:posthoc-PFER}. More precisely, we consider
\begin{align}
   \textnormal{Average ratio FD} {\eta}/\nu = \frac{1}{100}\sum_{d = 1}^{100}  \frac{| R_{d} \cap I_0|{\eta}_{d}}{\nu}, \notag
\end{align} 
where ${\eta}_{d}$ is the parameter $\eta$ obtained in the $d$-th simulation run.

\subsection{Results}
\label{sec:sims_results}
This section is structured as follows. In Subsection~\ref{sec:sims_original}, we begin by comparing our post-hoc knockoff procedure with the original knockoff filter of \citet{candes2018panning}. In Subsection~\ref{sec:sims_calibrated}, we then compare our method with the recently proposed calibrated knockoff procedure of \citet{luo2025improving}, which overcomes a key limitation of the original approach by mitigating the thresholding phenomenon. Next, in Subsection~\ref{sec:sims_derandomized_FDR}, we compare our derandomized post-hoc variant with the derandomized knockoffs of \citet{ren2024derandomised} designed for FDR control. Finally, in Subsection~\ref{sec:sims_derandomized_PFER}, we complete the analysis by evaluating our method against the derandomized knockoffs of \citet{ren2023derandomizing} for controlling PFER.

\subsubsection{Comparing against original knockoffs  \texorpdfstring{\citep{candes2018panning}}{Candes et al. (2018)}}
\label{sec:sims_original}
In this section we illustrate the performance of our proposed \textbf{post-hoc knockoff} (presented in Algorithm \ref{alg:posthoc-alpha}) and compare to the \textbf{original knockoff} procedure from \citet{candes2018panning}. The simulation setup is the low-dimensional regime presented in  Section~\ref{seq:sim_generating_data}. Through the simulation we set the $\alpha^{\kn}$ level to $0.20$, which corresponds to a Type~I error (FDR) of $0.20$ for the traditional knockoffs, and also serves as the initial significance level for our post-hoc knockoffs (Algorithm~\ref{alg:posthoc-alpha}).

In Figure~\ref{fig:posthoc_vs_traditional_p_relevant}, we evaluate the performance of the different methods as a function of $p_{\text{relevant}}$, with signal strength held fixed. When the number of truly relevant variables is small, traditional methods suffer from very low power. In contrast, post-hoc knockoffs offer the flexibility to substantially increase power by increasing the nominal level when the original method does not make any rejections. Across all settings, the average FDP remains close to the nominal target level $\alpha^{\kn}$, similar to that of the original knockoff procedure. Furthermore, when power is similar between the two methods (e.g., Gaussian setting for $p_{\text{relevant}} > 6$), the post-hoc approach can achieve stricter error control, resulting in effective $\alpha$ values less than $0.20$. In the last row, we plot the average $\textnormal{FDP/}{\alpha}$ ratio, which remains consistently below 1. This confirms that the controlled error rate, as defined in equation (\ref{eq:post_hoc_alpha}), is indeed maintained.

Similar behavior is observed when $p_{\text{relevant}}$ is held constant and the signal amplitude is varied, as shown in Figure~\ref{fig:posthoc_vs_traditional_amplitude}. Post-hoc knockoffs consistently achieve higher power than traditional methods, especially at moderate signal strengths. For high signal amplitudes, our method achieves high power, and at the same time, it can provide more stringent control, i.e., $\alpha<0.20.$ Interestingly, the average FDP remains below the nominal $\alpha$ level in all scenarios. Furthermore, the average $\textnormal{FDP/}{\alpha}$ ratio remains below 1, again confirming that the error control in equation (\ref{eq:post_hoc_alpha}) holds.
\begin{figure}[htb]
\vspace{-0.5cm}
\centering
\subfloat[\textbf{Gaussian}\label{fig:gauss}]{%
    \begin{minipage}[t]{0.45\textwidth}
        \centering
        \includegraphics[width=\textwidth]{./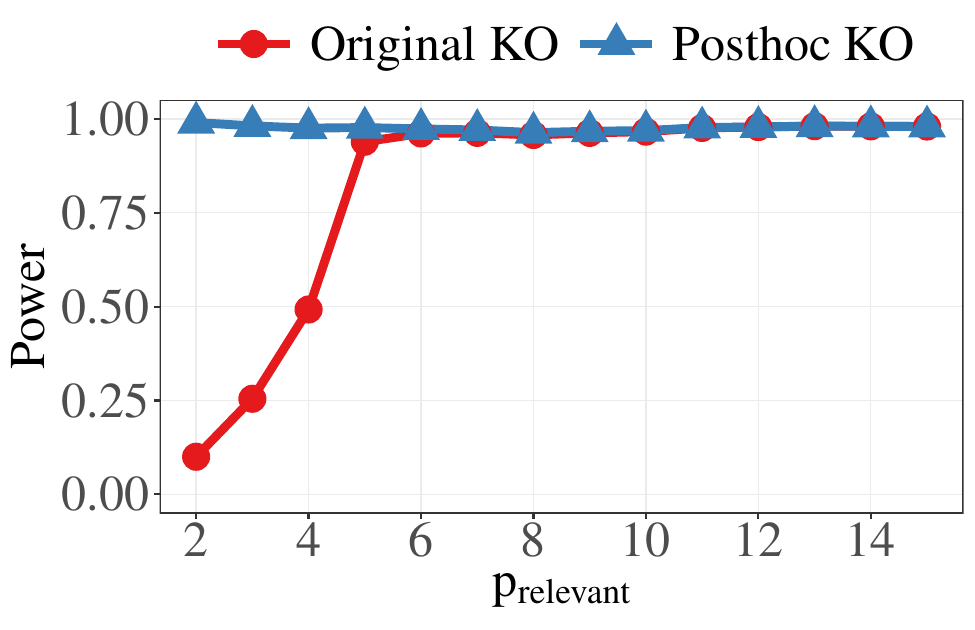}\\
        \vfill
        \includegraphics[width=\textwidth]{./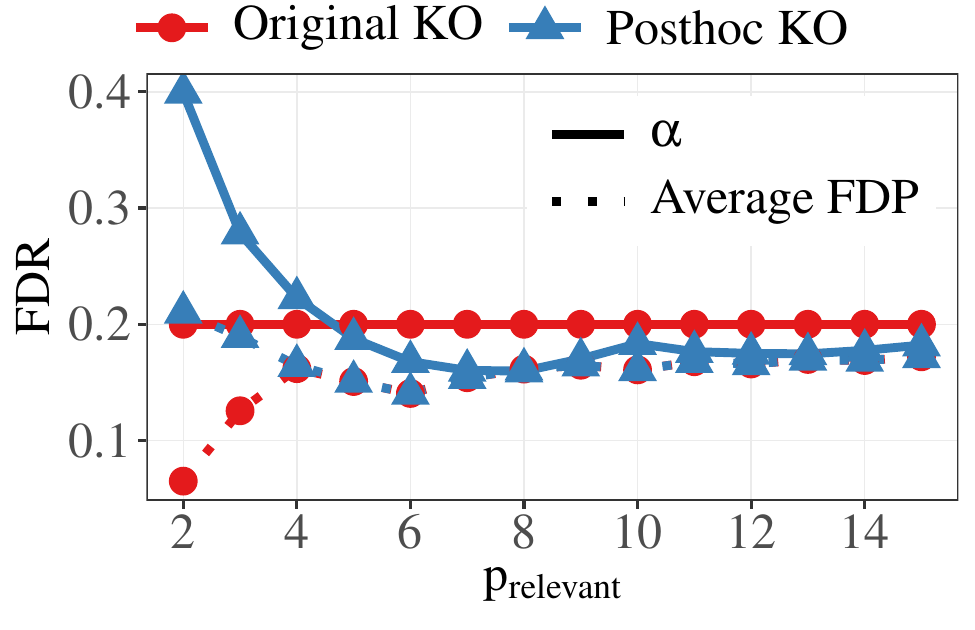}\\
        \vfill
        \includegraphics[width=\textwidth]{./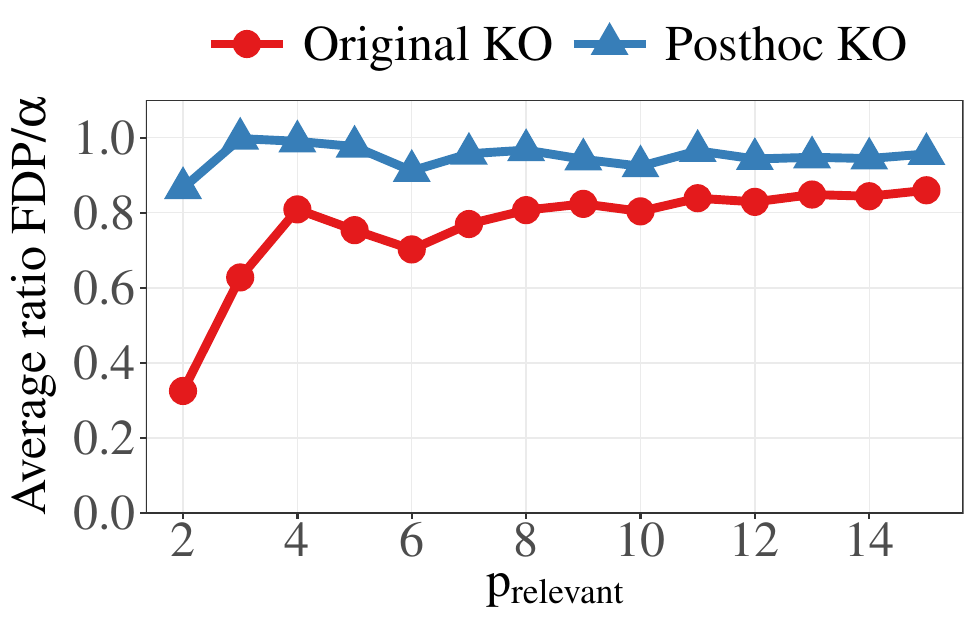}
    \end{minipage}%
}
\ \vrule \ 
\subfloat[\textbf{Logistic}\label{fig:logistic}]{%
    \begin{minipage}[t]{0.45\textwidth}
        \centering
        \includegraphics[width=\textwidth]{./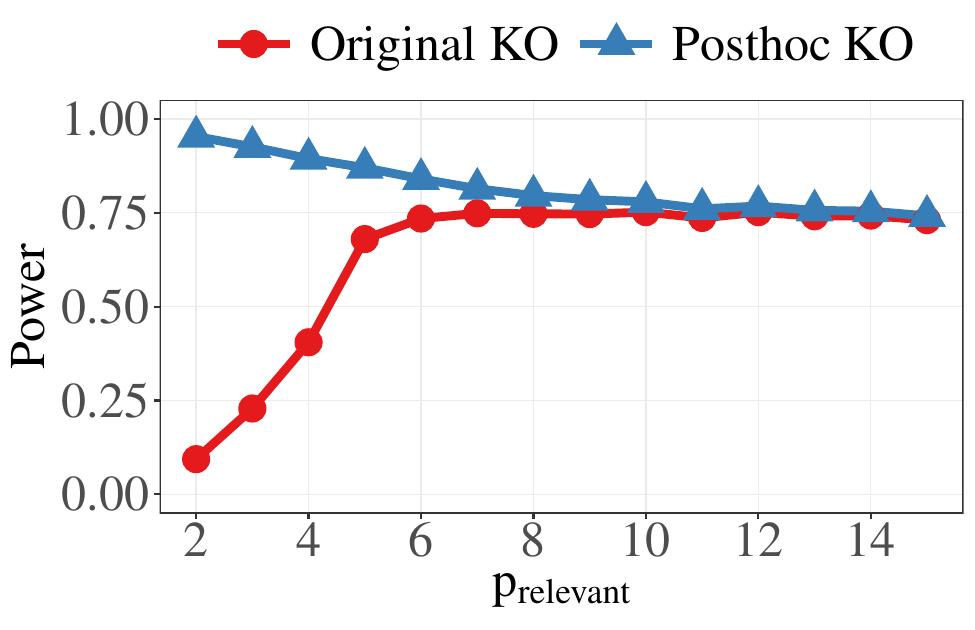}\\
        \vfill
        \includegraphics[width=\textwidth]{./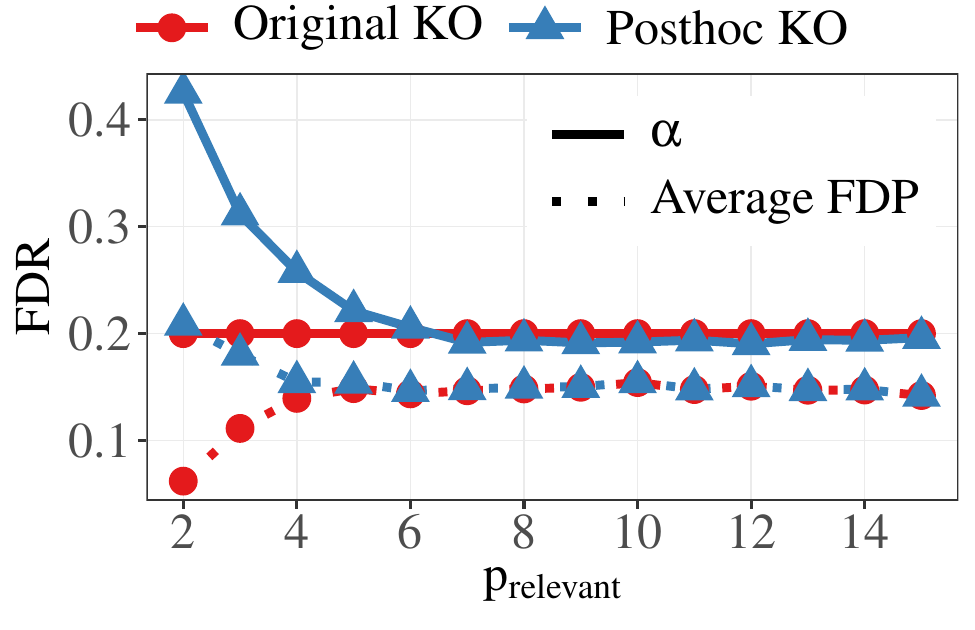}\\
        \vfill
        \includegraphics[width=\textwidth]{./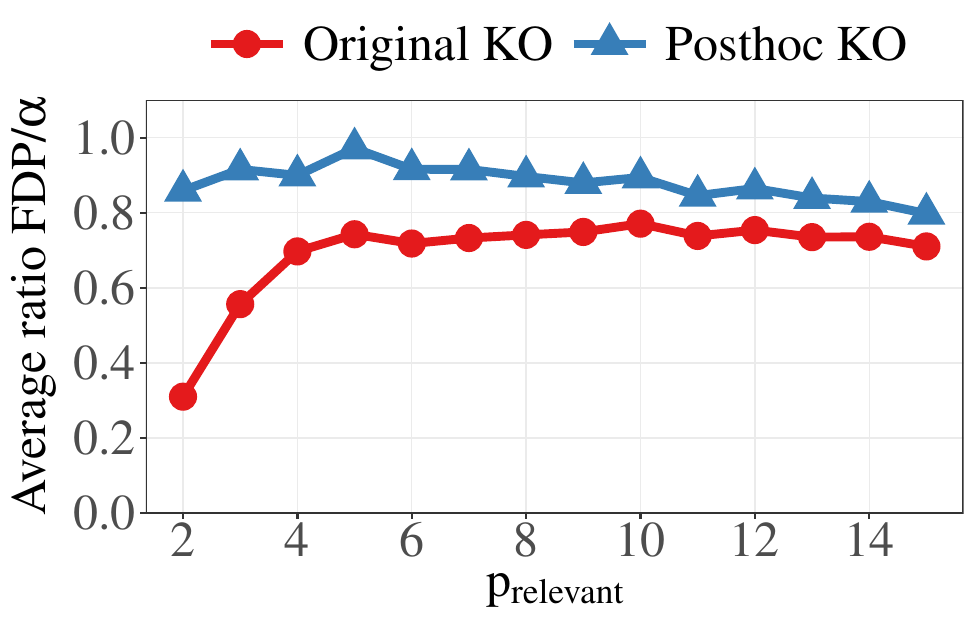}
    \end{minipage}%
}
\caption{Performance comparison between the proposed derandomized post-hoc knockoff procedure and the original derandomized knockoffs method of \citet{candes2018panning} across various numbers of actual relevant variables $p_{\text{relevant}}$. The signal amplitude is fixed at $A = 8$ and $A = 14$ for the Gaussian and Logistic model respectively.  The first row reports the realized power, the second row reports the nominal level $\alpha$ (for the original method this is the user specified level $\alpha^{\kn}$, whereas for the post-hoc knockoff procedure it is the data-dependent level $\tilde{\alpha}^{\ph}$ returned by Algorithm~\ref{alg:posthoc-alpha}) together with the estimated average FDP, and the last row reports the average ratio $\textnormal{FDP/}{\alpha}$. All values are averaged over 2000 runs.} 
\label{fig:posthoc_vs_traditional_p_relevant}
\vspace{-0.5cm}
\end{figure}
\begin{figure}[htb]
\vspace{-0.5cm}
\centering
\subfloat[\textbf{Gaussian}\label{fig:gauss_ampl}]{%
    \begin{minipage}[t]{0.45\textwidth}
        \centering
        \includegraphics[width=\textwidth]{./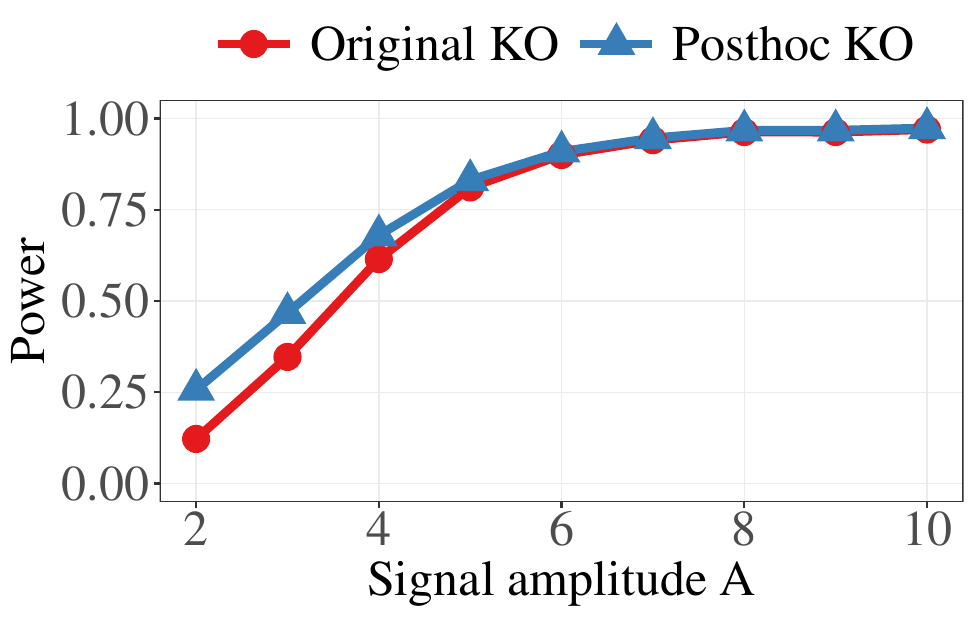}\\
        \vfill
        \includegraphics[width=\textwidth]{./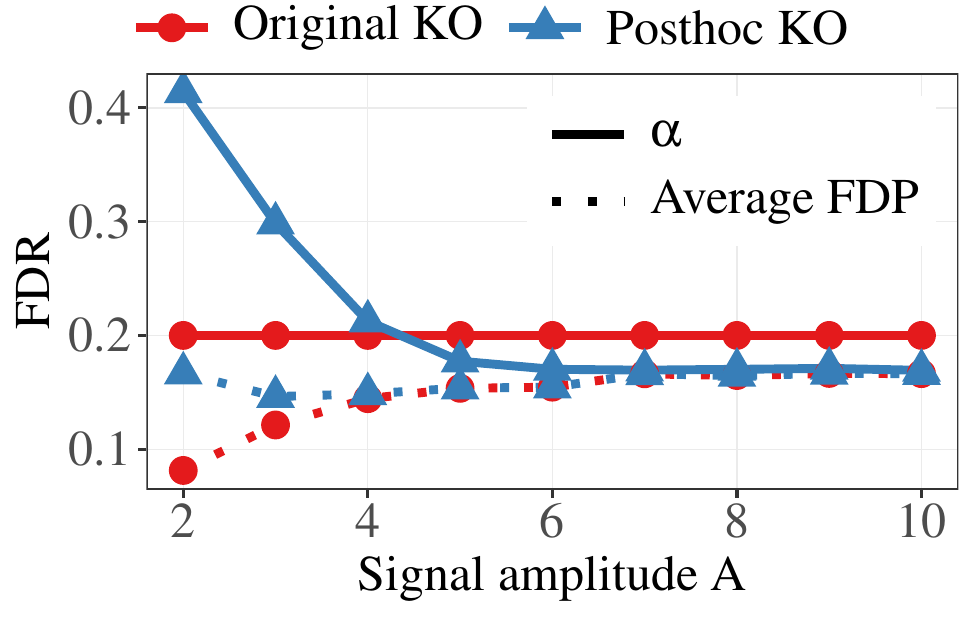}\\
        \vfill
        \includegraphics[width=\textwidth]{./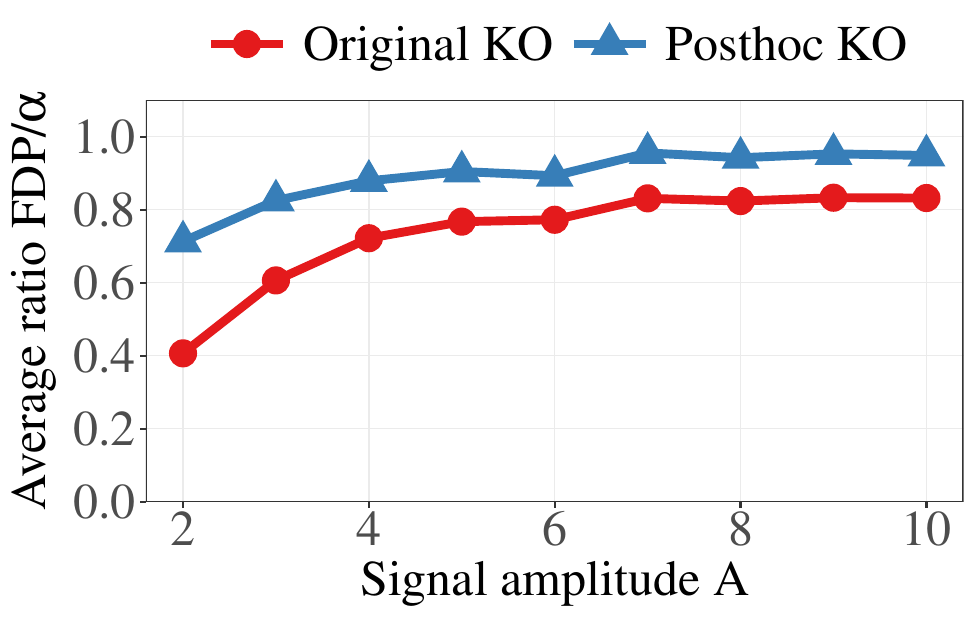}
    \end{minipage}%
}
\ \vrule \ 
\subfloat[\textbf{Logistic}\label{fig:logistic_ampl}]{%
    \begin{minipage}[t]{0.45\textwidth}
        \centering
        \includegraphics[width=\textwidth]{./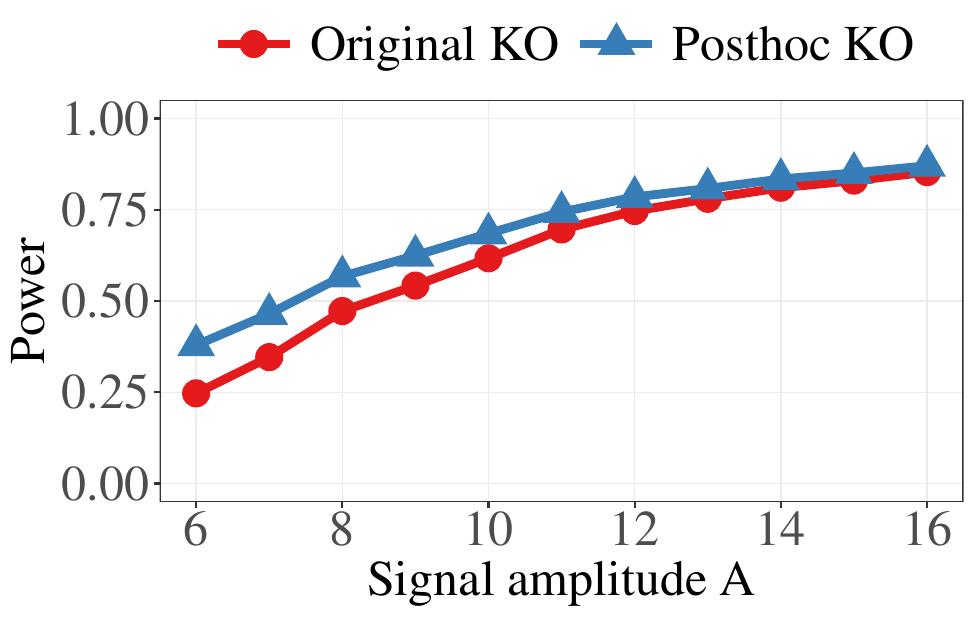}\\
        \vfill
        \includegraphics[width=\textwidth]{./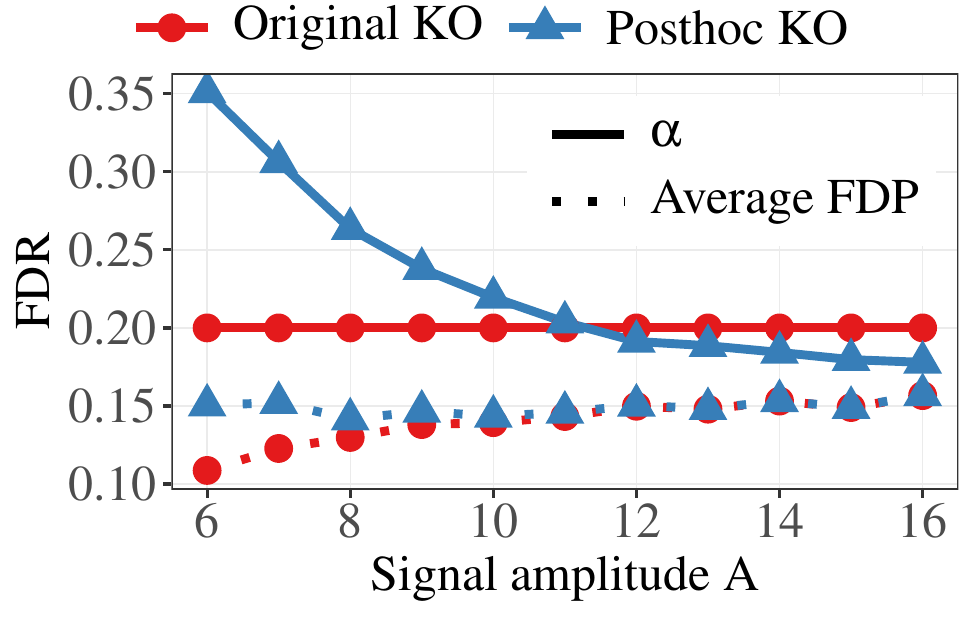}\\
        \vfill
        \includegraphics[width=\textwidth]{./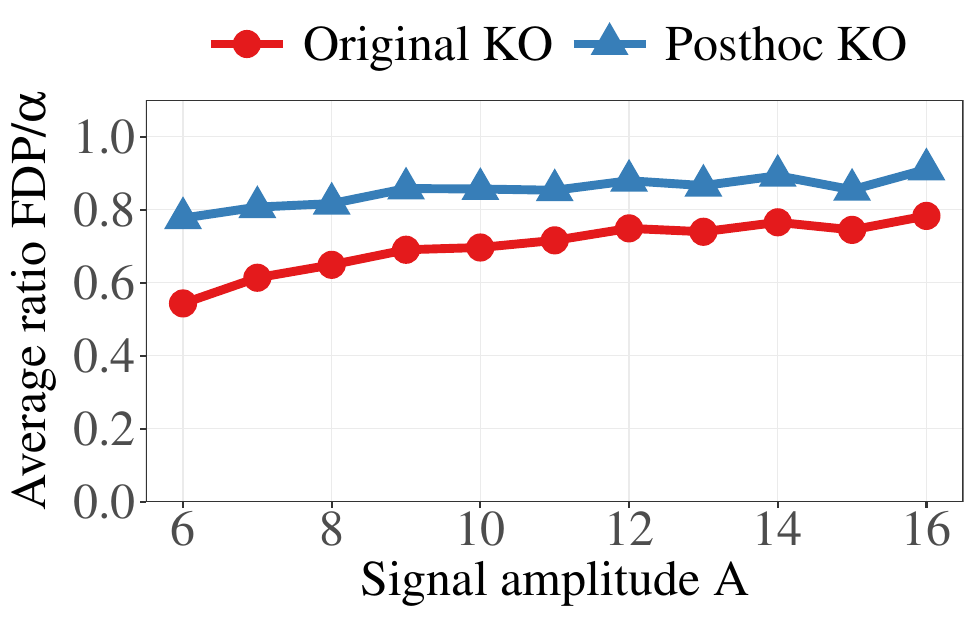}
    \end{minipage}%
}
\caption{Performance comparison between the proposed derandomized post-hoc knockoff procedure and the original derandomized knockoffs method of \citet{candes2018panning} across various signal amplitude values $A$. The number of relevant variables is fixed at $p_{\text{relevant}} = 6$.  
The first row reports the realized power, the second row reports the nominal level $\alpha$ (for the original method this is the user specified level $\alpha^{\kn}$, whereas for the post-hoc knockoff procedure it is the data-dependent level $\tilde{\alpha}^{\ph}$ returned by Algorithm~\ref{alg:posthoc-alpha}) together with the estimated average FDP, and the last row reports the average ratio $\textnormal{FDP/}{\alpha}$. All values are averaged over 2000 runs.}
\label{fig:posthoc_vs_traditional_amplitude}
\vspace{-0.5cm}
\end{figure}

\textbf{Key takeaways from the comparison with the original knockoffs:} Our post‑hoc derandomized knockoff procedure lead to power gains in weak‑signal regimes by increasing the significance level in cases where the original method does not make any discoveries. In addition, in scenarios where both methods attain similar power, the post‑hoc approach can still provide tighter error control, reflected in lower $\alpha$ values. It should be noted that our improvement is a \enquote{free-lunch}: It only reports a larger significance level if the original method did not produce any rejections.

\subsubsection{Comparing against calibrated knockoffs  \texorpdfstring{\citep{luo2025improving}}{Luo et al. (2025)}}
\label{sec:sims_calibrated}
In this section we illustrate the performance of our proposed \textbf{post-hoc knockoff} (presented in Algorithm~\ref{alg:posthoc-alpha}) and compare to the \textbf{calibration knockoffs} method of \citet{luo2025improving}. Calibration knockoffs were proposed to address a limitation of the original knockoff filter: because the selection threshold is data-dependent, it can become overly conservative, leading to an empty or very small rejection set (the so-called \emph{thresholding phenomenon}) even when the signal is not negligible. The calibration knockoffs approach mitigates this issue by calibrating the selection threshold using additional randomness, aiming to stabilize the cutoff and increase the probability of making discoveries while maintaining the desired Type~I error control. The simulation setup is the low-dimensional regime presented in  Section~\ref{seq:sim_generating_data}. Through the simulation we set the $\alpha^{\kn}$ level to $0.20$ FDR, i.e., this is the nominal level of the calibration knockoffs, and the initial significance level of our post-hoc knockoffs.

Figure~\ref{fig:posthoc_vs_calibration_models} reports results under the Gaussian linear and logistic regression models with $p=50$, for two sparsity levels $p_{\text{relevant}}\in\{10,15\}$. Across all configurations, the two procedures exhibit very similar power curves as the signal amplitude increases. For weak signals (Gaussian: amplitudes $\le 3$; Logistic: amplitudes $\le 7$) the posthoc method leads to higher $\alpha$ levels of FDR, while for moderate to large signals (Gaussian: amplitudes $>3$; Logistic: amplitudes $> 7$), the post-hoc approach typically yields more stringent error control, leading to smaller $\alpha$ levels. We emphasize that the estimated average FDP is very similar for the two methods and remains below the nominal level $0.20$ throughout. Furthermore, both methods ensure that the average ratio $\mathrm{FDP}/\alpha$ remains below~1.
\begin{figure}[htb]
\centering
\subfloat[\textbf{Gaussian}\label{fig:gauss_calib}]{%
    \begin{minipage}[t]{0.49\textwidth}
        \centering
        \includegraphics[width=\textwidth]{./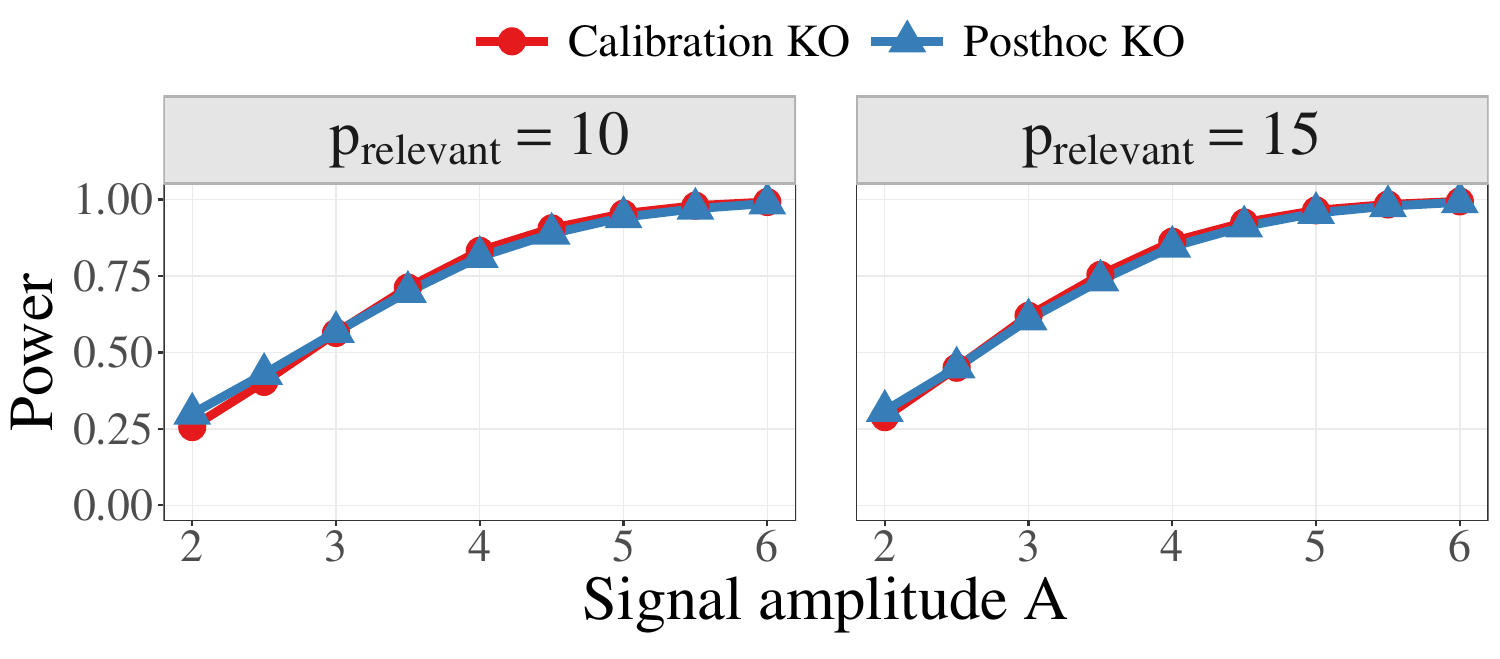}\\
        \vspace{1mm}
        \includegraphics[width=\textwidth]{./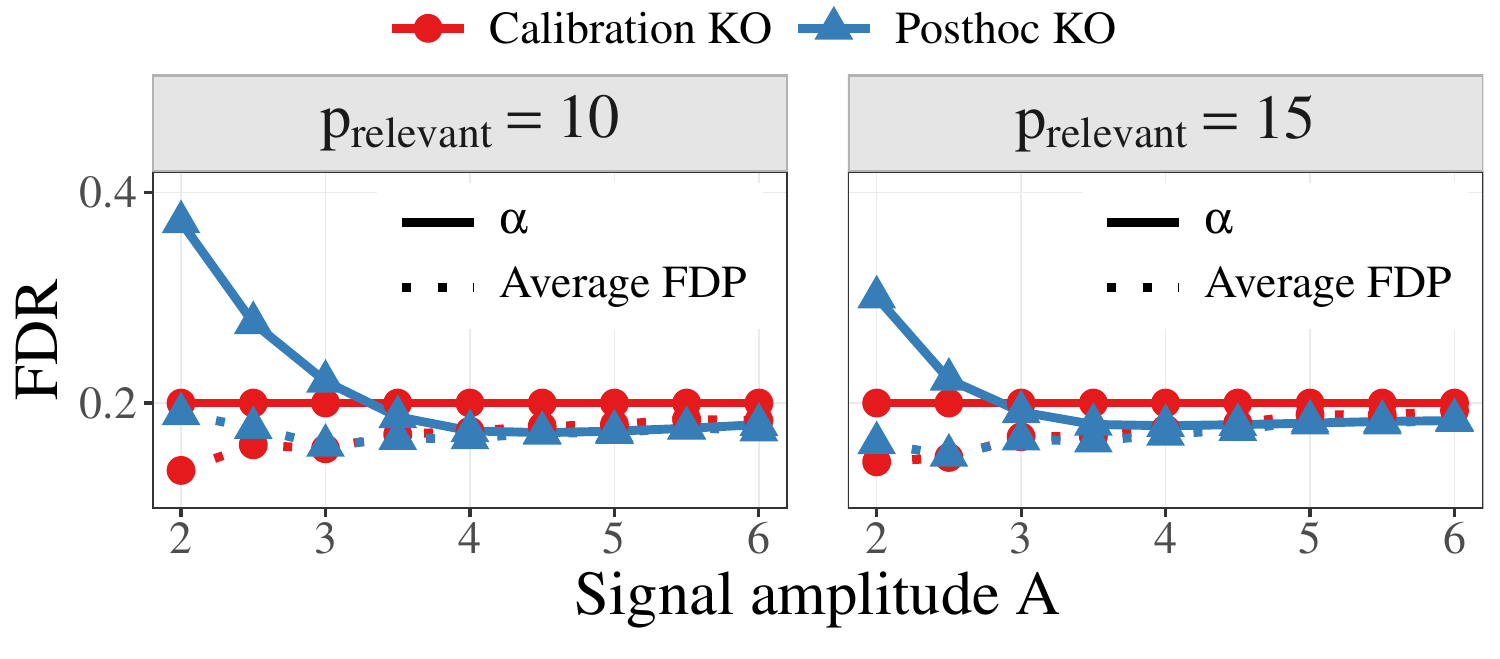}\\
        \vspace{1mm}
        \includegraphics[width=\textwidth]{./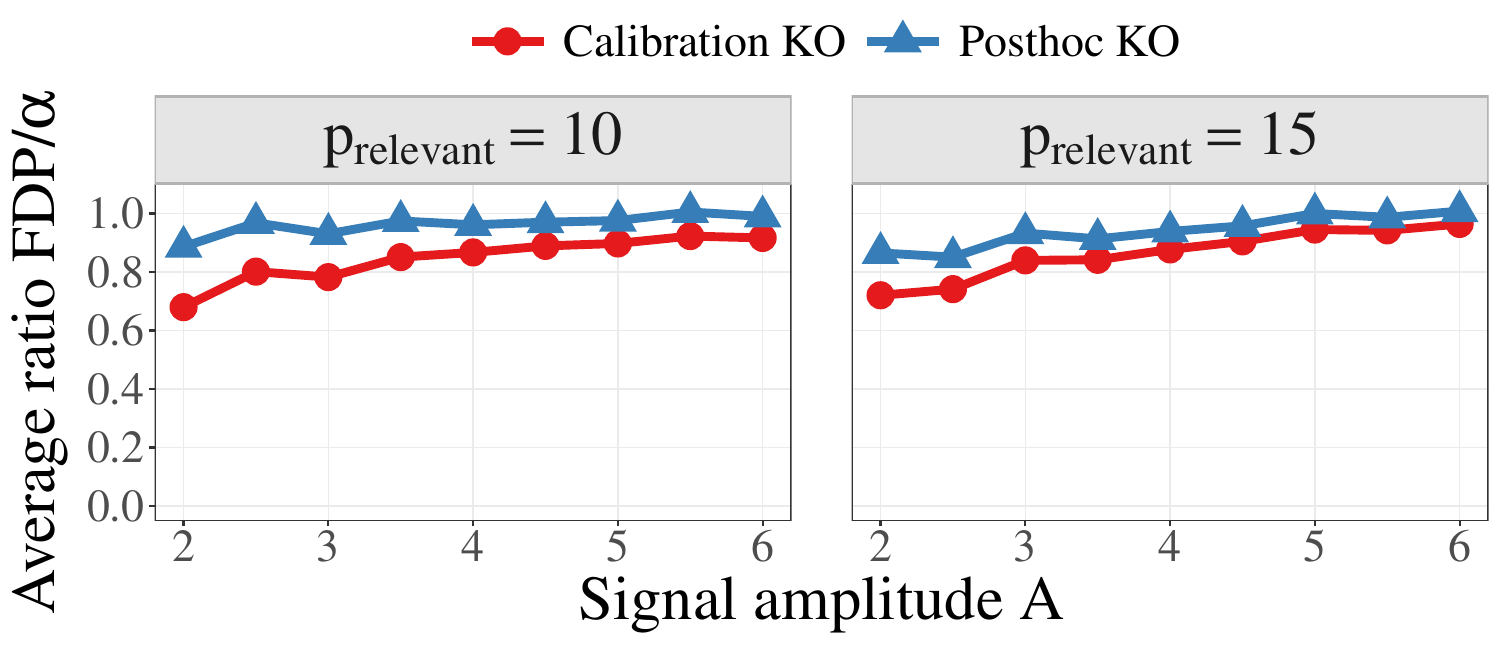}
    \end{minipage}%
}
\ \vrule \ 
\subfloat[\textbf{Logistic}\label{fig:logistic_calib}]{%
    \begin{minipage}[t]{0.49\textwidth}
        \centering
        \includegraphics[width=\textwidth]{./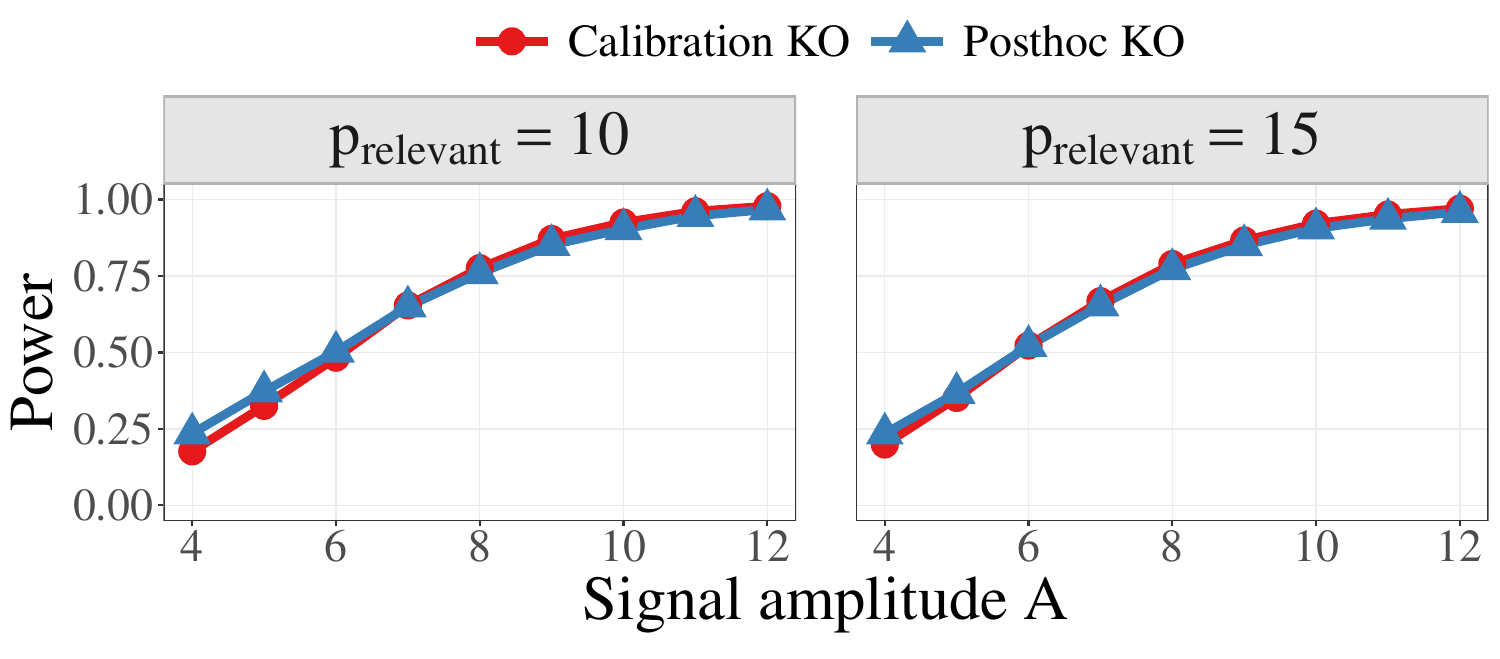}\\
        \vspace{1mm}
        \includegraphics[width=\textwidth]{./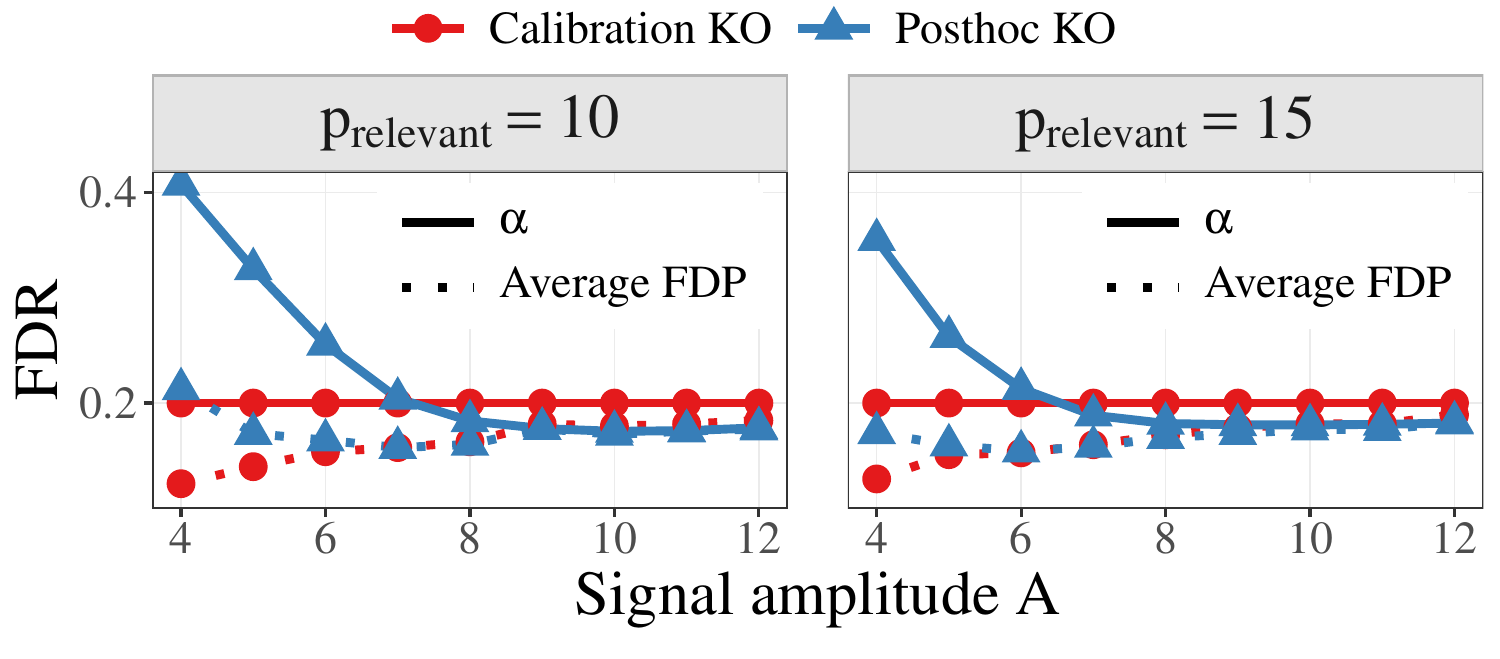}\\
        \vspace{1mm}
        \includegraphics[width=\textwidth]{./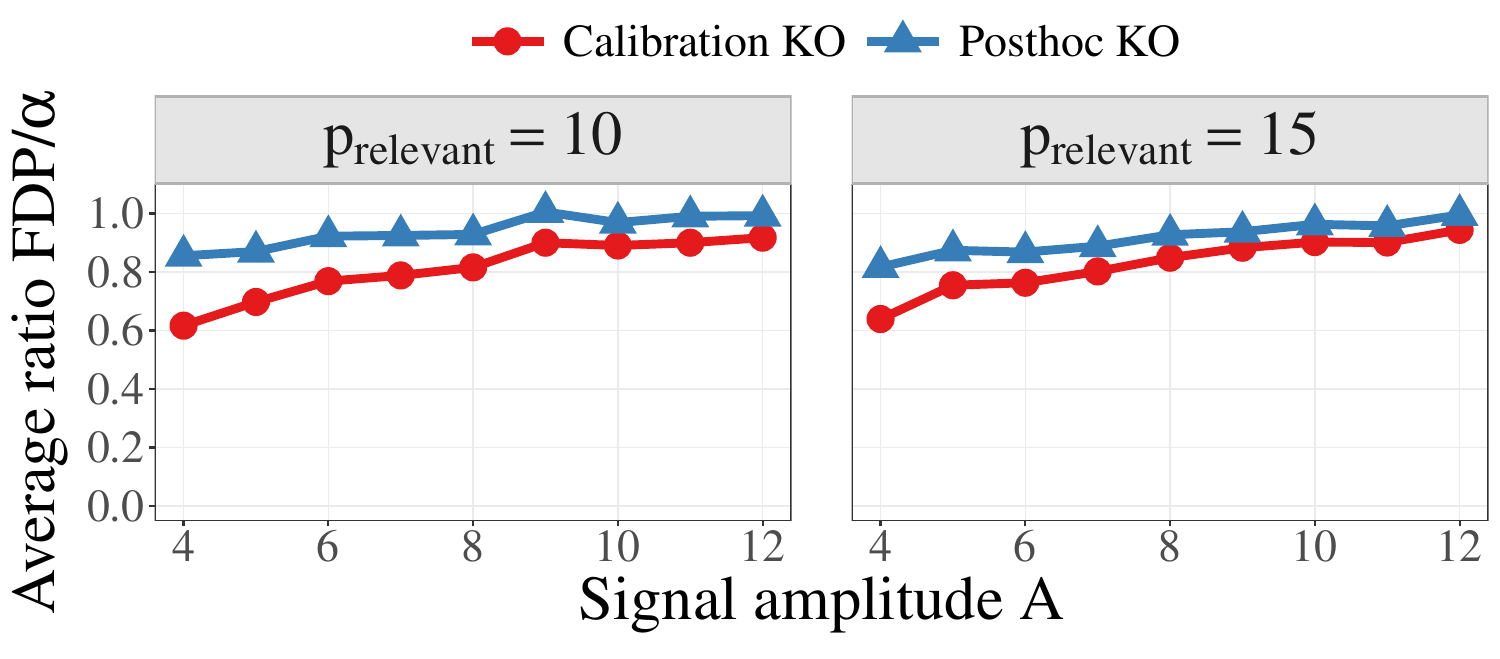}
    \end{minipage}%
}
\caption{Performance comparison between the proposed post-hoc knockoff procedure and the calibration knockoffs method of \citet{luo2025improving} under Gaussian and Logistic models with $p=50$. Results are reported across signal amplitudes for varying sparsity levels, given by the number of relevant variables $p_{\text{relevant}} = 10$ and $15$.
The first row reports the realized power, the second row reports the nominal level $\alpha$ (for the calibration method this is the user specified level $\alpha^{\kn}$, whereas for the post-hoc knockoff procedure it is the data-dependent level $\tilde{\alpha}^{\ph}$ returned by Algorithm~\ref{alg:posthoc-alpha}) together with the estimated average FDP, and the last row reports the average ratio $\textnormal{FDP/}{\alpha}$. All values are averaged over 2000 runs.}
\label{fig:posthoc_vs_calibration_models}
\end{figure}

\textbf{Key takeaways from the comparison with the calibrated knockoffs:}
Across all scenarios, post‑hoc knockoffs achieve power nearly identical to that of calibrated knockoffs. With regard to the Type I error rate $\alpha,$ our post-hoc method leads to higher level for weak signals, but it also leads to stricter error control for moderate to large signals. Apart from this performance, our method is offering several practical and methodological advantages. First, our procedure is straightforward to apply: in contrast to calibrated knockoffs, which implement conditional calibration via per‑variable fallback tests, conditional‑null sampling, and budget accounting, our method requires neither calibration nor auxiliary randomness and does not rely on additional distributional assumptions. Hence, it is compatible with, and can be straightforwardly used with, any knockoff statistics, including nonparametric scores such as those derived from random‑forest importance measures. Second, in contrast to the calibrated knockoff method, the computational overhead of post‑hoc knockoffs is negligible, adding virtually no cost beyond running the original knockoff filter. Most importantly, the method admits a natural derandomization, and in the next two subsections we examine how this derandomized versions compare with other state‑of‑the‑art derandomized procedures.

\subsubsection{Comparing against original derandomized knockoffs for controlling fdr \texorpdfstring{\citep{ren2024derandomised}}{Ren et al. (2024)}}
\label{sec:sims_derandomized_FDR}

In this section, we illustrate the performance of our proposed \textbf{derandomized post-hoc knockoff} procedure for controlling the \textbf{FDR}  (presented in Algorithm~\ref{alg:posthoc-alpha_derand}) and compare it with the \textbf{original derandomized knockoff} procedure from \citet{ren2024derandomised}. The simulation setup is the high-dimensional regime presented in  Section~\ref{seq:sim_generating_data}.  For the nominal levels, we adopt the configurations from \citet{ren2024derandomised}: we set the FDR level $\alpha = 0.20$, which corresponds to the $\alpha_{\ebh}$ level of the derandomized knockoffs and serves as the initial significance level in our post-hoc derandomized knockoffs procedure (Algorithm \ref{alg:posthoc-alpha_derand}), while in both methods we set  $\alpha^{\kn} = \frac{\alpha_{\ebh}}{2},$ a configuration suggested by \citet{ren2024derandomised}.

Figure~\ref{fig:posthoc_vs_derandomized_FDR}(a) presents the results under the Gaussian linear model, where the dimensionality is $p=800$. 
When $p_{\text{relevant}} = 80$, which is the setting considered in \citet{ren2024derandomised}, both methods exhibit comparable performance overall. However, our post-hoc procedure yields a modest power gain at lower signal amplitudes by increasing the nominal level in scenarios where the original derandomized method does not make any rejections. This becomes more pronounced in sparser settings, where $p_{\text{relevant}} = 20$. Notably, in moderately sparse scenarios, our method can also achieve a {lower} average Type~I error rate while maintaining higher power. For example, when $p_{\text{relevant}} = 20$ and the signal amplitude exceeds $5$, our procedure attains similar power with improved Type~I error control. The results are similar for the logistic model provided in Figure~\ref{fig:posthoc_vs_derandomized_FDR}(b).

\begin{figure}[htb]
\vspace{-1cm}
\centering
\subfloat[\textbf{Gaussian}\label{fig:gauss_derand}]{%
    \begin{minipage}[t]{0.49\textwidth}
        \centering
        \includegraphics[width=\textwidth]{./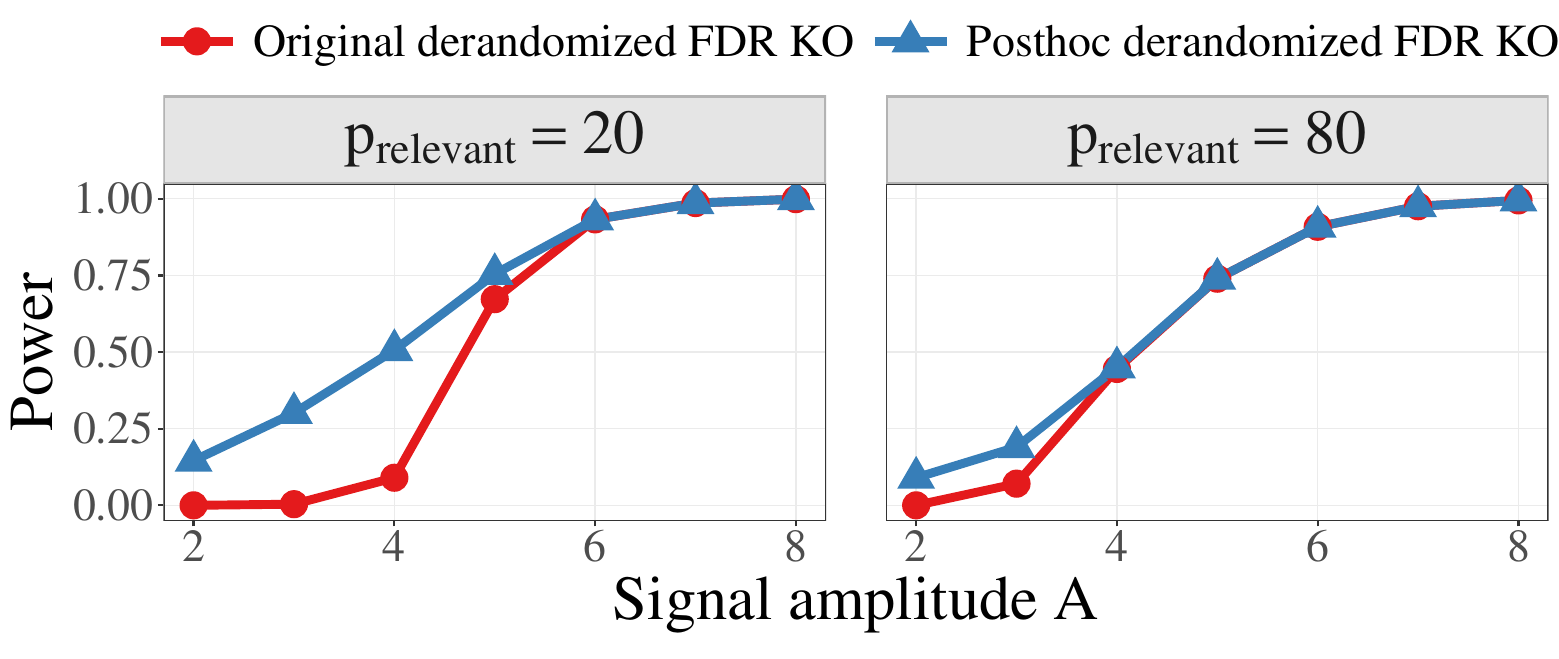}\\
        \vspace{1mm}
        \includegraphics[width=\textwidth]{./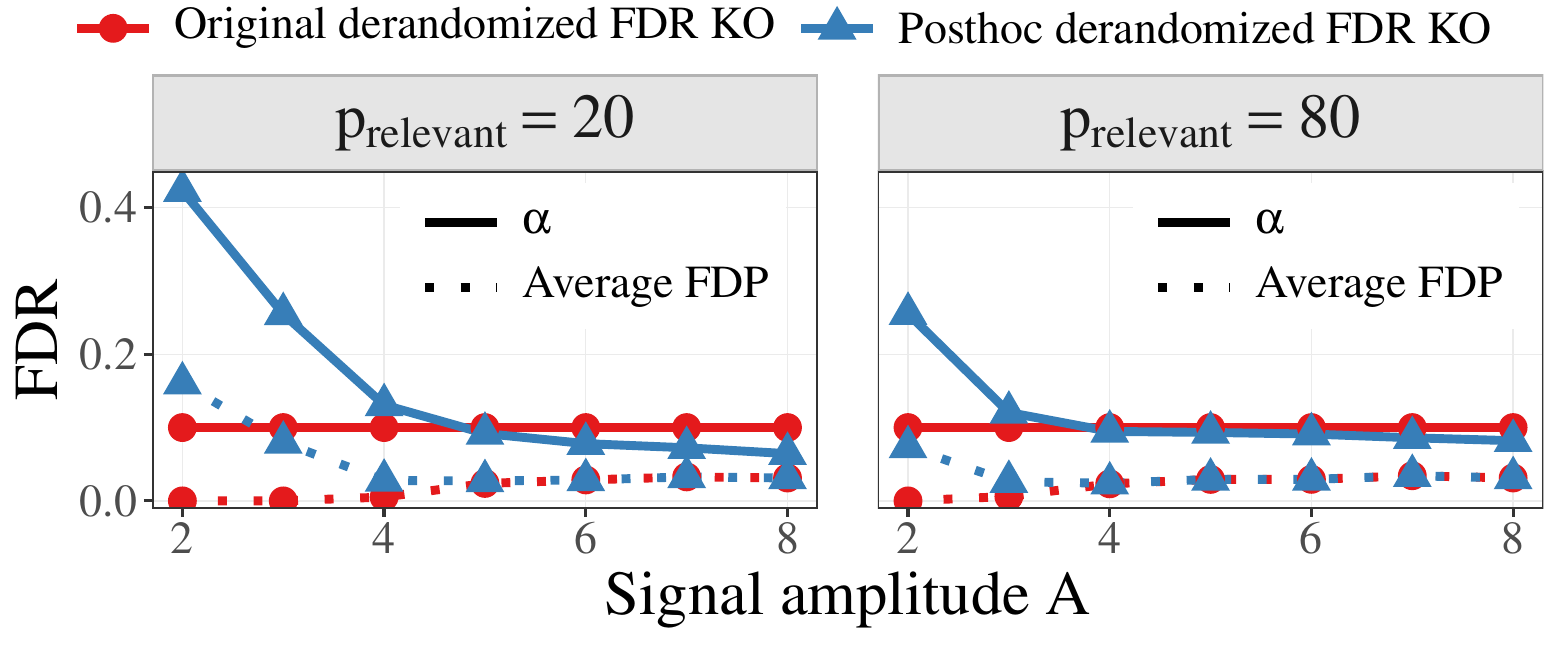}\\
        \vspace{1mm}
        \includegraphics[width=\textwidth]{./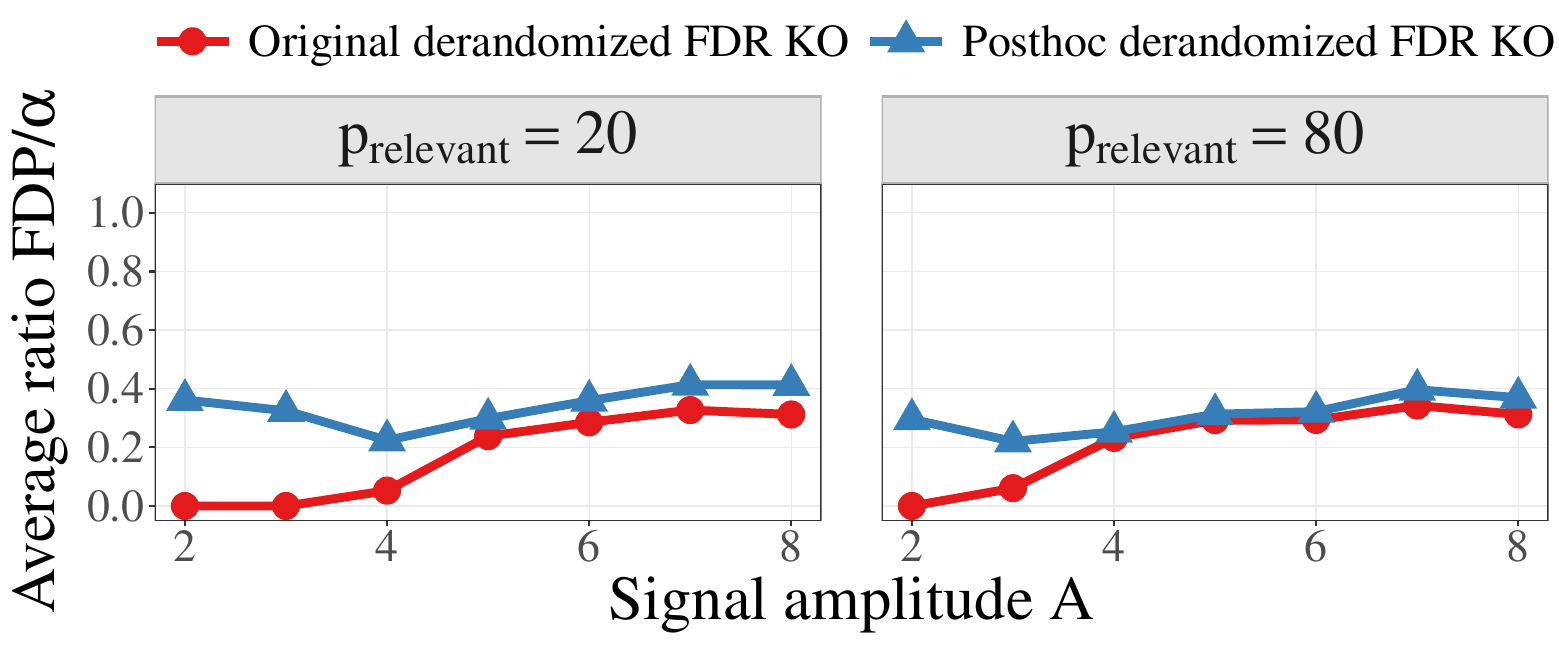}
    \end{minipage}%
}
\ \vrule \ 
\subfloat[\textbf{Logistic}\label{fig:logistic_derand}]{%
    \begin{minipage}[t]{0.49\textwidth}
        \centering
        \includegraphics[width=\textwidth]{./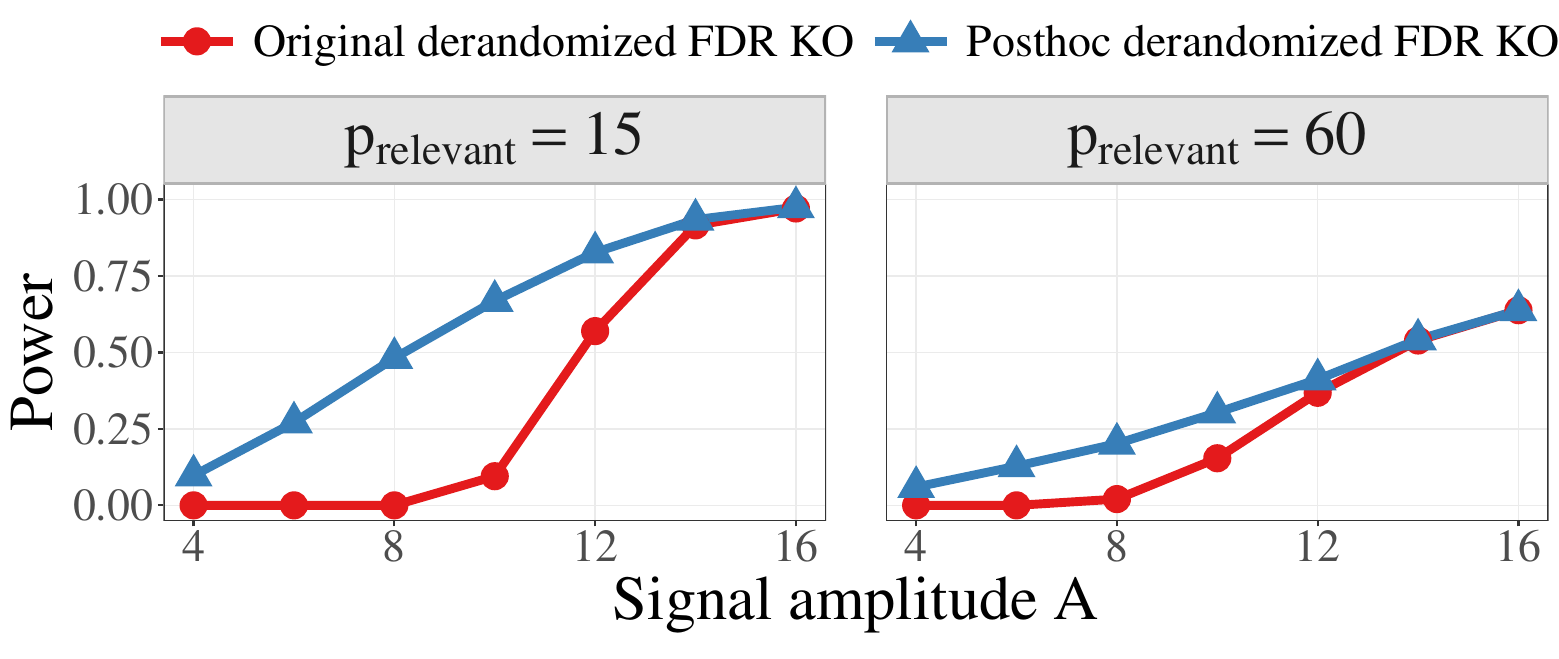}\\
        \vspace{1mm}
        \includegraphics[width=\textwidth]{./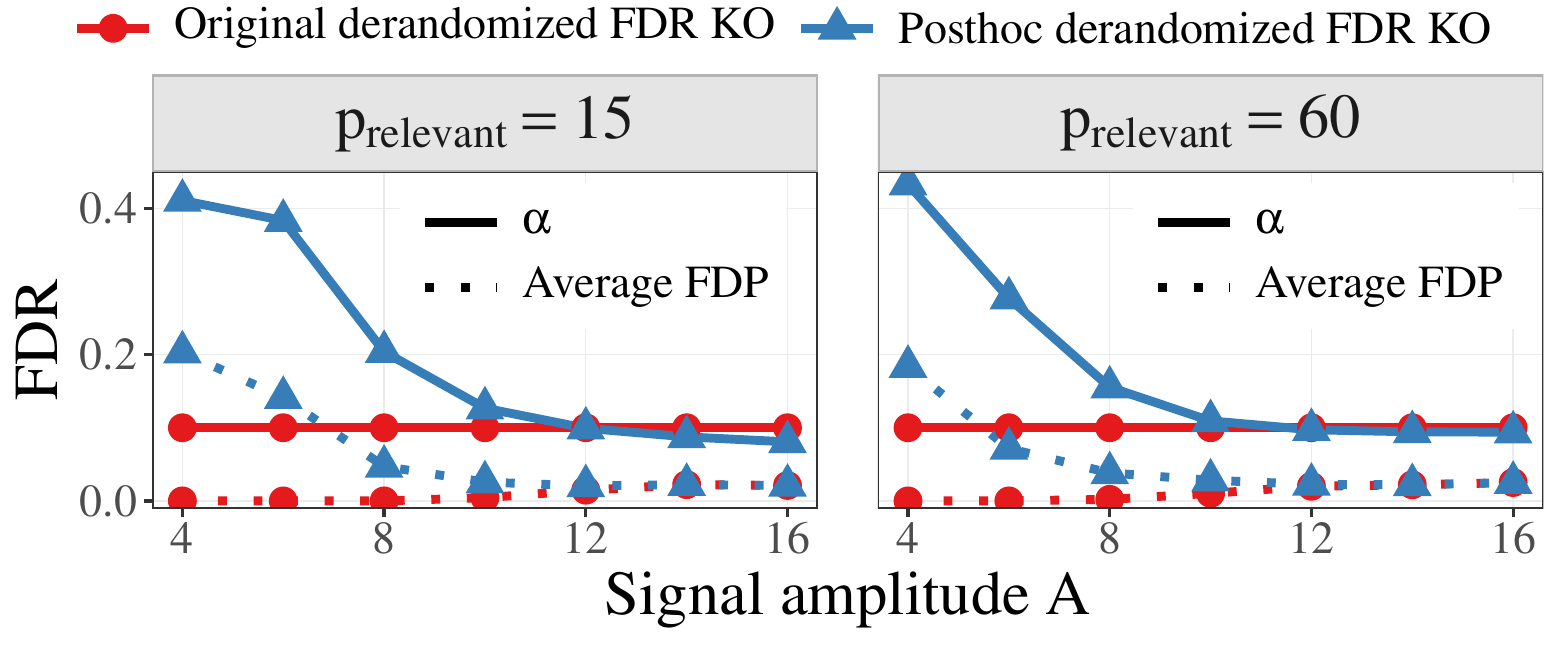}\\
        \vspace{1mm}
        \includegraphics[width=\textwidth]{./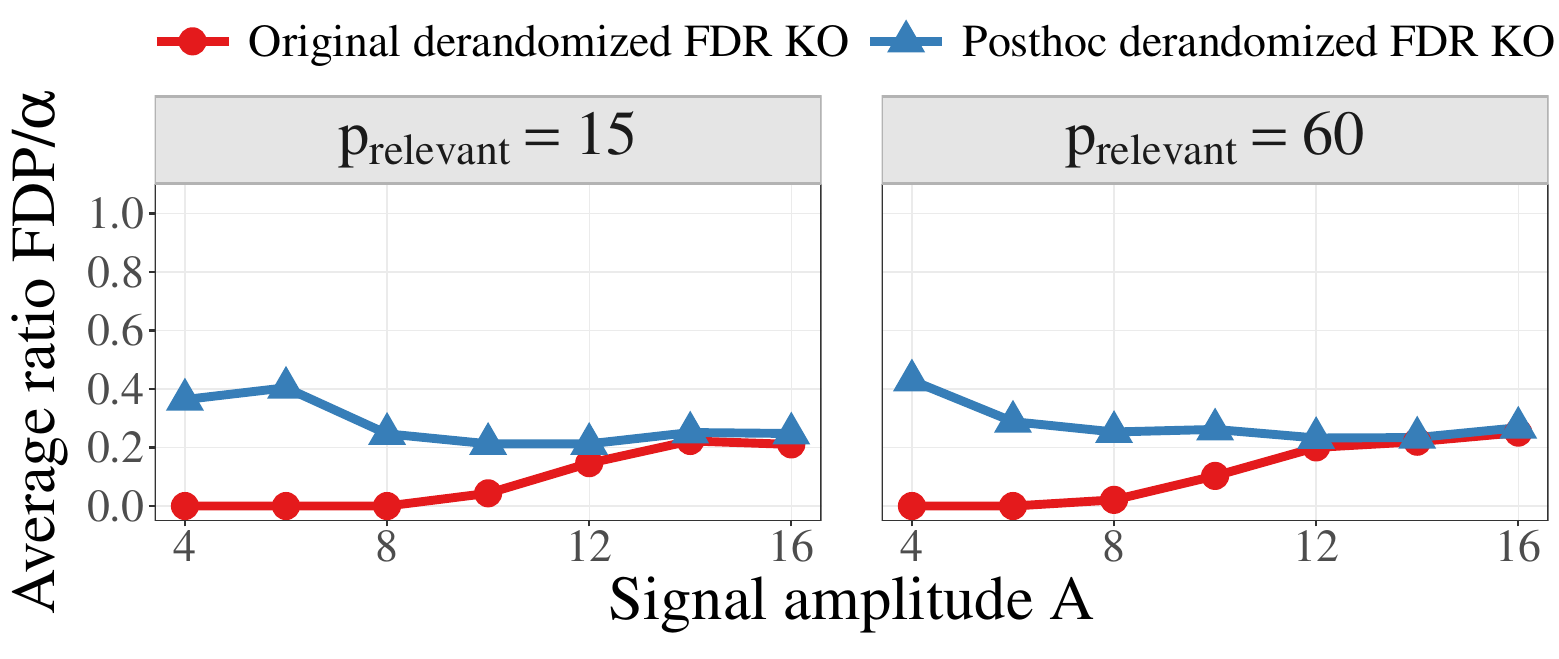}
    \end{minipage}%
}
\caption{Performance comparison between the proposed derandomized post-hoc knockoff procedure for controlling FDR against the original derandomized knockoffs method by \citet{ren2024derandomised} under Gaussian ($p=800$) and Logistic models ($p=600$). Results are reported across signal amplitudes for varying sparsity levels, given by the number of relevant variables $p_{\text{relevant}}$, including the original setting: $p_{\text{relevant}}=80$ for the gaussian, and $p_{\text{relevant}}=60$ for the logistic.
The first row reports the realized power, the second row reports the nominal level $\alpha$ (for the calibration method this is the user specified level $\alpha^{\ebh}$, whereas for the post-hoc knockoff procedure it is the data-dependent level $\tilde{\alpha}^{\dph}$ returned by Algorithm~\ref{alg:posthoc-alpha_derand}) together with the estimated average FDP, and the last row reports the average ratio $\textnormal{FDP/}{\alpha}$.  All values are averaged over 200 runs.}
\label{fig:posthoc_vs_derandomized_FDR}
\vspace{-1cm}
\end{figure}


\textbf{Key takeaways from the comparison with the derandomized knockoffs for controlling FDR:} Our post‑hoc derandomized knockoff procedure yields meaningful power improvements in weak‑signal settings by increasing the nominal level in scenarios where the original method does not make any discoveries. Furthermore, in scenarios where both methods achieve comparable power, the post‑hoc approach can yield tighter error control. As described before, our improvement can be seen as a \enquote{free-lunch}.

\subsubsection{Comparing against original derandomized knockoffs for controlling pfer \texorpdfstring{\citep{ren2023derandomizing}}{Ren et al. (2023)}}
\label{sec:sims_derandomized_PFER}
In this section, we illustrate the performance of our proposed \textbf{derandomized post-hoc knockoff} procedure for controlling the \textbf{PFER} (presented in Algorithm \ref{alg:posthoc-PFER}) and compare it with the \textbf{original derandomized knockoffs} procedure of \citet{ren2023derandomizing}. The simulation setup is the low-dimensional regime presented in  Section~\ref{seq:sim_generating_data}.  For the nominal levels, we set the expected number of false discoveries $\nu = 1$, and, following the configurations from \citet{ren2023derandomizing}, we set $\eta = 0.50$. In the original derandomized PFER knockoffs procedure, the parameter $\eta\in(0,1]$ is a prespecified threshold that determines how consistently a variable must be selected across the $k$ knockoff runs to be included in the final rejection set $R_\eta^{\RWC}$ in \eqref{eq:rej_PFER}; thus, smaller $\eta$ makes rejections easier and larger $\eta$ makes them more stringent. \citet{ren2023derandomizing} showed that $\RWC$-knockoff controls the PFER at level $\nu/\eta$, i.e., in our setting this is equal to $1/0.50 = 2$. The parameter $\eta$ is not treated as a fixed design parameter in our post-hoc derandomized knockoffs framework: instead, it is adjusted in a data-dependent way, as permitted by the guarantee in Theorem~\ref{theo:PFER}.

Figure~\ref{fig:posthoc_vs_derandomized_PFER} reports results under the Gaussian linear and logistic regression models with $p=50$, for two sparsity levels $p_{\text{relevant}}\in\{5,10\}$. Across all configurations, the two procedures exhibit very similar power curves as the signal amplitude increases.
\begin{figure}[htb]
\centering
\subfloat[\textbf{Gaussian}\label{fig:gauss_pfer}]{%
    \begin{minipage}[t]{0.49\textwidth}
        \centering
        \includegraphics[width=\textwidth]{./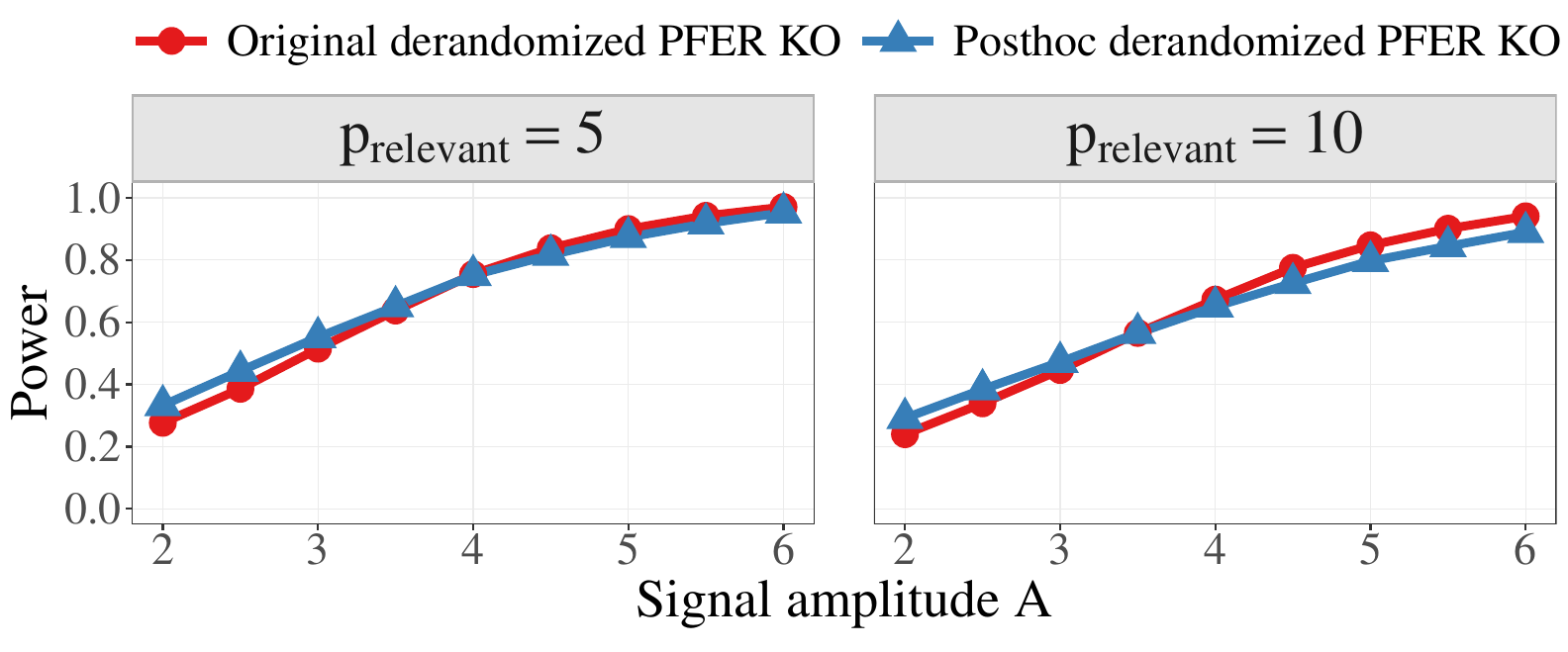}\\
        \vspace{1mm}
        \includegraphics[width=\textwidth]{./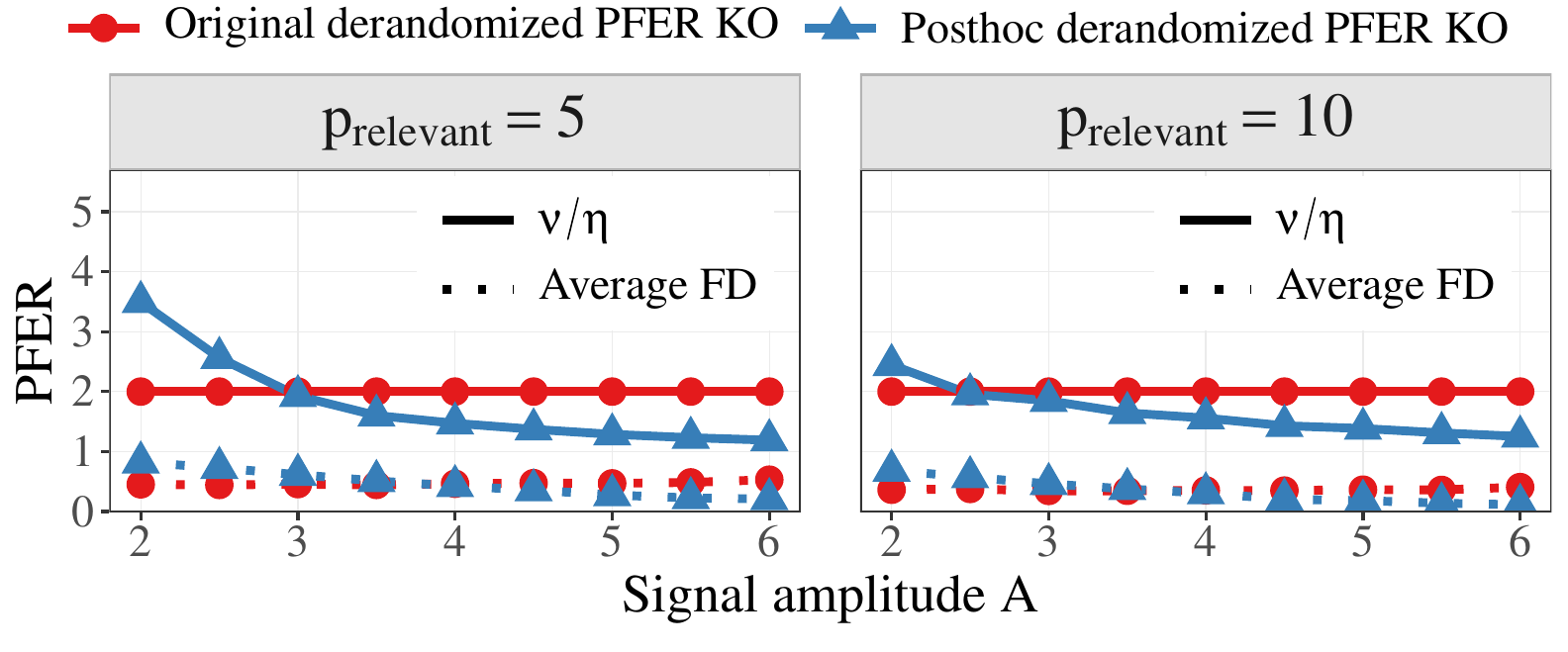}\\
        \vspace{1mm}
        \includegraphics[width=\textwidth]{./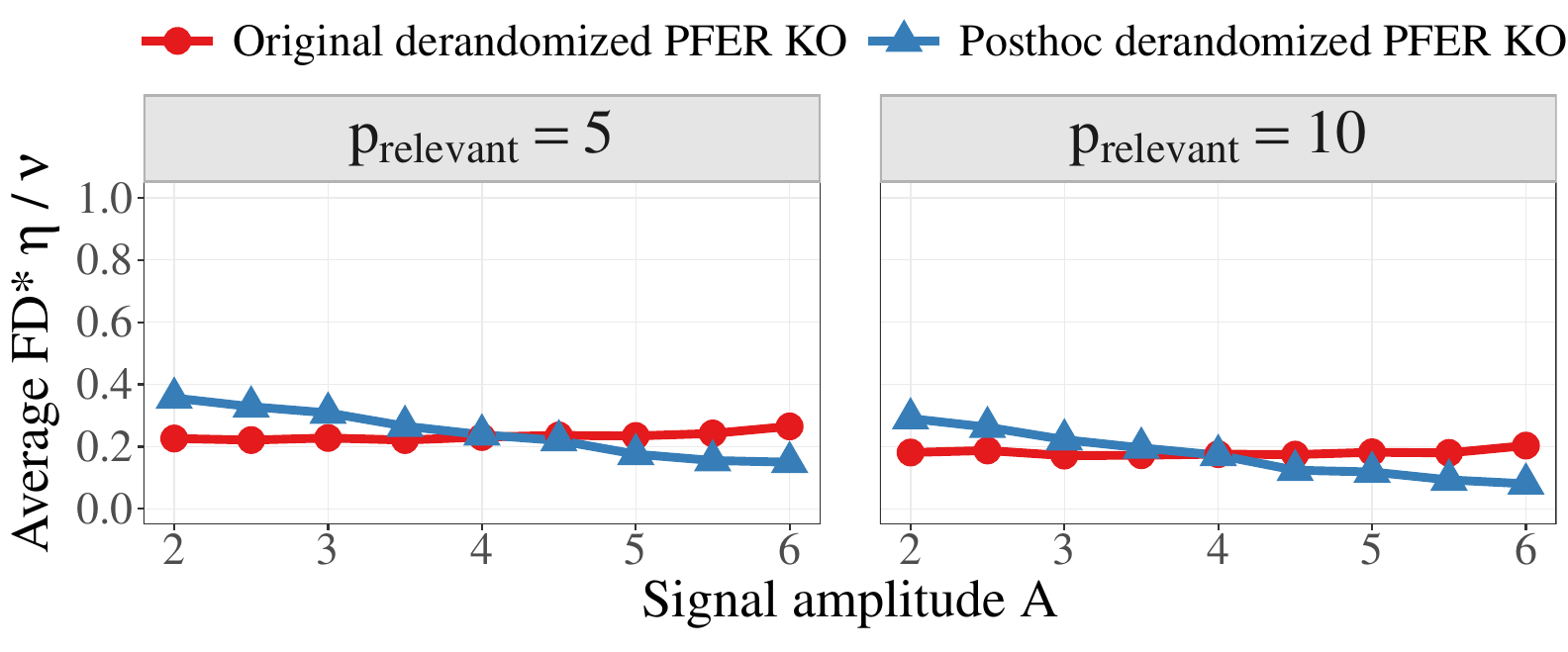}
    \end{minipage}%
}
\ \vrule \ 
\subfloat[\textbf{Logistic}\label{fig:logistic_pfer}]{%
    \begin{minipage}[t]{0.49\textwidth}
        \centering
        \includegraphics[width=\textwidth]{./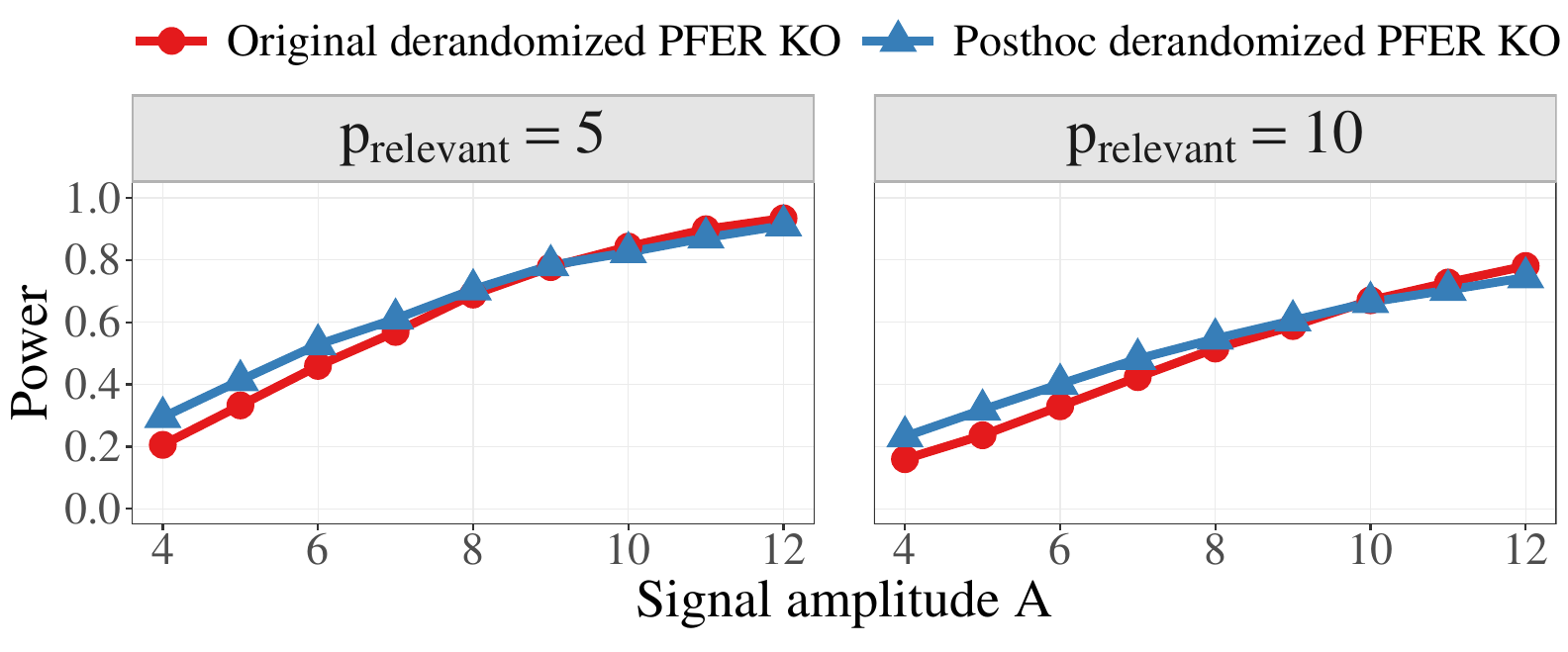}\\
        \vspace{1mm}
        \includegraphics[width=\textwidth]{./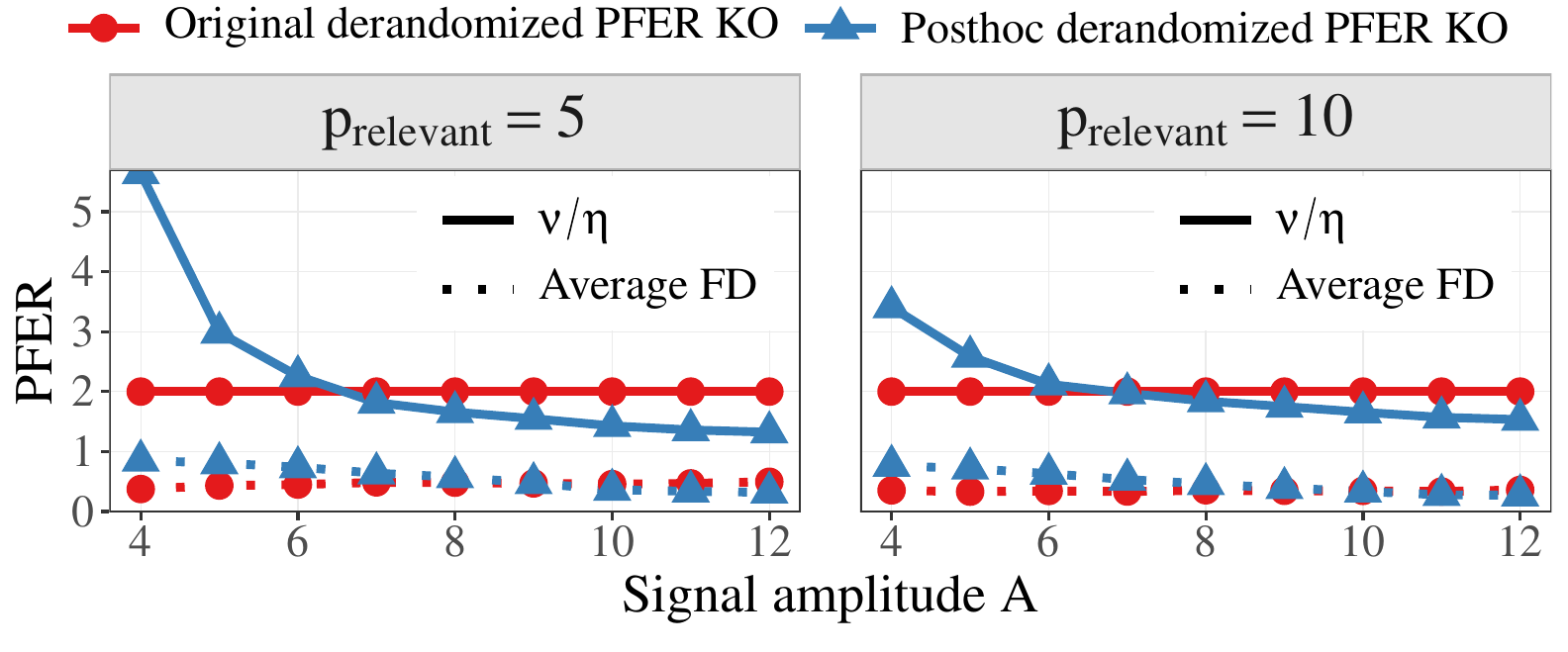}\\
        \vspace{1mm}
        \includegraphics[width=\textwidth]{./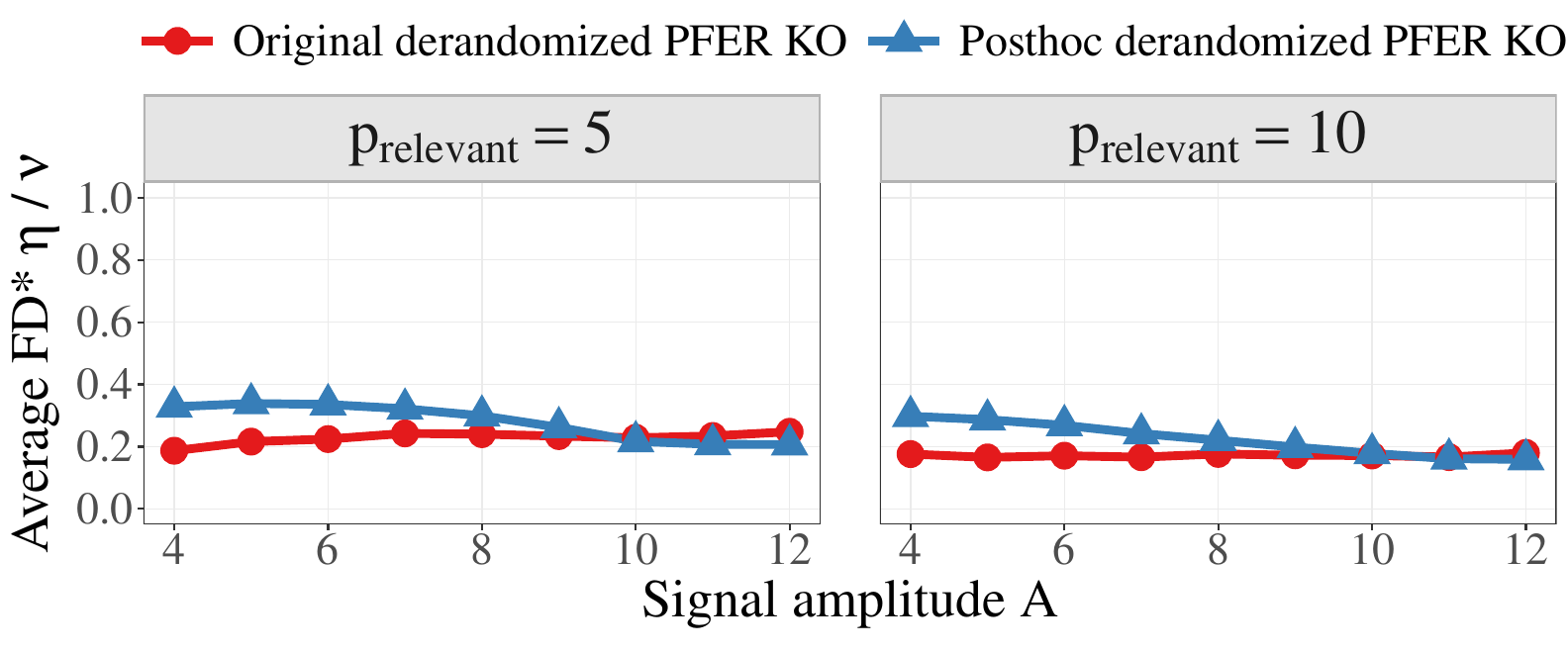}
    \end{minipage}%
}
\caption{Performance comparison between the proposed derandomized post-hoc knockoff procedure for controlling PFER against the original derandomized knockoffs method by \citet{ren2023derandomizing} under Gaussian and Logistic models with $p=50$. Results are reported across signal amplitudes for varying sparsity levels, given by the number of relevant variables $p_{\text{relevant}} = 5$ and $10.$
The first row reports the estimated power, the second row reports the data-dependent level $\alpha$ together with the estimated FD, and the last row reports the average ratio $\textnormal{FD}*\eta/\nu$. All values are averaged over 2000 runs.
}
\label{fig:posthoc_vs_derandomized_PFER}
\end{figure}
Interestingly, for moderate to large signals (Gaussian: amplitudes $>3$; Logistic: amplitudes $> 7$), the post-hoc approach typically yields more stringent error control, leading to smaller estimated values of the expected number of false discoveries (generally below $2$). Finally, across all comparisons the ratio $FD*\eta/\nu$ for both methods remains bellow $1$, confirming that PFER is appropriately controlled.

\textbf{Key takeaways from the comparison with the derandomized knockoffs for controlling PFER:}
The post‑hoc derandomized knockoffs achieve power similar to the original derandomized PFER method, while their adaptive thresholding enables more stringent error control whenever the signal strength is sufficiently high. It should be noted that our post-hoc implementation for PFER control does not always lead to larger power than the original method. However, if the $(R_{\eta}^{\RWC}, \eta)$ combination of the original method seems more desirable, we could also switch to that combination post-hoc.

\section{Applications in analyzing clinical trial data}
\label{sec:clinical_trial} 
In a recent work, \cite{zimmermann2024} illustrate how knockoff-based methods can be applied to clinical trial data to identify prognostic and predictive biomarkers while controlling type-I errors in exploratory variable selection. In this section we will present applications of our post-hoc knockoffs methods in analysing clinical trial data: Subsection \ref{sec:clinical_trial_prognostic} will focus on the identification of prognostic variables using a synthetic dataset designed to mimic a real randomized controlled trial, whereas Subsection \ref{sec:clinical_trial_predictive} will analyse predictive biomarkers using actual clinical trial data from five phase III studies.

\subsection{Derive prognostic variables using post-hoc knockoffs}
\label{sec:clinical_trial_prognostic}
To demonstrate the practical utility of our approach and the advantages of post-hoc knockoffs for analysing clinical trial data, we consider a synthetic dataset designed to mimic a real randomized controlled trial, introduced by \citet{Sun2024}. To generate the data we used \texttt{benchtm} R package \citep{Sun2024}, using Scenario 1 with a continuous outcome and a sample size of $n = 1000$. Each patient is described by $p = 30$ baseline covariates, a dimensionality that reflects realistic clinical settings where tens of demographic, clinical, laboratory, and lifestyle variables are routinely collected and jointly analysed. The outcome for each patient was generated as a function of covariates $\mathbf{X} = (X_1, X_2, \dots, X_{30})$ and the binary treatment assignment $T \in \{0,1\}$ according to
\begin{equation}
f(\mathbf{X}, T) = s \times \big(0.5\,\mathbb{I}(X_1 = \text{Y}) + X_{11}\big) 
+ T\big(\beta_0 + \beta_1 \Phi(20(X_{11} - 0.5))\big), \label{eq:benchtm}
\end{equation}
where $\mathbb{I}(\cdot)$ denotes the indicator function, $s = 2.32$ controls the baseline covariate effect, and $\beta_0 = 0.053$ and $\beta_1 = 0.383$ determine the average treatment effect and the additional effect in the specified subpopulation, respectively. These parameters were tuned to mimic real clinical trial data, and a full description of the data-generating mechanism and simulation design is provided in \citet{Sun2024}. %

We simulate 10 independent RCT datasets following the procedure described above, denoted by R1, ..., R10. To generate the knockoffs we used the sequential knockoff procedure \citep{Kormaksson2021}, implemented in the knockofftools package \citep{zimmermann2024}. Then, we use the Lasso Coefficient Difference (LCD) knockoff statistic, which is computed as the difference between the absolute LASSO coefficient of each original variable and that of its knockoff copy, as the knockoff filter.  Finally, we examine which variables are selected by the two methods (original knockoffs vs post-hoc knockoffs) under a nominal FDR level of 0.20. Figure \ref{fig:heatmaps_original_posthoc} displays the corresponding heatmaps of selected variables. We remind the reader that, according to the data-generating process specified in equation (\ref{eq:benchtm}), the set of truly relevant variables in this scenario is $I_1 = \{T, X_{1}, X_{11}\}.$

\begin{figure}[htb]
  \centering
  \subfloat[Original knockoffs heatmap\label{fig:heatmap_original}]{%
    \includegraphics[width=0.49\textwidth]{./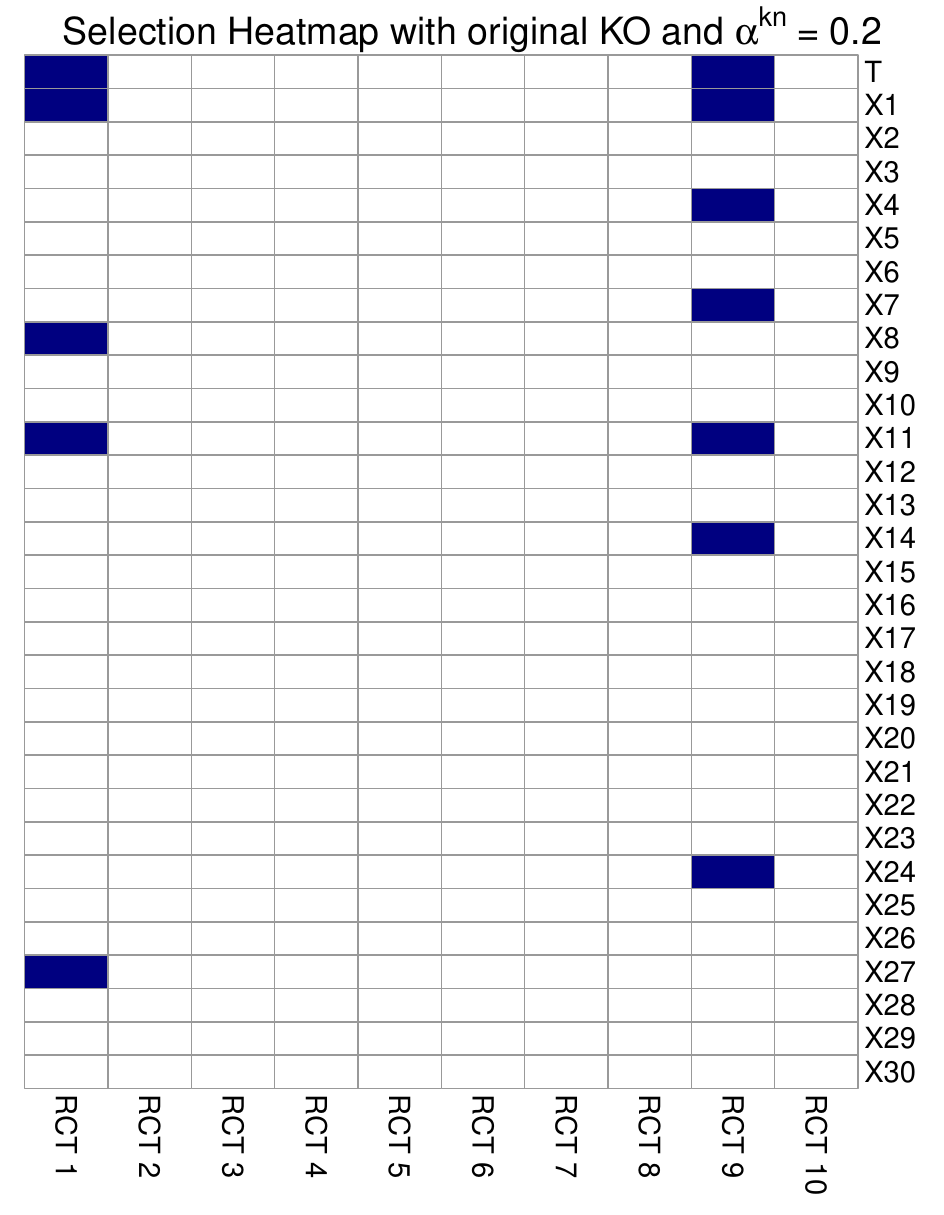}%
  }
  \hfill
  \subfloat[ph-knockoffs heatmap\label{fig:heatmap_ph}]{%
    \includegraphics[width=0.49\textwidth]{./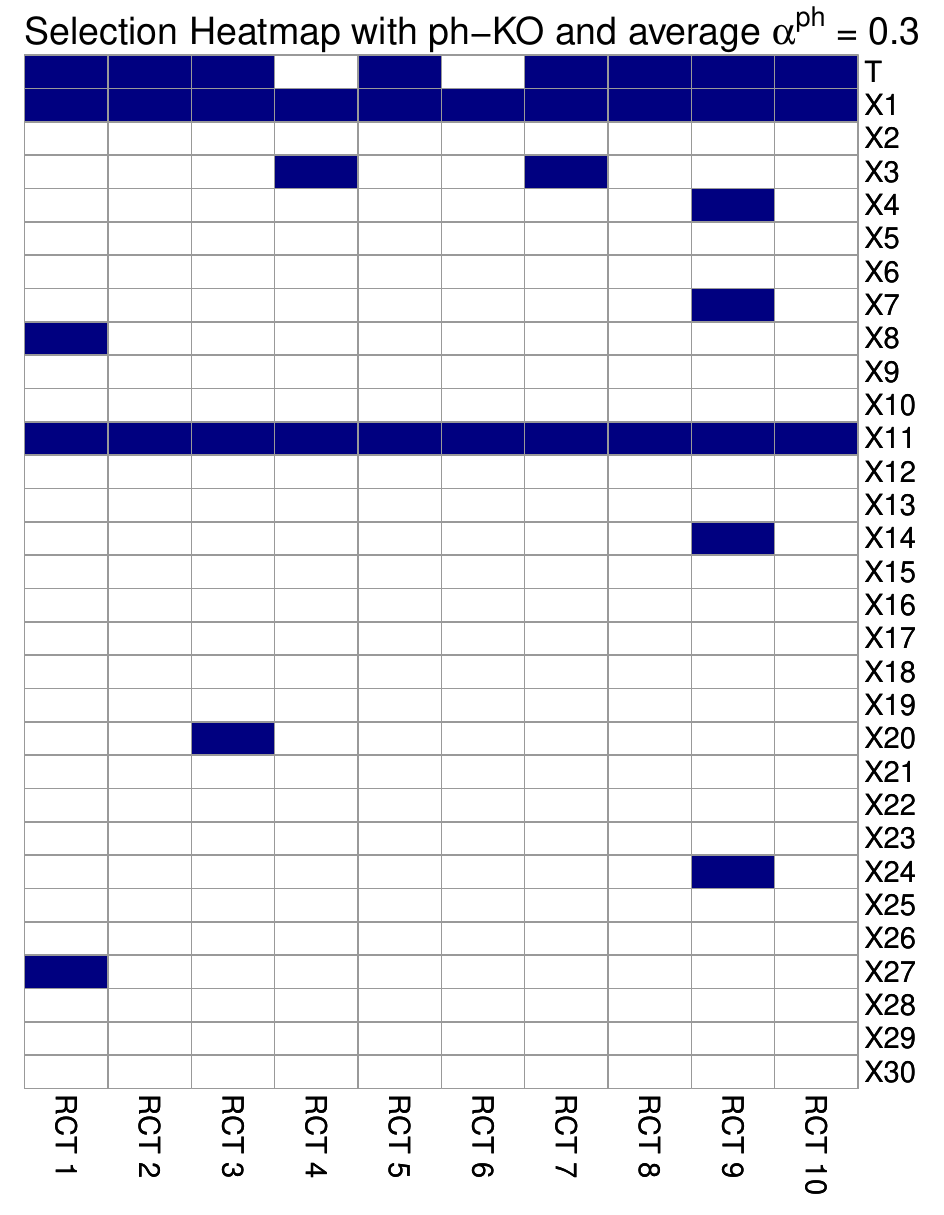}%
  }
  \caption{Variable selection heatmaps across 10 simulated RCT datasets using the original knockoffs method and the post-hoc knockoffs method. Rows correspond to variables and columns to simulation runs. Blue indicates that a variable was selected in a given run, while white indicates that it was not selected. In this setting, the true relevant variables are $X_1$, $X_{11}$ and $T$.}
  \label{fig:heatmaps_original_posthoc}
\end{figure}

As shown in the left panel, the original knockoffs method selects variables only in two runs out of 10, in the RCT 1 and RCT 9. If we use this method in practice, then most of the time the method makes no discoveries, leading to frequent failures in detecting true signals. In contrast, the post‑hoc method (right panel) selects variables $X_{1}$ and $X_{11}$, both true predictors, in all ten runs, and additionally selects the third relevant variable $T$ in 8 out of 10 runs. This indicates that, in practical applications, the method would almost always recover the truly relevant variables. While the average $\alpha^{\ph}$ across the ten runs is 0.3, it is worth emphasizing again that these gains come at no cost, since our method is more informative. In the two runs (RCT 1 and 9) where the original method makes discoveries, our post-hoc procedure identifies the same variable, with nominal levels of $0.20$ and $0.14,$ which is smaller than the nominal level used by the original KO. In the remaining eight runs, where the original KO makes no discoveries, our method provides more informative results: we almost always recover the true signals, and the average nominal $\alpha^{\ph}$ level across these runs is 0.33.
 

\subsection{Derive effect modifiers in real trial}
\label{sec:clinical_trial_predictive}
Identifying effect modifiers, also known as predictive biomarkers, is a central task in the analysis of clinical trial data. Effect modifiers are baseline patient characteristics that influence the magnitude or direction of a treatment effect. Detecting such variables is critical for a range of scientific and clinical activities, including defining subgroups with enhanced treatment benefit, informing treatment-personalization strategies, and generating hypotheses for further biological or mechanistic investigation. In this section study, we focus on the task of identifying effect modifiers from clinical trial data using our knockoffs and controlling FDR. 

We will analyse five Phase~III clinical trials of secukinumab (Cosentyx) for psoriatic arthritis (PsA): FUTURE~1–5 \citep{future1,future2,future3,future4,future5}. The combined dataset contains $n=1937$ patients across the five trials. Covariates with more than 20\% missingness were removed. Remaining variables were imputed using multiple imputation, and completed datasets were aggregated using medians (numeric variables) or modes (categorical variables). This produced a final set of 70 mixed-type baseline variables including demographics, disease history, laboratory biomarkers, quality-of-life assessments, and the proteomic marker BD-2.\citep{Cardner2023} Further details on how we generated the analysis dataset are provided in \citet{sechidis2025drlearner}. The primary endpoint is the binary ACR50 response at week 16 \citep{felson1993american} and treatment heterogeneity is evaluated on the risk-difference scale, $\Pr(Y=1 \mid T=1) - \Pr(Y=1 \mid T=0).$

Building on our previous work by \citet{sechidis2025drlearner}, we first derive pseudo-observations of the individual treatment effect using the doubly robust (DR) learner framework. These pseudo-observations summarize treatment heterogeneity at the patient level in a model-agnostic manner. In that earlier analysis, the authors reported rankings of variable‐importance scores that were obtained using conditional inference trees and permutation importance metrics \citep[Figure 17]{sechidis2025drlearner}. The top ten variables in that ranking were C-reactive protein, age, proteomic marker BD-2, baseline fatigue score, apolipoprotein~A1, alkaline phosphatase, total protein, sex, lymphocytes/leukocytes percentage, and ethnicity. Several of these variables, most notably C-reactive protein, Age, Fatigue Score, and proteomic marker BD-2, correspond to known effect modifiers previously identified in independent analyses \citep{sechidis2021usingknockoffs,bornkamp2023predicting,Cardner2023}.

While such rankings provide valuable descriptive insight, they do not by themselves deliver statistically controlled variable selection. In exploratory subgroup discovery and predictive biomarker research, methods such as knockoffs are essential for producing reproducible findings under explicit FDR control. To generate the knockoffs and compute the corresponding knockoff statistics, we followed the same procedure as in the previous section, using sequential knockoffs and the LCD statistic.

Firstly, we explore the selections obtained from the original derandomized knockoff filter with those from our proposed post-hoc knockoff procedure. For the nominal levels, we adopt the configurations from \citet{ren2024derandomised}: we set the FDR level $\alpha = 0.20$, which corresponds to the $\alpha_{\ebh}$ level of the derandomized knockoffs and serves as the initial significance level in our post-hoc derandomized knockoffs procedure, while in both methods we set  $\alpha^{\kn} = \frac{\alpha_{\ebh}}{2}.$ 
\begin{description}
   \item[Original derandomized knockoffs.] 
   At the prespecified nominal FDR level of $0.20,$ corresponding to the $\alpha_{\text{EBH}}$ threshold of the derandomized knockoff filter, the original derandomized knockoff procedure did not select any variable, indicating that no covariate exhibited sufficiently strong evidence to be flagged as a potential effect modifier.
    \item[Post-hoc derandomized knockoffs.]
    At the post-hoc adjusted significance level $\alpha \approx 0.27$, the derandomized knockoff procedure selected four variables as effect modifiers. Thus, with a small relaxation of the target FDR (from $20\%$ to $27\%$), the post-hoc adjustment yielded a non-empty and interpretable set of promising effect modifiers.
\end{description}

Having established the contrasting selection behavior of the traditional and post-hoc derandomized knockoff procedures, we next examine how these selections relate to the underlying signal structure captured by the post-hoc derandomized knockoff procedure. Figure~\ref{fig:clinical_trial_importance_scores} presents the variable-importance ranking, computed as the average (compound) e-values across the derandomized runs. This ranking offers a transparent view of the covariates exhibiting the strongest evidence of effect modification. The four variables selected by our post-hoc derandomized knockoff method at the relaxed FDR level of $0.27$ are highlighted in red, and they are: C-reactive protein, baseline fatigue score, age and a proteomic marker BD-2. These selections closely mirror findings reported in prior analyses of these datasets, where the same variables were highlighted as leading drivers of treatment heterogeneity in PsA \citep{sechidis2021usingknockoffs,bornkamp2023predicting,Cardner2023}.  
\begin{figure}[htpb]
  \begin{center}
    \includegraphics[width=0.75\textwidth]{./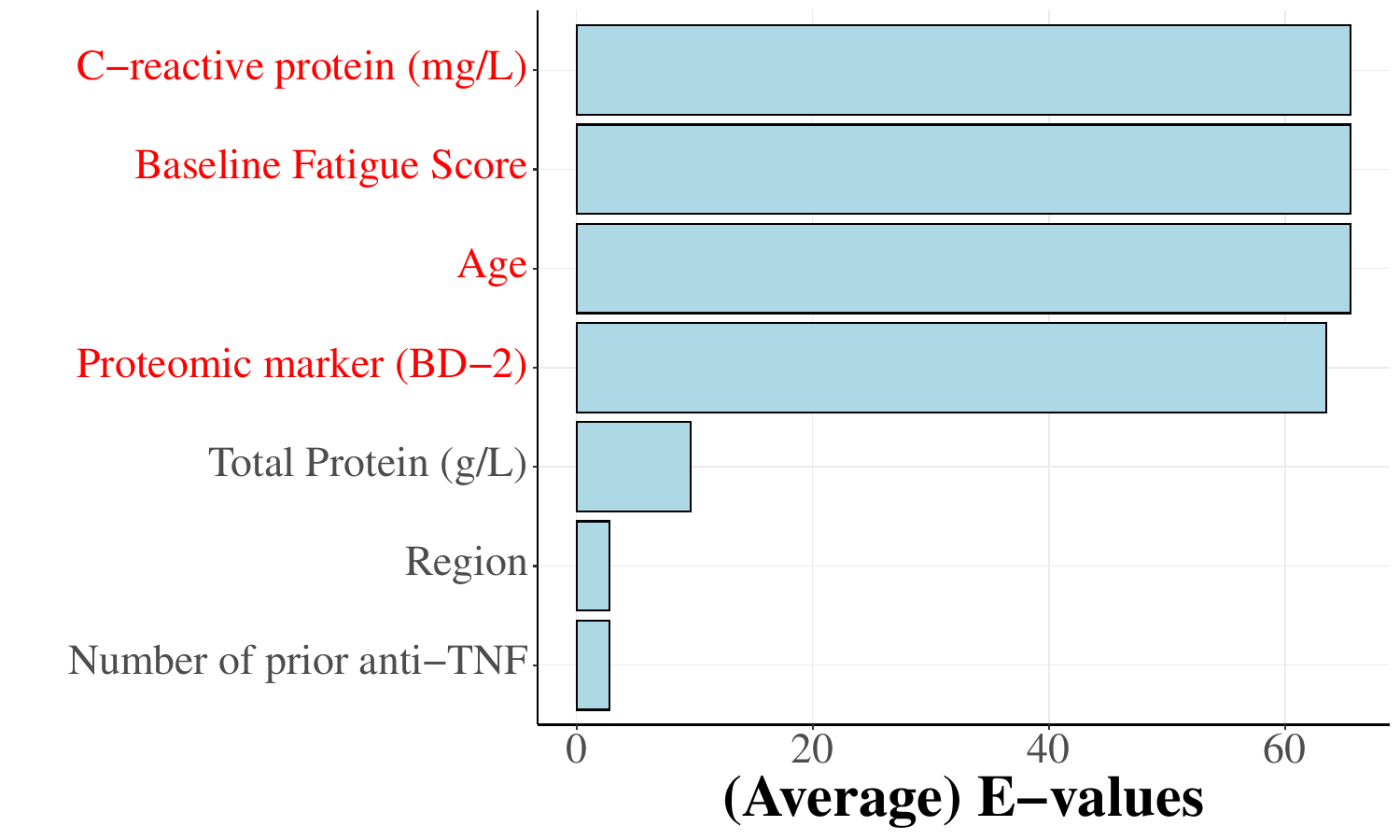}
  \end{center}
  \vspace{-0.5cm}
  \caption{\textbf{Variable importance ranking returned from the post-hoc derandomized knockoffs.} This figure shows the top variables ranked by their importance measured as the  average of (compound) e-values across the derandomized runs. Higher-ranked variables indicate stronger modification effects. We have highlighted with red the variables that got selected by our post-hoc derandomized knockoff procedure, and that control the FDR to the level of $0.27.$}
  \label{fig:clinical_trial_importance_scores}
\end{figure}

As we saw in Section~\ref{sec:rej_first}, the e-value framework allows us to move beyond a fixed rejection set and instead evaluate the specific level of evidence for any subset of interest. To illustrate this flexibility, consider a scenario where a practitioner is interested specifically in the evidence supporting only Age and the proteomic marker. Utilizing the property derived in Proposition~\ref{proposition:rej_first}, we can determine that the smallest nominal level $\alpha$ required to support this specific rejection set is approximately $0.52\%$. While our method enables this capability of determining the error level after choosing the rejection set, further research is needed to fully understand the practical consequences and the specific clinical contexts where such post-hoc decisions are most appropriate.

To sum it up, our post-hoc derandomized knockoff approach recovers clinically plausible modifiers while maintaining explicit FDR control, demonstrating that a modest relaxation of the nominal level can substantially improve power without inflating spurious discoveries. Together, these results illustrate that post-hoc adjustment offers a principled and practical enhancement to derandomized knockoffs, enabling robust and interpretable identification of effect modifiers in high-dimensional clinical trial settings.

\section{Discussion}

In this paper, we have shown how the significance level of the knockoff method can be chosen data-dependently while maintaining a reasonable FDR control. Remarkably, this flexibility comes without any price: (1) if the original method does not make any rejections, we can increase the significance level to potentially make rejections and (2) if the original method makes rejections, we can make the same rejections while often reducing (and never enlarging) the significance level, thus increasing the precision. In this way, the results obtained by our post-hoc knockoff procedure are always more informative than the results offered by the original knockoff procedure. This distinguishes our method from  existing post-hoc $\alpha$ approaches for other testing problems \citep{grunwald2024beyond, koning2023post, xu2022dynamic, gauthier2025values, dhillon2025scores}, where the gained flexibility usually reduces the power compared to the standard methods.

Our post-hoc approach solves the well-known issue of the original knockoff method, which requires at least $1/\alpha^{\kn}$ rejections to make any rejection. This makes our framework particularly suitable for settings with small numbers of variables and/or sparse signals. Furthermore, we integrated the post-hoc idea into derandomized knockoff procedures. While the latter require the choice of some arbitrary parameters ($\alpha_{\ebh}$ for FDR and $\eta$ for PFER control), which influences the power and the guarantee of the procedure, we showed that the parameters can be chosen in a fully data-dependent manner while maintaining reasonable control. \citet{ren2023derandomizing} even mentioned that a data-dependent choice of the parameter $\eta$ would be desirable but is not allowed and thus proposed data-splitting to determine $\eta$ in a data-driven way. Our approach to choose $\eta$ based on the entire data is much more powerful and practical.

In Section \ref{sec:uniform_improvement}, we introduced a uniform improvement of the derandomized (ph‑)knockoff methods based on the e‑Closure Principle. However, due to its substantial computational demands, the approach remains challenging to apply in practice. Identifying computationally efficient shortcuts for this method is an interesting direction for future work.

\subsection*{Acknowledgments}
We are thankful to Zhimei Ren for bringing the conditional calibration paper by \citet{luo2025improving} to our attention, and we also thank Alain Marti for his valuable feedback on the manuscript.

\vskip 0.2in
\bibliography{main}

\end{document}